\newtheorem{theorem}{Theorem}
\newtheorem{lemma}{Lemma}
\newtheorem{corollary}{Corollary}
\newtheorem{proposition}{Proposition}
\newtheorem{remark}{Remark}
\newtheorem{definition}{Definition}
\newtheorem{assumption}{Assumption}
\newtheorem{fact}{Fact}
\numberwithin{theorem}{section}
\numberwithin{lemma}{section}
\numberwithin{corollary}{section}
\numberwithin{proposition}{section}
\numberwithin{remark}{section}
\numberwithin{definition}{section}
\newcommand{\bs}[1]{\boldsymbol{#1}}
\newcommand{\im}{\bs{\rm i}}
\newcommand{\supp}{{\rm supp}}
\newcommand{\dist}{{\rm dist}}
\title{Comfortability of quantum walks on embedded graphs on surfaces}
\author{Yusuke Higuchi$^1$, Etsuo Segawa$^2$\\
$^1${\small Department of Mathematics,
Gakushuin University,} \\
{\small Tokyo 171-8588, Japan}
\\
$^2${\small Graduate School of Environment and Information Sciences, Yokohama National University,}\\ {\small Hodogaya, Yokohama 240-8501, Japan
}
}
\date{}
\begin{document}

\maketitle

\par\noindent
{\bf Abstract}. 
% In this paper, a quantum walk model which reflects the embedding on the surface is constructed. We first obtain the scattering matrix of this quantum walk model characterized by the faces on the surface, and find a detection of the orientablility of the embedding by the scattering information. 
% The comfortability is defined as the square norm of the stationary state restricted to the internal and reflected by the embedding. 
% Then we find that a quantum walker feels more comfortable on a surface with small genus in some natural setting.
The time evolutions of discrete-time quantum walks on graphs are determined by the local adjacency relations of the graphs. 
In this paper, first, we construct a discrete-time quantum walk model that reflects the embedding on the surface so that an underlying global geometric information is reflected.  
Second, we consider the scattering problem of this quantum walk model. 
We obtain the scattering matrix characterized by the faces on the surface and detect the orientablility of the embedding using scattering information. 
For the stationary state in the scattering problem, 
the comfortability is defined as the square norm of the stationary state restricted to the internal. 
%and reflected by the embedding, 
This indicates how a quantum walker is stored in the internal under the embedding. 
Then we find that a quantum walker feels more comfortable on a surface with small genus in some natural setting.
We illustrate our results with some interesting examples.
\\
\\
%abst
%\footnote[0]{
\noindent{\it Key words and phrases.}
Discrete-time quantum walk, graph embedding on closed surfaces, stationary state, scattering matrix, comfortability
 %key wards 

\section{Introduction}
For the scattering problems of finite graphs embedded on the surfaces, how is the underlying geometric structure estimated by a reaction to an input? 
To tackle this problem, we adopt a discrete-time quantum walk model and attempt to extract topological structures of graphs from the behavior of quantum superposition.  
Among a lot of discrete-time quantum walk models~\cite{PKS}, the Grover walk~\cite{Wa} may come to mind first. In fact, using the commutativity of the Grover matrix with any permutation matrix, the time evolution of the Grover walk is easily constructed~\cite{KroBru}, which exhibits interesting behaviors~\cite{HS2,KKSY} and also plays a key role in the quantum search~\cite{Portugal}. Moreover, there are several important properties of the Grover walk that connect not only to a random walk\cite{QCcoupling2,QCcoupling1} but also to the stationary Shr\"{o}dinger equation~\cite{KHiguchi}. 

Here, let us first apply the Grover walk as a discrete-time quantum walk model and focus on the embedding of graphs into closed surfaces $\mathbb{F}^2$ as a geometric structure. Note that any finite graph can be embedded in $3$-dimensional Euclidean space $\mathbb{R}^3$. For example, the complete graph with $4$ vertices $K_4$ has $11$ embeddings up to the homeomorphisms of the surfaces; see Figure~\ref{fig:embedding}. 
We expect that the quantum walk ``feels" these embeddings. 
% Grover walk is one of the typical discrete-time quantum walk models on graphs~\cite{Wa}. 
Grover walk may decide whether an underlying graph admits a topological embedding on a given surface in a similar manner to the results in \cite{Corin}. On the other hand, the time evolution of Grover walk depends only on the adjacency structure of the underlying graph because 
the time evolution operator $U_{Grover}$ acts as 
 \[ U_{\mathrm{Grover}}\delta_e=\left(\frac{2}{\mathrm{deg}(t(e))}-1 \right)\delta_{\bar{e}}+\frac{2}{\mathrm{deg}(t(e))}\sum_{\epsilon:\; o(\epsilon)=t(e),\neq \bar{e}}\delta_\epsilon, \]
for any standard base labeled by arc $e$ of the underlying graph. 
This shows the local scattering at the vertex $x$ of the incident wave represented by the arc $e$~\cite{HKSS_QGW,Tanner}.
Here $o(e)$ and $t(e)$ are the origin and terminal vertices of the arc $e$, $\bar{e}$ is the inverse arc of $e$ and $\mathrm{deg}(x)$ is the degree of the vertex $x$. 
Then, the weights associated with transmitting and reflecting at the vertex $x$ are $2/\mathrm{deg}(x)$ and $2/\deg(x)-1$ in the Grover walk, which is independent of the embedding.  
Therefore, the Grover walk cannot distinguish between any two embedded surfaces of a graph. 
Thus, for our objective, we propose another type of discrete-time quantum walk model.    

To this end, we use the concept of drawing a graph without crossings of edges and extra faces on a closed surface, which is called the two-cell embedding~\cite{GT,MT,NakamotoOzeki}. 
More details are given in Section~\ref{sect.RS}. See also Figures~\ref{fig:face}, \ref{fig:oridetection} and \ref{fig:embedding}. 
It is well known that two-cell embedding on the closed surface of a graph $G=(X,E)$ can be realized using a rotation system~\cite{GT,MT,NakamotoOzeki}, where $X$ and $E$ are sets of vertices and unoriented edges, respectively. We put $A$ as the set of symmetric arcs induced by $E$.   
The rotation system is the triple of the symmetric digraph $G=(X,A)$, the rotation $\rho: A\to A$ and the twist $\tau: A\to \mathbb{Z}_2$.   
Here the rotation $\rho$ is decomposed into cyclic permutations with respect to the incoming arcs of each vertex.    
Thus our target can be switched to constructing a quantum walk model for a given rotation system $(G,\rho,\tau)$. 
We have proposed a quantum walk model on the {\it orientable} surfaces in \cite{HS3} and characterized the stationary state by some spanning subgraphs of the dual graph induced by the orientable embeddings.   
Here we will propose an extension model of \cite{HS3} that can be constructed on both orientable and non-orientable surfaces as follows.
In this paper, the abstract graph $G$ is deformed so that the information of $\rho,\tau$ is reflected, and the in- and out-degrees are equal $2$. 
The regularity of degree $2$ is derived from the implementation of optical polarizer elements~\cite{MHMHS}.  
See Sections~\ref{sect:doublecover} and \ref{sect:blowup} for a more detailed construction and Figure~\ref{fig:FigBU}. 
Such a new graph $G(\rho,\tau)$ is obtained by replacing each vertex of the double covering graph~\cite{GT,MT} induced by $\tau$ with a directed cycle induced by $\rho$.  The replaced directed cycles are called islands, and the symmetric arcs between the islands that reflect the original adjacent relation, are called the bridges. 
The tails, which are semi-infinite paths, are inserted into all the island arcs. Such an assignment of tails is called the {\it hedgehog}; see Figure~\ref{fig:FigBU} (d). 
% Then let us redefine the bridge arcs so that the tail's arcs whose terminus or origin belong to the internal graph are included in the bridge arcs. 
Let us explain the time evolution below; see Sections~\ref{sect:TEF} and \ref{sect:TEIF} for further details. 
The degree-$2$ regularity of the new graph makes it possible that
the one-step time evolution on the whole space can be described by the local scattering at each vertex of the internal graph by the $2\times 2$ unitary matrix
\[ C= \begin{bmatrix} a & b \\ c & d\end{bmatrix}. \]
Here $a,b,c,d$ are the complex valued weights associated with moving of a quantum walk for one step ``from an island to the same island," `` a bridge to an island," ``an island to a bridge" and ``a bridge to the inverse bridge," respectively. The unitarity of the local coin matrix $C$ also includes the unitarity of the total time evolution operator. 
On the other hand, we set that the dynamics on the tail is {\it free}; see (\ref{eq:free}) for the detailed dynamics of the free. 
The initial state is set so that the internal graph receives constant inflow at every time step. 
If a quantum walker goes outside the interior, it never returns to the interior, owing to the free dynamics of the tails.
Such a quantum walker can be regarded as outflow. 
By balancing the in- and out-flows, this quantum walk model converges to a fixed point in the long time limit~\cite{FelHil1,FelHil2,HS}.

Now, for a fixed abstract graph $G_o$, let us vary its embedding onto the surfaces.
For every $\rho$ and $\tau$, set $\psi_\infty$ as the stationary state and $\psi_\infty|_{G_o}$ as the restriction of $\psi_\infty$ on the internal graph $G_o$. 
In this study, we focus on extracting the embedding structures of a graph from the stationary state of this quantum walk model. 
To this end, we divide the stationary state into external and internal parts; we characterize the external part by the {\it scattering matrix}, which shows the reaction of the internal graph to an input at the tails, while the internal part by the {\it comfortability}, which corresponds to the energy in the internal graph. 
% Here $\delta X$ is the set of boundaries connecting to the tails. 
% The hedgehog setting gives a one-to-one correspondence between the set of tails and the arcs of the double covering graph of $G(V,A)$ induced by $\tau$, $G^\tau=(X^\tau,A^\tau)$.
We estimate how quantum walker \\

\noindent (i) gives a {\it scattering} on the embedding by finding the expression of the scattering matrix 
\[ \bs{\beta}_{out}=S\bs{\alpha}_{in}, \]
where $S$ is independent of inflow $\bs{\alpha}_{in}$ and outflow $\bs{\beta}_{out}$ and a unitary operator~\cite{FelHil1,FelHil2},  and \\

\noindent (ii) feels {\it comfortable} to the embedding by defining 
\[ \mathcal{E}(G_o,\rho,\tau):=\mathcal{E}:=\frac{1}{2} ||\;\psi_\infty|_{G_o}\;||^2, \]
that is, the larger $\mathcal{E}$ is, the more {\it comfortable} quantum walker feels to the embedding. \\ 

\noindent Throughout this paper, in addition to the hedgehog tail attachment, we impose the following assumption (2). 
\begin{assumption}\label{ass:hd}
\noindent 
\begin{enumerate}
    \item the assignment of tails is the hedgehog, more precisely see (\ref{eq:def:hedgehog});
    \item the $(2,2)$ element of $C$ $(=d)$ is a real number.
\end{enumerate}
\end{assumption}
\noindent Under such a construction of the quantum walk reflecting the embedding on a closed surface, now we are ready to state our main result on the scattering and comfortablity. 
\begin{theorem}[Scattering]\label{thm:scattering}
Assume $d\in \mathbb{R}$ and set $\omega=-\det C$ and the hedgehog assignment of the tails.
Let $F$ be the set of faces induced by the rotation system $(G,\rho,\tau)$. 
The scattering matrix is decomposed into the following $|F|$ unitary matrices as follows:  \[S=\bigoplus_{f\in F} S_{f}, \]
where 
\[ S_{f}=bc\omega P_{f}\;(I_{f}-a\omega P_{f})^{-1}+dI_{f}. \]
Here 
%the operators induced by each face $f$, 
$I_f$ is the identity matrix and $P_f$ is a unitary and weighted permutation matrix corresponding to the closed walk on the boundary of face $f$ defined in (\ref{eq:hedgehogP}).
\end{theorem}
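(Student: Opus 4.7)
The plan is to derive the fixed-point equation $U\psi_\infty=\psi_\infty$, reduce it to a closed linear system on the island-arc amplitudes only, and exploit the rotation-system structure to block-diagonalize the resulting system by face, whereupon each block yields the stated formula for $S_f$ by a single geometric-series inversion.

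First I would record the stationary equations vertex by vertex on $G(\rho,\tau)$. At each internal vertex $v$ the $2\times 2$ coin $C$ gives
\begin{align*}
\psi_\infty(e_I^{out}(v)) &= a\,\psi_\infty(e_I^{in}(v)) + b\,\psi_\infty(e_B^{in}(v)),\\
\psi_\infty(e_B^{out}(v)) &= c\,\psi_\infty(e_I^{in}(v)) + d\,\psi_\infty(e_B^{in}(v)),
\end{align*}
and the hedgehog tails, under the free dynamics~(\ref{eq:free}), identify the innermost tail amplitudes with $\bs{\alpha}_{in}$ on the in-going half and $\bs{\beta}_{out}$ on the out-going half. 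I would then eliminate the bridge amplitudes: since a bridge-out arc at one vertex is literally the bridge-in arc at its other endpoint, the second coin relation can be solved pairwise along each bridge, expressing every bridge amplitude as a linear combination of island amplitudes and the inflow. Substituting into the first coin relation produces a closed system
\[
\psi_I = a\omega\,P\,\psi_I + b\,\bs{\alpha}_{in},\qquad \bs{\beta}_{out} = c\omega\,P\,\psi_I + d\,\bs{\alpha}_{in},
\]
where $P$ is a weighted permutation on the space of island arcs and the phase $\omega=-\det C$ emerges from the unitarity of $C$ together with Assumption~\ref{ass:hd}(2).

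The heart of the proof is to identify $P$ with the direct sum $\bigoplus_{f\in F}P_f$ of the face-boundary permutations defined in~(\ref{eq:hedgehogP}). The composite operation encoded by $P$ takes an island arc $e$, transports it across the bridge at the head of $e$, and then continues along the outgoing island arc on the target island; by a standard rotation-system computation, this successor relation is exactly the face-boundary walk of the two-cell embedding induced by $(\rho,\tau)$. Hence the orbits of $P$ are in bijection with the faces, and the twists recorded by $\tau$ produce the weights of $P_f$ (trivial across $\tau$-preserving bridges, sign-bearing across $\tau$-flipping ones), which is where the non-orientable case enters the analysis.

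Given the decomposition $P=\bigoplus_f P_f$, the equation on each face block reduces to $\psi_I|_f=a\omega P_f\psi_I|_f+b\,\bs{\alpha}_{in}|_f$, solved by $\psi_I|_f=b(I_f-a\omega P_f)^{-1}\bs{\alpha}_{in}|_f$; invertibility follows from the convergence of the associated fixed-point iteration established in \cite{FelHil1,FelHil2,HS} under the present inflow regime. Substituting into the outflow relation yields $S_f=bc\omega P_f(I_f-a\omega P_f)^{-1}+dI_f$, completing the theorem. The main obstacle is the combinatorial identification in the previous paragraph: tracking how the twist $\tau$ interacts with the phases produced by eliminating bridges so that the result is precisely a weighted face-boundary permutation with a global factor $\omega=-\det C$, which is where both parts of Assumption~\ref{ass:hd} are used together.
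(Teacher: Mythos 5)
Your proposal is correct and follows essentially the same route as the paper: your pairwise elimination of the bridge amplitudes is exactly the transfer-matrix computation of Lemma~\ref{lem:key} (where $d\in\mathbb{R}$ forces the decoupling, and the unitarity identity $a=-\omega \bar d$ makes the island-to-island reflection term cancel so that the reduced system is a weighted permutation), and your Neumann-series inversion of $I_f-a\omega P_f$ is the paper's explicit geometric series over the ``$r$-night walks'' around each face boundary. The identification of the orbits of $P$ with the facial walks and the resulting block decomposition $S=\bigoplus_{f\in F}S_f$ are likewise the same as in the proof of Theorem~\ref{thm:scattering0}.
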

\noindent Theorem~\ref{thm:scattering0} in Section~\ref{sect:scat} gives the scattering matrix in a more general setting. 
The operator $P_f$ is a weighted permutation matrix induced by the closed walk along the boundary of face $f$, which is called the facial walk of $f$. 
The rotation system $(G,\rho,\tau)$ determines the set of faces $F$ and provides a unique orientation for the closed walk along the boundary of face $f\in F$. This closed walk corresponds to the facial walk, which is expressed by a sequence of arcs, and we simply write $f$ for this closed walk. 
It should be noted that $\bar{f}\not\in F$ for $f\in F$, where $\bar{f}$ is the inverse direction of $f$. 
The weight is determined by each edge type that passes through the facial walk.  
Based on this property of the scattering matrix, detection of orientability using scattering information is proposed in Theorem~\ref{thm:detection}. 

It is shown in Proposition~\ref{prop:scatiso} that the scattering matrices of two equivalent embeddings are unitarily equivalent. 
Although the scattering of the two embeddings with the same inflow is generally different, 
the comfortability $\mathbb{E}(\mathcal{E})$ under the following setting is invariant under the isomorphism of the embeddings. The reason for this is explained in Section~\ref{sect:proofComf}. 
\begin{theorem}[Comfortability]\label{thm:comfa>0}
Let $G=(X,A)$ and $(G,\rho,\tau)$ be the abstract symmetric digraph and its rotation system, respectively. Let $F$ be the set of faces determined by the rotation system $(G,\rho,\tau)$. 
Randomly pick a tail from the hedgehog tails as an input.  
Let $\mathbb{P}_1,\dots,\mathbb{P}_m$ be the hedgehog tails. 
Suppose $\bs{\alpha}_{in}$ is a random variable distributed uniformly in $\{ \delta_j \;|\; j=1,\dots,m \}$. 
Under Assumption~\ref{ass:hd} with $a>0$ and $\omega =1$,  the average of the comfortability with respect to a randomly chosen input is expressed by 
\begin{align}
\mathbb{E}[\mathcal{E}] &= 
\frac{1}{|A|} \frac{2+|b|^2}{|b|^2}\sum_{f\in F}|f| \frac{1+a^{|f|}}{1-a^{|f|}}
- \frac{1}{|A|}\frac{a}{|b|^2} \sum_{f\in F}\frac{1}{1-a^{|f|}}\sum_{e\in f\cap \bar{f}} \left(\; a^{\mathrm{dist}_f(e,\bar{e})}+a^{\mathrm{dist}_f(\bar{e},e)}\;\right). \label{eq:1st}
\end{align}
Here $f\cap \bar{f}$ is the set of self-intersections of the face $f\in F$, 
where if a facial walk $f$ passes through an arc $e$ and also its inverse $\bar{e}$, then the face $f$ is said to have a self-intersection at the unoriented  boundary edge $|e|$. 
\end{theorem}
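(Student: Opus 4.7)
The strategy is to leverage the face-wise decomposition of the internal dynamics already established in Theorem~\ref{thm:scattering}. For a single tail input $\bs{\alpha}_{in}=\delta_j$, I would write $\psi_\infty^{(j)}|_{G_o}=\bigoplus_{f\in F}\psi_\infty^{(j)}|_f$, where each face block equals $(I_f-aP_f)^{-1}s_f^{(j)}$ for a source vector $s_f^{(j)}$ determined by the local scattering of the inflow at the vertex where tail $j$ attaches. Because the hedgehog attaches tails only to island arcs and each island arc belongs to exactly one facial walk, $s_f^{(j)}$ is supported on at most two positions of $f$'s walk: one if the underlying edge $|e_j|$ is a regular edge of $f$, and two if $|e_j|$ is a self-intersection.

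For each face $f$ I would expand $(I_f-aP_f)^{-1}=\sum_{k\ge 0}a^kP_f^k$; this converges since $|a|<1$, a consequence of the unitarity of $C$ together with $|b|>0$. Because $P_f$ is a unit-modulus weighted permutation cycling through the $|f|$ arcs of the facial walk, the matrix element from position $q$ to position $p$ has absolute value $a^{\mathrm{dist}_f(q,p)}/(1-a^{|f|})$, so each squared column norm collapses via the finite geometric series $\sum_{d=0}^{|f|-1}a^{2d}$ to $(1+a^{|f|})/\bigl((1-a^{2})(1-a^{|f|})\bigr)$. Together with the unitarity identity $|b|^{2}=1-a^{2}$, this produces the $(1+a^{|f|})/\bigl(|b|^{2}(1-a^{|f|})\bigr)$ dependence. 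Summing the diagonal contribution over the $|f|$ arcs of $f$ and over all faces, and multiplying by the scattering weight $(2+|b|^{2})$ coming from the decomposition of the unit inflow into its island and bridge components, yields the first, ``diagonal,'' term of (\ref{eq:1st}).

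The second term arises from self-intersection cross-terms: when a facial walk $f$ visits both $e$ and $\bar e$, expanding $\|\psi_\infty^{(j)}|_f\|^{2}$ produces an off-diagonal inner product between the two columns of $(I_f-aP_f)^{-1}$ corresponding to the positions of $e$ and $\bar e$ in the walk. A direct Neumann-series computation evaluates this to $(a^{\mathrm{dist}_f(e,\bar e)}+a^{\mathrm{dist}_f(\bar e,e)})/(1-a^{|f|})$, scaled by $a/|b|^{2}$ after incorporating the scattering constants. Averaging uniformly over the hedgehog tails and including the factor $1/2$ from the definition of $\mathcal{E}$ then assembles the formula (\ref{eq:1st}).

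The main obstacle will be the precise identification of $s_f^{(j)}$ and the exact phases of $P_f$ under $\omega=1$: one must track how the $2\times 2$ scattering $C$ apportions the unit inflow between the outgoing island arc, the outgoing bridge arc, and (in the self-intersection case) the two distinct positions of $|e|$ along the cycle of $f$, and then show that all scalar coefficients collapse via the unitarity relations $|b|^{2}=|c|^{2}=1-a^{2}$ and $ad-bc=-1$ into the clean factors $(2+|b|^{2})/|b|^{2}$ and $a/|b|^{2}$. Verifying that the resulting $\mathbb{E}[\mathcal{E}]$ depends on the rotation system only through the face-length multiset and the self-intersection data — and is therefore invariant under embedding isomorphism, as promised before the theorem — should serve as a useful consistency check throughout.
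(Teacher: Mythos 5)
Your overall strategy --- decompose by faces, expand $(I_f-aP_f)^{-1}$ as a Neumann series, collapse the geometric sums, and convert the uniform average over tails into a trace --- is the same route the paper takes (Propositions~\ref{prop:stationary} and \ref{prop:comfgen} followed by Theorem~\ref{thm:comf}, where $\mathbb{E}[\|Q\bs{\alpha}\|^2]=\frac{1}{|A_{br}|}\mathrm{tr}(QQ^*)$ and $Q_f=bc\omega P_f(I-a\omega P_f)^{-1}$ is expanded as a finite geometric series). Your first-term computation, including the column-norm value $(1+a^{|f|})/\bigl((1-a^2)(1-a^{|f|})\bigr)$, matches the paper's evaluation of the diagonal of $Q_fQ_f^*$.

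The genuine gap is in the treatment of the bridge arcs, and it is exactly where both nontrivial coefficients of (\ref{eq:1st}) come from. The stationary state does \emph{not} decompose face-by-face as $\bigoplus_f(I_f-aP_f)^{-1}s_f^{(j)}$ on the whole internal graph: that describes the island arcs only. On a bridge arc $e$ the paper's Proposition~\ref{prop:stationary} gives $\psi_\infty(e)=\omega\bigl(\eta(e)+(-1)^{\tau(e)}d\,\eta(\bar e)\bigr)$, a superposition of the facial-walk amplitudes at $e$ and at $\bar e$, which in general live in two \emph{different} faces. Squaring and summing this superposition is what produces (i) the aggregate coefficient, via $\tfrac{1}{|c|^2}+\tfrac{1+d^2}{2|bc|^2}=\tfrac{2+|b|^2}{2|bc|^2}$ (so $(2+|b|^2)$ is not a ``scattering weight of the unit inflow'' but the sum of the island energy coefficient and the diagonal bridge energy coefficient, and it requires knowing the factor $d$ in the bridge formula), and (ii) the cross term $2d\,\mathrm{Re}\bigl(\eta(e)\overline{\eta(\bar e)}\bigr)$, which after averaging over the input becomes $\tfrac{d}{|bc|^2}\mathrm{tr}(QQ^*\sigma)$ and survives precisely when $e$ and $\bar e$ lie on the same facial walk --- that is the self-intersection term, with its sign $-a/|b|^2=d/|b|^2$. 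Your proposed mechanism for this term --- a source vector $s_f^{(j)}$ supported on two positions when $|e_j|$ is a self-intersection --- is not correct: a hedgehog tail attaches to a single island arc, which belongs to exactly one facial walk, so the source is supported at exactly one position of exactly one face regardless of self-intersections. Relatedly, for a fixed input $\delta_j$ the cross term is a product of two individual matrix entries $Q_{e,j}\overline{Q_{\bar e,j}}$, not an inner product of two columns of the resolvent; it becomes the row inner product $(QQ^*)_{e,\bar e}$ only after the average over $j$. Without the bridge-arc superposition formula your computation would miss the factor $1+d^2$ in the first term and could not produce the correct sign and coefficient of the second.
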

%\end{MainTheorem2}
\noindent 
Thus, the larger the first term is, the more comfortable the quantum walker feels, whereas the larger the second term is, the more uncomfortable the walker feels. 
The first term is related to an integer partition of $|A|$, whereas the second term is related to self-intersections of faces. See Figure~\ref{fig:si} for the self-intersection. 
Figure~\ref{fig:ranking} shows the ranking of the comfortability of the embeddings for the complete graph $K_4$ with $a=0.98$. 
We discuss the messages of combinatorial structures from Theorem~\ref{thm:comfa>0} in
Section~\ref{sect:cor}. 

The remainder of this paper is organized as follows. 
Section~2 describes how the geometric and combinatorial information of graph embeddings can be extracted from Theorem~\ref{thm:comfa>0} as corollaries. We show the best and worst embeddings of the complete graph $K_n$ for a quantum walker and also characterize the comfortability on the island using the Young diagram. 
In Section~3, we provide a short review of graph embeddings on orientable/non-orientable surfaces.
In Section~4, we describe the time evolution of the quantum walk induced by the rotation system $(G,\rho,\tau)$. 
In Section~5, we present the unitary equivalence of the time evolutions between the equivalent embeddings. 
In Section~6, we show that the scattering matrix is characterized by the resulting faces and is described by the direct sum of the unitary and circulant matrix matrices. Moreover, a method for detecting orientability using  scattering information is proved. 
In Section~7, we discuss the comfortability and give the proof of Theorem~\ref{thm:comfa>0}. 
%%%%
\begin{figure}[hbtp]
    \centering
    \includegraphics[keepaspectratio, width=140mm]{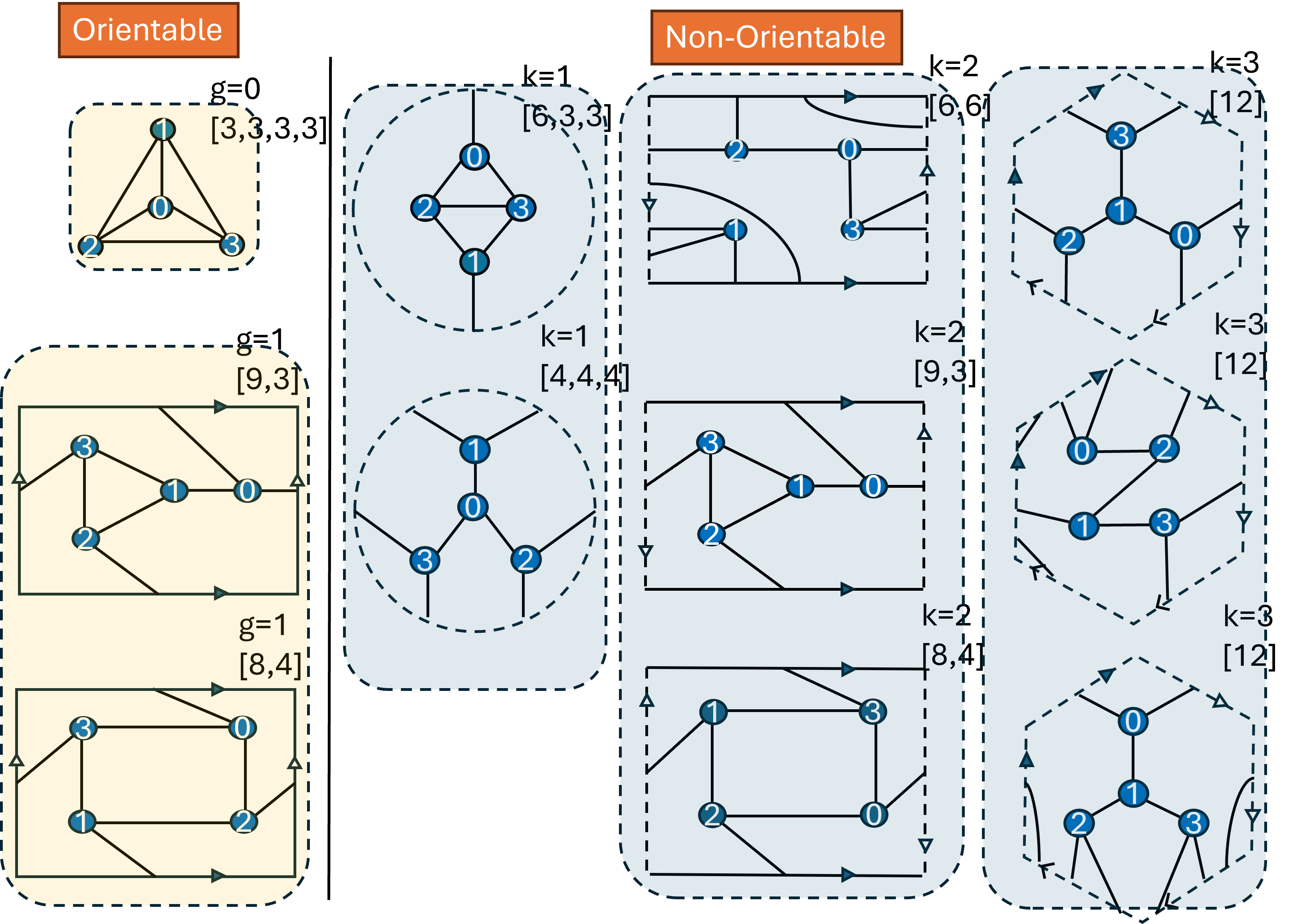}
    \caption{The list of embeddings of $K_4$: The genus is described by $g$ and $k$, for orientable and non-orientable surfaces, respectively. For example, $g=0$ and $g=1$ correspond to the surfaces of the sphere and torus, respectively, whereas $k=1$ and $k=2$ correspond to the surfaces of the projective plane and Klein's bottle, respectively. The boundary lengths of the faces of the resulting embedding are denoted as $[ \lambda_1,\lambda_2,\dots,\lambda_\kappa]$. For example, $[6,3,3]$ indicates that one hexagon and two triangles exist in the embedding. }
    \label{fig:embeddingK4}
\end{figure}
\begin{figure}[hbtp]
    \centering
    \includegraphics[keepaspectratio, width=150mm]{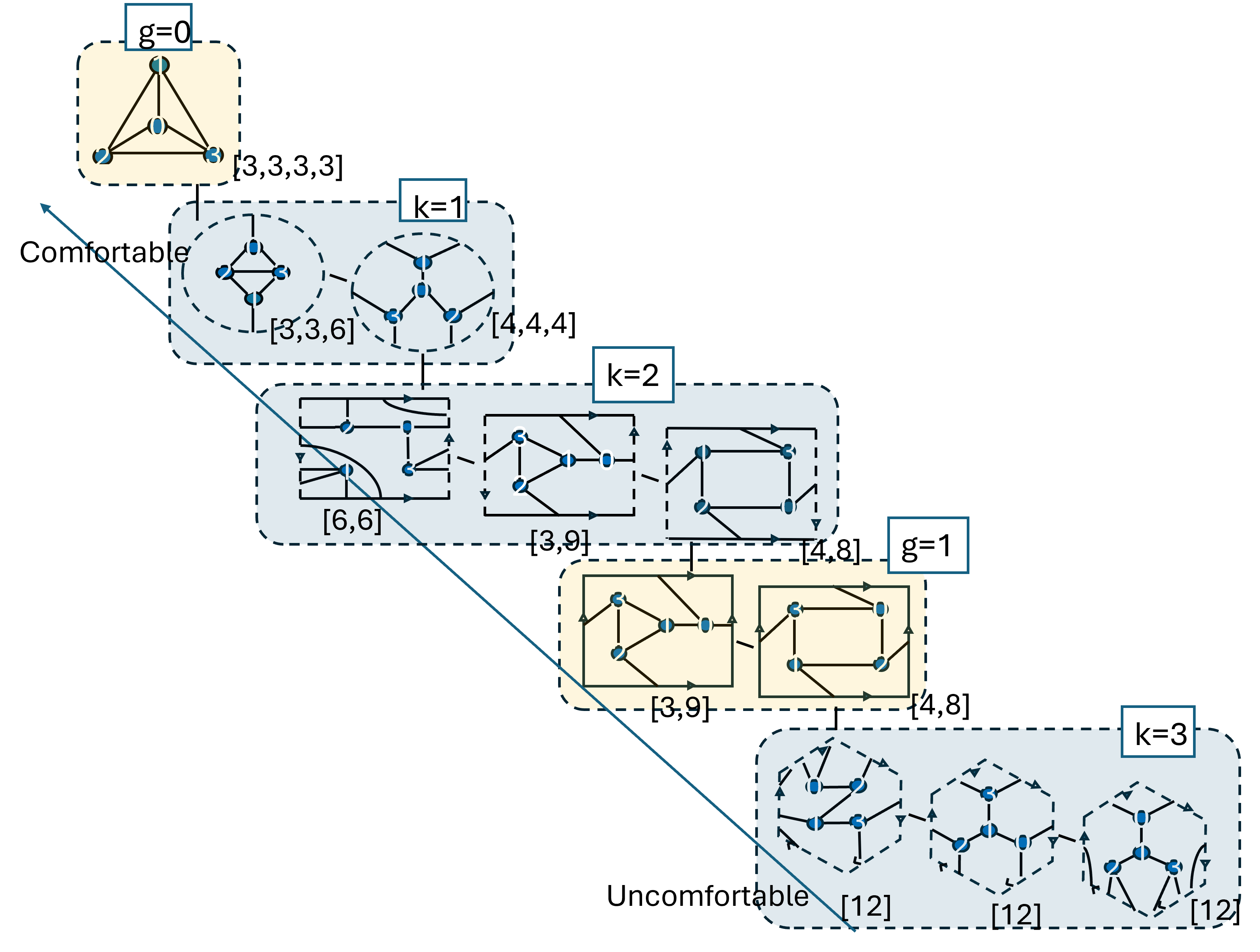}
    \caption{The ranking of the comfortability for the embeddings of $K_4$ with $a=0.98$: This ranking follows Corollary~\ref{cor:main} in the setting of $a\to 1$. The most comfortable embedding for $K_4$ is the embedding on the sphere; the worst is the embedding on the non-orientable surface with the maximum genus $k=3$. }
    \label{fig:ranking}
\end{figure}
%%%%%%%%%%%

%%%%%%%
\section{Combinatorial observations (corollaries of Theorem~\ref{thm:comfa>0})}\label{sect:cor}
In this section, we discuss 
the extraction of more detailed geometric information from
Theorem~\ref{thm:comfa>0}. 

\subsection{Observation 1: In the limit of $a\to 1$}
It is easy to observe that
if $a\to 0$, then the comfortability converges to $3$, which is completely independent of $(G,\rho,\tau)$. 
It also confirms the consistency by considering that if $a=0$, a walker is forced to trace a route `` tail$\to$island$\to$bridge$\to$island$\to$tail" by the definition of this quantum walk. 

On the other hand, if $a\to 1$, then the comfortability diverges. 
Then, taking $a=1-\delta$, we will find the appropriate scaling with respect to $\delta$, and its coefficient of the first order. We hope that the coefficient can be characterized using geometric information. Indeed, we obtain the following.  
\begin{corollary}\label{cor:main}
Let $G$ be a connected abstract graph with the vertex set $X$ and the edge set $E$. 
Let $(G,\rho,\tau)$ be the rotation system with the face set $F$.  
Under Assumption~\ref{ass:hd} with $a=1-\delta$ $(\delta\ll 1)$,  the average of the comfortability with respect to the randomly chosen initial state is expressed by 
\begin{equation}\label{eq:comfa0}
\lim_{\delta\downarrow 0}  \mathbb{E}[\mathcal{E}_\delta]\;\delta ^2=\frac{|F|}{|E|}\left( 1-\frac{1}{|F|}\sum_{f\in F} \frac{|f\cap \bar{f}|}{|f|}\right). 
\end{equation}
\end{corollary}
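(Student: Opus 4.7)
The proof is a direct asymptotic analysis of the closed-form expression~(\ref{eq:1st}) in Theorem~\ref{thm:comfa>0}. The plan is to substitute $a = 1-\delta$, Taylor-expand every factor around $\delta=0$, and retain only the $\delta^{-2}$ singular part of each term, since multiplication by $\delta^2$ kills everything of lower order in the limit $\delta\downarrow 0$.

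To set up the expansion I would first extract the algebraic consequences of the hypotheses. Assumption~\ref{ass:hd}(2) gives $d\in\mathbb{R}$; together with $\omega=-\det C=1$ and $a>0$, orthonormality of the rows of the unitary $C$ forces $a\in\mathbb{R}$ and $d^2=a^2$, so that $|b|^2 = 1-d^2 = 1-a^2 = 2\delta - \delta^2$. Using also $|A| = 2|E|$, the relevant leading orders are
\[
\frac{2+|b|^2}{|b|^2} = \frac{1}{\delta}+O(1),\quad \frac{a}{|b|^2} = \frac{1}{2\delta}+O(1),\quad \frac{1}{1-a^{|f|}} = \frac{1}{|f|\delta}+O(1),
\]
\[
|f|\,\frac{1+a^{|f|}}{1-a^{|f|}} = \frac{2}{\delta}+O(1),\quad a^{\mathrm{dist}_f(e,\bar e)} = 1+O(\delta).
\]

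Substituting into the first term of~(\ref{eq:1st}) collapses the $f$-summand to $2/\delta$ independently of the face, giving $\frac{1}{2|E|}\cdot\frac{1}{\delta}\cdot\frac{2|F|}{\delta}+O(\delta^{-1}) = \frac{|F|}{|E|}\delta^{-2}+O(\delta^{-1})$. For the second term the delicate point is the inner sum $\sum_{e\in f\cap\bar f}\bigl(a^{\mathrm{dist}_f(e,\bar e)}+a^{\mathrm{dist}_f(\bar e,e)}\bigr)$: the index $e$ ranges over oriented arcs and both $e$ and $\bar e$ lie in $f\cap\bar f$, so each unoriented self-intersection contributes $4$ to this sum in the limit. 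Hence the second term equals $\frac{1}{|E|}\delta^{-2}\sum_{f\in F}\frac{|f\cap\bar f|}{|f|}+O(\delta^{-1})$, with $|f\cap\bar f|$ now interpreted as the count of unoriented self-intersections as in the statement. Subtracting and sending $\delta\downarrow 0$ produces exactly~(\ref{eq:comfa0}).

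The only real obstacle is the factor-of-two bookkeeping separating oriented arcs from unoriented self-intersections in the second term; once this convention is pinned down consistently with the notation of Theorem~\ref{thm:comfa>0}, the rest reduces to routine Taylor expansion.
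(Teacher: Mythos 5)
Your proposal is correct and follows essentially the same route as the paper's proof: substitute $a=1-\delta$ into (\ref{eq:1st}), Taylor-expand each factor (using $|b|^2=1-a^2$, $\mathrm{dist}_f(e,\bar e)+\mathrm{dist}_f(\bar e,e)=|f|$ and $|A|=2|E|$), and retain only the $\delta^{-2}$ parts, which give $|F|/|E|$ for the first term and $\frac{1}{|E|}\sum_{f}|f\cap\bar f|/|f|$ for the second. Your explicit bookkeeping of the arc-versus-unoriented-edge count in $f\cap\bar f$ (the factor of four per self-intersection) is the one point the paper's own proof leaves implicit, and your convention reproduces the stated formula.
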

The smaller genus increases the number of faces, whereas the larger genus induces larger faces and then increases the number of self-intersections.
% Moreover the non-orientability gives a chance to reduce the intersection. 
Thus, for $a\to 1$, the quantum walker feels more comfortable on the surface with a smaller genus. 
In particular, the single-face embedding on an orientable surface attains the above comfortablity $0$. 
Figure~\ref{fig:ranking} shows the ranking of the comfortability for $K_4$ following Corollary~\ref{cor:main}. 
\begin{proof}
Inserting the following expansions by small $\delta \ll 1$ into the first term of (\ref{eq:1st}),  
\[ \frac{1+a^{|f|}}{1-a^{|f|}}=\frac{1}{\delta}\frac{1}{|f|}\left\{ 2-\delta + O(\delta^2) \right\},\;\frac{2+|b|^2}{|b|^2}=\frac{1}{\delta}\left( 1+\frac{3}{2}\delta +O(\delta^2) \right), \]
we have 
\begin{align*}
\text{the first term}=\frac{1}{\delta^2}\left( \frac{|F|}{|E|}+\frac{|F|}{|E|}\delta + O(\delta^2) \right).
\end{align*}
On the other hand, inserting the following expansions by small $\delta \ll 1$ into the second term of (\ref{eq:1st}),  
\begin{align*} 
&a^{\mathrm{dist}_f(e,\bar{e})}+a^{\mathrm{dist}_f(\bar{e},e)} = 2-|f|\delta +O(\delta^2), \\
&\frac{1}{1-a^{|f|}} =\frac{1}{\delta}\frac{1}{|f|}\left( 1+\frac{|f|-1}{2}\delta+O(\delta)+O(\delta^2) \right), \\
&\frac{a}{|b|^2} = \frac{1}{2\delta}\left( 1-\frac{1}{2}\delta+(\delta^2) \right),
\end{align*}
we have 
\begin{equation*}
\text{the second term}=\frac{1}{\delta^2} \frac{1}{|E|}\sum_{f\in F} \frac{|f\cap \bar{f}|}{|f|}. 
\end{equation*}
Thus, we arrive at the desired conclusion. 
\end{proof}

\noindent {\bf Example: The best and worst embeddings of $K_n$ for quantum walker.} \\
For a fixed abstract graph, if the embedding gives the most comfortability in all the possible embeddings of the graph, then it is called the {\it best} embedding of the graph, whereas if it gives the least comfortability, then it is called the {\it worst} embedding of the graph. 
Let us determine the best and worst embeddings of the complete graph $K_n$ using the following famous graph theoretical facts: 
\begin{fact}[Minimal and maximal genera of $K_n$]\label{prop:fact}
\noindent
\begin{enumerate}
\item The (minimal) genus of $K_n$~ {\rm (Ringel and Youngs (1968) \cite{RY}):}
\begin{align*}
\text{{\rm orientable}}:\; &\gamma(K_n) =\left\lceil \frac{(n-3)(n-4)}{12}\right\rceil,  \\
\text{{\rm non-orientable}}:\; &\tilde{\gamma}(K_n) =\begin{cases}\left\lceil \frac{(n-3)(n-4)}{6}\right\rceil & \text{: $n\neq 7$,}\\
3 & \text{: $n=7$.}
\end{cases}
\end{align*}
\item The maximal genus of $K_n$ for the orientable surface~ {\rm (Nordhaus and Stewart (1971) \cite{NS}):} 
\begin{equation*}
\gamma_M(K_n)=\left\lfloor \frac{(n-1)(n-2)}{4} \right\rfloor.
\end{equation*}
\item The maximal genus of a connected graph $G=(V,E)$ for the non-orientable surface~{\rm (\cite{GT,MT} and its reference therein):}
\begin{equation}
\tilde{\gamma}_M(G)=\beta(G), 
\end{equation}
where $\beta(G)$ is the betti number of $G$, that is, $\beta(G)=|E|-|V|+1$. 
\end{enumerate}
\end{fact}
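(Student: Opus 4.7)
Fact~\ref{prop:fact} is a compendium of three classical results in topological graph theory, so rather than re-deriving them I sketch the standard strategies.

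For the lower bounds in part~(1), the plan is to apply Euler's formula to a hypothetical minimum-genus embedding of $K_n$: with $|V|=n$, $|E|=\binom{n}{2}$, and the face-degree bound $2|E|\ge 3|F|$ coming from simplicity, rearranging $|V|-|E|+|F|=2-2g$ (respectively $=2-k$) yields $g\ge(n-3)(n-4)/12$ and $k\ge(n-3)(n-4)/6$, and taking ceilings gives the stated formulas. Matching these lower bounds by explicit constructions is the hard direction---the Ringel--Youngs Map Color Theorem---whose proof is a case analysis of $n\bmod 12$ (twelve orientable cases together with a parallel family of non-orientable cases) built on the current-graph (quotient) construction; the $n=7$ exception arises because the naive non-orientable triangular embedding of $K_7$ does not exist and a slightly larger surface is forced.

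For part~(2), I would first prove the universal upper bound $\gamma_M(G)\le\lfloor\beta(G)/2\rfloor$ from Euler's formula together with $|F|\ge 1$: an orientable embedding of $G$ of genus $g$ has $|F|=\beta(G)+1-2g$, so $|F|\ge 1$ forces $g\le\lfloor\beta(G)/2\rfloor$. Specialising to $G=K_n$ with $\beta(K_n)=(n-1)(n-2)/2$ gives the stated bound, and it remains to show $K_n$ is upper-embeddable; Nordhaus and Stewart do this by exhibiting an explicit rotation system on $K_n$ whose facial walks collapse to at most two faces, with the parity of $\beta(K_n)$ dictating whether one or two faces is achieved.

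For part~(3), the same Euler argument with $|V|-|E|+|F|=2-k$ immediately gives $k\le\beta(G)$ for any connected $G$. The realisation step is the plan: fix a spanning tree $T$, embed it on the sphere (one face), and then add the $\beta(G)$ cotree edges one at a time, each with twist $\tau(e)=1$ and inserted at a point of the current facial walk that keeps the face count at $1$. Edmonds' face-tracing algorithm together with at least one nontrivial twist (which forces non-orientability) produces a one-face non-orientable embedding, so $\tilde\gamma_M(G)=\beta(G)$. The main obstacle across the whole fact is unambiguously part~(1): the Euler-formula lower bound is routine, but the matching constructions are the deep content of Ringel--Youngs~\cite{RY} and would only be cited, not reproved, in a paper of this scope.
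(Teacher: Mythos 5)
Your sketch is consistent with the paper, which states this as a \textbf{Fact} and offers no proof at all, simply citing Ringel--Youngs, Nordhaus--Stewart, and the standard references for the non-orientable maximum genus. Your Euler-formula computations for the lower bound in (1) and the upper bounds in (2) and (3) are correct, and you rightly identify the matching constructions (the Map Color Theorem case analysis, upper-embeddability of $K_n$, and the one-face non-orientable embedding) as the deep content that any paper of this scope would cite rather than reprove.
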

By combining Corollary~\ref{cor:main} with Fact~\ref{prop:fact}, the most comfortable embedding of $K_n$ for quantum walker can be characterized as follows. 
 \begin{corollary}\label{cor:Kn}
Under the setting of Corollary~\ref{cor:main}, 
the best and worst embeddings of the complete graph $K_n$ are as follows: 
%for the comfortability of quantum walker 
% must be on the closed surfaces with the minimal and maximal genus, respectively, whose orientability are divided into cases of $n \mod 4$ as follows. 
\begin{enumerate}
 \item If $n\equiv 1,2 \mod 4$, 
\begin{itemize}
\item the best embedding is \\
any embedding on the non-orientable surface with minimal genus; 
\item the worst embedding is \\
any embedding on the orientable surface with  maximal genus.
\end{itemize}
\item If $n\equiv 0,3 \mod 4$, %and $n\neq 3,4,7$, 
\begin{itemize}
\item the best embedding is \\
%for $n\neq 3,4,7$, any embedding on the orientable and the non-orientable surfaces with the minimal genus; \\
any embedding on the orientable and the non-orientable surfaces with minimal genus for $n\neq 3,4,7$; \\
%for $n\neq 3,4,7$, 
any embedding on an orientable surface with a minimal genus for $n= 3,4,7$;  
\item the worst embedding is \\
embedding on a non-orientable surface  with  a maximum genus.
\end{itemize}
% \item If $n=3,4,7$, 
% \begin{itemize}
% \item the best embedding is \\
% any embedding on orientable surface with the minimal genus; 
% \item the worst embedding is  \\
% an embedding on non-orientable surface with the maximal genus.
% \end{itemize}
\end{enumerate}
\end{corollary}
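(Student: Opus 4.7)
The plan is to reduce the optimization, via Corollary~\ref{cor:main}, to extremizing
\[ Q(\rho,\tau) := |F| - \sum_{f\in F} \frac{|f\cap \bar{f}|}{|f|} \]
over all embeddings of $K_n$, since the limit in~(\ref{eq:comfa0}) equals $Q/|E|$ and $|E|=\binom{n}{2}$ is fixed. Two elementary bounds organise the argument: since $|f\cap\bar{f}|\le|f|$ we have $Q\in[0,|F|]$; and in any single-face embedding the sole facial walk contains every arc together with its inverse, so $|f\cap\bar{f}|=|f|$ and $Q=0$. Thus the best case reduces to making $|F|$ large while keeping self-intersections zero, and the worst case reduces to reaching $Q=0$, which any single-face embedding accomplishes.

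For the best embedding I would invoke Ringel--Youngs (Fact~\ref{prop:fact}(1)) to realise the minimum-genus embedding of $K_n$ as a triangulation; in a triangulation of a simple graph each face uses three distinct unoriented edges, so $|f\cap\bar{f}|=0$ and $Q=|F|$. The residual cases where the ceiling in the formula is not attained by a full triangulation leave only a small number of short non-triangular faces, but these are still self-intersection-free because a self-intersecting face in a $2$-cell embedding can be ``uncrossed'' to produce a strictly lower-genus embedding, contradicting minimality. Euler's formula then gives $|F|=2-\chi-n+\binom{n}{2}$, so maximising $|F|$ is maximising $\chi$: the orientable minimum contributes $\chi=2-2\gamma$, the non-orientable minimum contributes $\chi=2-\tilde\gamma$, and the comparison reduces to $2\gamma$ versus $\tilde\gamma$. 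A $\bmod\,4$ case check on the formulas in Fact~\ref{prop:fact}(1) shows $\tilde\gamma=2\gamma-1$ whenever $n\equiv 1,2\pmod 4$ (non-orientable strictly wins) and $\tilde\gamma=2\gamma$ whenever $n\equiv 0,3\pmod 4$ with $n\neq 7$ (they tie). The exceptional value $\tilde\gamma(K_7)=3$ and the trivially planar cases $n=3,4$, where the ceiling gives a spurious non-orientable minimum of $0$, produce the $n\in\{3,4,7\}$ carve-outs in the statement.

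For the worst embedding, Fact~\ref{prop:fact}(3) together with Euler gives $|F|=1$ (hence $Q=0$) for the non-orientable maximum-genus embedding of $K_n$, which is therefore always a minimiser. For the orientable side, Fact~\ref{prop:fact}(2) and Euler give $|F|=1$ when $\beta(K_n)=(n-1)(n-2)/2$ is even and $|F|=2$ when it is odd; a parity check shows $\beta$ is even iff $n\equiv 1,2\pmod 4$. Hence for $n\equiv 1,2\pmod 4$ the orientable max-genus embedding also reaches $Q=0$ (as named in the corollary), whereas for $n\equiv 0,3\pmod 4$ it has $|F|=2$ and I would argue $Q=2>0$, leaving the non-orientable max-genus as the unique minimiser. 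The main technical obstacle is precisely this positivity: $Q=0$ at $|F|=2$ would require an edge-partition $E(K_n)=E_1\sqcup E_2$ with each $(X,E_i)$ appearing as a one-face piece on part of the surface, and I would rule this out via Xuong's co-tree characterisation of maximum-genus embeddings, or directly by showing that in any orientable maximum-genus embedding of $K_n$ with two faces each edge must be shared between the two faces (the two sides of every edge lie in different faces), forcing $|f_i\cap\bar{f}_i|=0$ and $Q=2$. The small cases $n\in\{3,4,7\}$ are then handled by direct inspection.
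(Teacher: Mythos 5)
Your overall framework---reducing everything to extremizing $Q=|F|-\sum_{f\in F}|f\cap \bar{f}|/|f|$ via Corollary~\ref{cor:main}, and comparing $2\gamma(K_n)$ with $\tilde{\gamma}(K_n)$ modulo $4$ through Euler's formula---is the same as the paper's, and your mod-$4$ arithmetic and the parity check of $\beta(K_n)$ for the orientable maximal-genus face count are correct. However, there is a genuine error at the core of your worst-case analysis: the claim that in \emph{any} single-face embedding the facial walk contains every arc together with its inverse, so that $|f\cap\bar{f}|=|f|$ and $Q=0$. That is true only for \emph{orientable} single-face embeddings. On a non-orientable surface the single facial walk must traverse at least one edge twice in the \emph{same} direction, and by the paper's definition such an edge is \emph{not} a self-intersection (which requires both $e$ and $\bar{e}$ to appear in $f$); hence $|f\cap\bar{f}|<|f|$ and $Q>0$. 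This distinction is precisely what the corollary turns on. Under your premise, for $n\equiv 1,2\pmod 4$ the non-orientable maximal-genus embedding would also attain $Q=0$ and tie for worst, contradicting the statement being proved; and for $n\equiv 0,3\pmod 4$ the actual task is not to show that the non-orientable single-face embedding reaches the absolute minimum $Q=0$ (it does not), but to show it is worse than the two-face orientable maximal-genus embedding, which the paper does by an edge-twist surgery that merges the two faces while conserving self-intersections.

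A secondary gap is in the best case. Your assertion that a self-intersecting face in a minimum-genus embedding can always be ``uncrossed'' to strictly lower the genus is not justified, and if it held in full generality it would be strictly stronger than what the paper establishes: the paper's argument---a face with a self-intersection has boundary length at least $8$, so $2m\geq 3(\ell-1)+8$ as in (\ref{eq:start}), which pushes the genus above the minimum---fails for orientable minimum-genus embeddings with $n\equiv 2,5\pmod{12}$, and the proof survives only because those $n$ are $\equiv 1,2\pmod 4$, where the orientable minimum is already beaten on face count alone. You would need either that counting argument or an actual proof of the uncrossing claim; note also that Ringel--Youngs yields a genuine triangulation only for $n\equiv 0,3,4,7\pmod{12}$, so the non-triangular residues are not a marginal afterthought.
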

The best and worst surfaces of $K_n$ for the quantum walker are listed in the following table. 
%%%%
\begin{table}[hbtp]
\begin{center}
\begin{tabular}{r|ccc}
& $n\equiv 1,2\mod 4$ & $n\equiv 0,3\mod 4$, $n\neq 3,4,7$ & $n=3,4,7$ \\ \hline
Best (the minimal genus)& Non-ori & Ori and Non-ori & Ori \\
Worst (the maximal genus)& Ori & Non-ori & Non-ori
\end{tabular}
\label{table:bw}
\end{center}
\caption{The best and worst surfaces of $K_n$ for quantum walker: The genera for the best and worst surfaces are minimal and maximal genera, respectively. 
The best embedding is close to the triangulation, because $2|E|=3\{|E|-|V|+(2-(n-3)(n-4)/6) \}$ holds, where the right hand side is ``almost" $3|F|$, see (\ref{eq:EularF}). 
In particular, if $n\equiv 0,3\mod 4$ and $n\not\equiv 8,11\mod 12$, then every triangulation is the best embedding. Since every worst embedding is a single-face embedding, the comfortability of the worst embedding is $0$ for $n\equiv 1,2\mod 4$. }
\end{table}

\begin{proof}
Let us find that the best embedding and worst embedding of $K_n$ for a quantum walker are as follows. 
\begin{enumerate}
\item {\bf Best}: 
The expression (\ref{eq:comfa0}) indicates that a large number of faces (the first term) and a small number of self-intersections (the second term) make the quantum walker feel comfortable. 
Thus, we expect that a small genus embedding will be the best embedding because the small genus accomplishes both of them. 

First let us estimate the number of faces for minimum genus embedding. Combining Fact~\ref{prop:fact} with the Eular formula 
\begin{equation}\label{eq:EularF} 
|F|=|E|-|V|+2-\begin{cases} 2\gamma(K_n) & \text{: Orientable case,}\\ \tilde{\gamma}(K_n) & \text{: Non-orientable case,} \end{cases}
\end{equation} 
we have 

\begin{equation}\label{eq:exgenus}
\left\{ \,
    \begin{aligned}
    & |F^{ori}|=|F^{non-ori}|\quad  :n\equiv 0,3\mod 4 \text{ and } n\neq 3,4,7,\\
    & |F^{ori}|<|F^{non-ori}|\quad  :n\equiv 1,2\mod 4,\\
    & |F^{ori}|>|F^{non-ori}|\quad  :n=3,4,7.
    \end{aligned}
\right.
\end{equation}
% \begin{itemize}
% \item $n\equiv 0,3,4,7\mod 12$ and $n\neq 3,4,7$:  
% \begin{align*}
%  2\gamma(K_n) &=\frac{1}{6}(n-3)(n-4)=\tilde{\gamma}(K_n)
% \end{align*}
% \[ \Rightarrow |F^{ori}|=|F^{non-ori}|.\]
% \item $n\equiv 1,6,9,10\mod 12$: 
% \begin{align*}
% 2\gamma(K_n) &=\frac{1}{6}(n-3)(n-4)+1
% >\frac{1}{6}(n-3)(n-4)=\tilde{\gamma}(K_n)
% \end{align*} 
% \[\Rightarrow |F^{ori}|<|F^{non-ori}|.\]
% \item $n\equiv 2,5\mod 12$:
% \[
% 2\gamma(K_n) =\frac{1}{6}\left\{(n-3)(n-4)+4\right\}+1
% >\frac{1}{6}\left\{(n-3)(n-4)+4\right\}=\tilde{\gamma}(K_n)  
% \]
% \[ \Rightarrow |F^{ori}|<|F^{non-ori}| \]
% \item $n\equiv 8,11$:
% \[2\gamma(K_n) =\frac{1}{6}\left\{(n-3)(n-4)+4\right\}+1
% =\tilde{\gamma}(K_n)\]
% \[\Rightarrow |F^{ori}|=|F^{non-ori}|\]
% \item $n=7$:
% \[ 2\gamma(K_n)=2<3=\tilde{\gamma}(K_n) \]
% \[ \Rightarrow |F^{ori}|>|F^{non-ori}| \]
% \item $n=3,4$:
% \[2\gamma(K_n)=\tilde{\gamma}(K_n)=0 \]
% \begin{equation}\label{eq:exgenus} 
% \Rightarrow \text{the embbeding on the sphere}.
% \end{equation}
% \end{itemize}

Secondly, let us check whether the minimal genus embedding has a self-intersection. 
In the following, let us confirm that 
\begin{center}
there are no self-intersections in the minimum genus embeddings on both ``non-orientanble surfaces" and ``orienable surfaces except $n\equiv 2,5\mod 12$".
\end{center}
Let $n$, $m=n(n-1)/2$ and $\ell$ be the numbers of vertices, edges, and faces for the resulting embedding of $K_n$, respectively. 
Assume that there is a face with self-intersection in the embedding of $K_n$. 
Note that the boundary length of a face with a self-intersection must be at least $8$. 
Then the following inequality holds.
\begin{equation}\label{eq:start} 
2m\geq 3\times (\ell-1)+8\times 1. 
\end{equation}
\begin{enumerate}
\item Non-orientable case: 
Let $k$ be the genus of the underlying closed surface of an embedding with self-intersections. 
By the Euler formula and (\ref{eq:start}), we have 
\begin{align}
k &= 2-n-m+\ell \notag \\
&\geq 2-n+\frac{m}{3}+\frac{5}{3} \notag\\
&= \frac{1}{6}\left\{ (n-3)(n-4)+10 \right\}.
\label{eq:orieva}
\end{align}
This inequality is equivalent to
\begin{equation*}
k\geq \begin{cases}
\tilde{\gamma}(K_n)+10/6 & \text{: $n\equiv 0,1,3,4\mod 6$ and $n\neq 7$,}\\
(\tilde{\gamma}(K_n)-4/6)+10/6 & \text{: $n\equiv 2,5\mod 6$,}\\
(\tilde{\gamma}(K_n)-1)+10/6 & \text{: $n=7$,}
\end{cases}
\end{equation*}
which implies 
\begin{equation}\label{eq:k}
k>\tilde{\gamma}(K_n)
\end{equation} 
and embeddings on non-orientable surfaces with the minimal genus have no self-intersections.  
Subsequently, the second term of (\ref{eq:comfa0}) for the minimal genus embedding on the non-orientable surface is reduced to $0$. 
\item Orientable case: 
By the Euler formula for the orientable case, by replacing $k$ with $2g$ in (\ref{eq:orieva}), 
\begin{equation}
2g\geq \frac{1}{6}\left\{ (n-3)(n-4)+10 \right\}, 
\end{equation}
which is equivalent to 
\begin{equation*}
2g\geq \begin{cases}
2\gamma(K_n)+10/6 & \text{: $n\equiv0,3,4,7\mod 12$}\\
(2\gamma(K_n)-1)+10/6 & \text{: $n\equiv 1,6,8,10\mod 12$}\\
2\gamma(K_n) & \text{: $n\equiv 2,5\mod 12$}\\
(2\gamma(K_n)-4/6)+10/6 & \text{: $n\equiv 8,11\mod 12$.}
\end{cases}
\end{equation*}
Then we have 
\begin{equation}\label{eq:g}
g>\gamma(K_n)
\end{equation}
with the exception of $n\equiv 2,5\mod 12$.  
Thus, it is ensured that there are no interactions in the minimum genus embedding, except $n\equiv 2,5\mod 12$. 
Thus, the second term of (\ref{eq:comfa0}) for minimal genus embedding on the orientable surface is reduced to $0$ except $n\equiv 2,5\mod 12$. 
\end{enumerate}
Thus combining (\ref{eq:exgenus}) with (\ref{eq:k}) and (\ref{eq:g}), we obtain that the best embedding of the comfortability is the minimum genus embedding on 
\begin{equation}\label{eq:bestKn}
\begin{cases}
\text{both orientable and non-orientable surfaces} & \text{: $n\equiv0,3\mod 4 $ and $n\neq 3,4,7$}\\
\text{non-orientable surface} & \text{: $n\equiv 1,2\mod 4$}\\
\text{orientable surface} & \text{: $n=3,4,7$.}
\end{cases}  
\end{equation}
 
\item {\bf Worst}: 
The expression (\ref{eq:comfa0}) tells us that the small number of faces (the first term) and the large  number of self-intersections (the second term) make the quantum walker feel uncomfortable. 
Thus we expect that the large genus embedding will be the worst embedding because the large genus accomplishes both of them. 

The number of faces of the maximal genus of orientable and non-orientable surfaces are 
\[ |F_M^{ori}|=\begin{cases} 1 & \text{: $n\equiv 1,2\mod 4$}\\ 2 & \text{: $n\equiv 0,3\mod4$} \end{cases}\text{ and } |F_N^{non-ori}(K_n)|=1,\]
respectively. 
\begin{enumerate}
\item $n\equiv 1,2\mod 4$ case:
If the underlying surface is orientable, then there is only one face, and every boundary face has a self-intersection, which means that comfortability is reduced to $0$ by (\ref{eq:comfa0}).  
On the other hand, if the underlying surface is non-orientable, then there must exist at least one edge with no self-intersection in the face, which implies that the comfortability is non-zero by (\ref{eq:comfa0}). 
Thus the worst embedding is the maximal genus embedding on the orientable surface. 
\item $n\equiv 0,3\mod 4$ case:
If the underlying surface is orientable, then the number of faces is $2$. Then the boundary edges of the two faces have no self-intersections. 
If a twist is inserted into one of the edges of the boundary edges so called the edge twisted surgery~\cite{GT,MT}, then the two faces are merged into a single face, conserving the self-intersected edges, and the resulting surface becomes non-orientable. 
This single-face embedding on the non-orientable surface is worse than the embedding with double-face on the orientable surface. 
\end{enumerate}
The worst embedding of the comfortability is the maximal genus embedding on 
\begin{equation}\label{eq:worstKn}
\begin{cases}
\text{non-orientable surfaces} & \text{: $n\equiv0,3\mod 4 $}\\
\text{orientable surface} & \text{: $n\equiv 1,2\mod 4$}
\end{cases}  
\end{equation}
\end{enumerate}
Therefore, (\ref{eq:bestKn}) and (\ref{eq:worstKn}) lead to the desired conclusion. 
\end{proof}
%%%
\subsection{Observation 2: Comfortability on the island.}
The comfortability on the island $\mathcal{E}|_{A_{is}}$ is proportional to the first term of (\ref{eq:1st}), 
which will be shown in the proof of Theorem~\ref{thm:comf}.
Then, let us estimate the first term of (\ref{eq:1st}) using a combinatorial method. 

Set \[h(x)=x \frac{1+a^x}{1-a^x} \]
for $0<|a|<1$. 
Here we define $h(0)$ as 
\[ h(0)=\lim_{x\downarrow 0}h(x)=\frac{2}{\log|a|}. \]
Let $F=\{f_1,\dots,f_r\}$, with $|f_1|\geq |f_2|\geq \cdots \geq |f_r|$ be the set of underlying faces, where $r=|F|$. 
Then the first term of (\ref{eq:1st}) in Theorem~\ref{thm:comfa>0} can be reexpressed by
\[ \frac{1}{|A|}\;\frac{2+|b|^2}{|b|^2}\sum_{x\in\{|f_1|,\dots,|f_r|\}}h(x).\]
Note that $|A|=|f_1|+\cdots+|f_r|$. Then the boundary lengths of $F$ provide an integer partition, which is bijective to the Young diagram. 
Thus in the following, we consider the integer partition $\lambda \vdash |A|$ which increases the first term. 
Let us summarize the important properties of $h(x)$ for the above consideration as follows. 
This proof can be obtained by straightforward calculations.
\\

\noindent {\bf Properties of $h(x)$:}
\begin{enumerate}
\item For $\ell,m\in\mathbb{N}$,
\[ h(\ell+m)<h(\ell)+h(m)\;(<h(\ell+m)+h(0)\;) \]
\item For $\ell_j,m_j\in\mathbb{N}$ ($j\in\{1,2\}$) with $\ell_1+m_1=\ell_2+m_2$ and $|\ell_1-m_1|<|\ell_2-m_2|$, then 
\[ h(\ell_1)+h(m_1)<h(\ell_2)+h(m_2). \]
\end{enumerate}
For $\lambda=(x_1,x_2,\dots,x_r)\vdash |A|$ with $x_1\geq x_2\geq \cdots\geq x_r$, we can impose the condition $x_r\geq 3$ because the length of each face is longer than $2$.  
For such a Young diagram $\lambda=(x_1,\dots,x_r)$, let us put $Q(\lambda)$ by 
\[ Q(\lambda)=\sum_{x\in \{x_1,\dots,x_s\}} h(x). \]
We call this the energy of all the islands. 
Define the partial order $\lambda_1,\lambda_2\vdash |A|$ as   $\lambda_1>\lambda_2$ if and only if $Q(\lambda_1)>Q(\lambda_2)$. 
According to Property (1), a large length of the Young diagram increases $Q(\cdot)$. 
According to Property (2), the bias in row size also increases $Q(\cdot)$. 
Thus for any Young diagram $\lambda\vdash |A|$, we have 
\[ [\;|A|\;]\leq \lambda\leq \begin{cases} 
[3,3,\dots,3] & \text{: $|A|=0\mod {3}$}, \\
[4,3,\dots,3] & \text{: $|A|=1\mod {3}$}, \\
[5,3,\dots,3] & \text{: $|A|=2\mod {3}$}.
\end{cases}\] 
Then we immediately obtain the following corollary.  
\begin{corollary}\label{cor:tri}
Let $G$ be a connected abstract graph. 
If there is a triangulation in the embeddings of $G$, then it is the best embedding. 
\end{corollary}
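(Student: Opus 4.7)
The plan is to exploit the explicit formula (\ref{eq:1st}) directly: write the expected comfortability as a difference of two non-negative quantities---the \emph{island term} and the \emph{self-intersection term}---and show that a triangulation simultaneously maximises the first and annihilates the second over all embeddings of $G$.

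I would first argue that for a triangulation the self-intersection term of (\ref{eq:1st}) vanishes. In a triangulation every face has $|f|=3$, and for a (loopless) connected abstract graph any facial walk of length three must traverse three distinct arcs no one of which is the inverse of another: if, say, the walk $e_1,e_2,e_3$ satisfied $e_2=\bar e_1$, then after two steps we are back at $o(e_1)$ and $e_3$ would be forced to be a loop at $o(e_1)$, which is excluded. Therefore $f\cap\bar f=\emptyset$ for each face of a triangulation, and the self-intersection term of (\ref{eq:1st}) is identically zero.

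For the island term, the task reduces to showing that, over all Young diagrams $\lambda=(x_1,\dots,x_r)\vdash|A|$ with $x_i\ge 3$, the functional $Q(\lambda)=\sum_i h(x_i)$ is maximised by $\lambda=[3,3,\dots,3]$. I would bypass the iterative use of Properties (1) and (2) in favour of a single pointwise inequality $h(x)/x\le h(3)/3$ for every integer $x\ge 3$. Since $h(x)/x=(1+a^x)/(1-a^x)$, a one-line derivative yields
\[
\frac{d}{dx}\frac{1+a^x}{1-a^x}=\frac{2a^x\log a}{(1-a^x)^2}<0\qquad(0<a<1),
\]
so $h(x)/x$ is strictly decreasing on $(0,\infty)$. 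Summing the resulting bound gives
\[
Q(\lambda)=\sum_{i=1}^r x_i\cdot\frac{h(x_i)}{x_i}\;\le\;\frac{h(3)}{3}\sum_{i=1}^r x_i=\frac{|A|}{3}\,h(3),
\]
with equality precisely when $x_1=\cdots=x_r=3$, i.e.\ for a triangulation. Multiplication by the positive common prefactor $(2+|b|^2)/(|A||b|^2)$ recovers the island term of (\ref{eq:1st}), so the triangulation strictly maximises it.

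Combining the two observations, for any embedding one has $\mathbb{E}[\mathcal{E}]=(\text{island term})-(\text{self-intersection term})\le(\text{island term})\le(\text{island term of the triangulation})=\mathbb{E}[\mathcal{E}_{\mathrm{tri}}]$, which is the desired conclusion. The only delicate step I foresee is the non-degeneracy claim underlying $f\cap\bar f=\emptyset$ for triangular faces; this is a short case analysis that is routine in the loopless setting standard for triangulations, and I do not expect it to constitute a real obstacle.
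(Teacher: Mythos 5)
Your proof is correct, and it uses the same overall decomposition as the paper — split $\mathbb{E}[\mathcal{E}]$ in (\ref{eq:1st}) into the island term and the (non-negative) self-intersection term, show a triangulation maximises the former and annihilates the latter — but the way you establish the two sub-claims is genuinely different and, in my view, tighter. For the island term the paper goes through the auxiliary function $h(x)=x(1+a^x)/(1-a^x)$, its Properties (1) and (2), and a partial order on Young diagrams $\lambda\vdash|A|$ culminating in the assertion $\lambda\le[3,3,\dots,3]$; making that chain fully rigorous requires some care, since the splitting move $h(\ell+m)<h(\ell)+h(m)$ can pass through parts of size $1$ or $2$ that do not correspond to admissible face lengths. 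Your single pointwise inequality $h(x)/x\le h(3)/3$ for $x\ge3$, obtained from the monotonicity of $(1+a^x)/(1-a^x)$, bypasses all of that and immediately gives $Q(\lambda)\le\frac{|A|}{3}h(3)$ with equality exactly at $[3,\dots,3]$; it also transparently handles the fact that a triangulation forces $3\mid|A|$. For the second sub-claim the paper simply asserts that a triangulation has no self-intersections, whereas you supply the (easy but genuinely needed) case analysis showing a facial walk of length $3$ cannot contain both $e$ and $\bar e$ in a loopless graph. The only thing to keep in mind is that your monotonicity argument uses $0<a<1$, which is exactly the regime of Theorem~\ref{thm:comfa>0}, so nothing is lost. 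Overall your route is more self-contained and arguably closes a small gap in the paper's own justification of the island-term maximisation.
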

\begin{proof}
If there is a triangulation in the embeddings, then it is the most comfortable on all the islands. Moreover, since there are no self-intersections in the triangulation, the triangulation is the most comfortable embedding according to Corollary~\ref{cor:main}. 
\end{proof}
If $n\equiv 0,3,4,7\mod 12$, the minimum genus embedding of $K_n$ becomes the triangulation, because $3|F|=2|E|$ holds by Fact~\ref{prop:fact}. Thus, we can see that  Corollary~\ref{cor:tri} is consistent with Corollary~\ref{cor:Kn} for $K_n$. \\

\noindent{{\bf Example: }}
The Hasse diagram in Fig.~\ref{fig:Hasse} describes the case for $|A|=12$ using only {\bf Properties of $h(x)$}. 
\begin{figure}[hbtp]
    \centering
    \includegraphics[keepaspectratio, width=120mm]{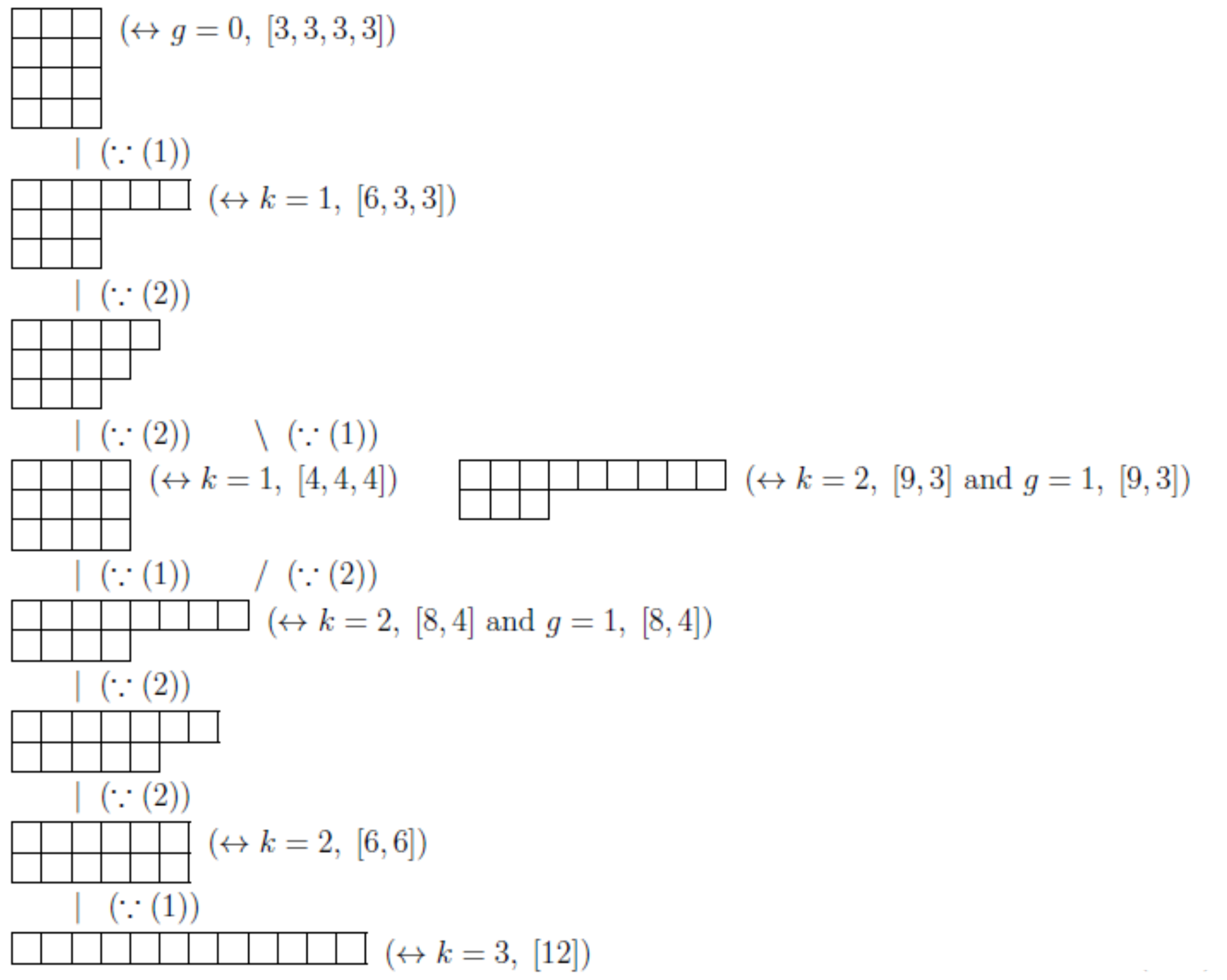}
    \caption{The Hasse diagram of the comfortability on the island of the embeddings.}
    \label{fig:Hasse}
\end{figure}
% \begin{align}\label{eq:Hasse}
% &{\tiny \ydiagram{3,3,3,3}} \;\;(\leftrightarrow g=0, \;[3,3,3,3])\notag\\
% & \qquad | \;\;\text{(by (1))}\notag\\
% &{\tiny \ydiagram{6,3,3}}\;\;(\leftrightarrow k=1,\; [6,3,3])\notag\\
% & \qquad | \;\;\text{(by (2))} \notag\\
% &{\tiny \ydiagram{5,4,3}}\;\;\notag\\
% & \qquad| \;\;\text{(by (2))}\qquad \backslash \;\;(\because (1)) \notag\\
% &{\tiny \ydiagram{4,4,4}}\;\;(\leftrightarrow k=1,\; [4,4,4])\qquad {\tiny \ydiagram{9,3}}\;\;(\leftrightarrow k=2, \;[9,3]\text{ and }g=1,\;[9,3])\notag\\
% & \qquad| \;\;\text{(by (1))}\qquad / \;\;(\because (2))\notag\\
% &{\tiny \ydiagram{8,4}}\;\;(\leftrightarrow k=2,\; [8,4]\text{ and }g=1,\;[8,4])\notag\\
% & \qquad| \;\;\text{(by (2))}\notag\\
% & {\tiny \ydiagram{7,5}} \notag \\
% & \qquad| \;\;\text{(by (2))}\notag\\
% & {\tiny \ydiagram{6,6}}\;\;(\leftrightarrow k=2,\; [6,6])\notag \\
% & \qquad| \;\;\;\text{(by (1))}\notag\\
% & {\tiny \ydiagram{12}}\;\;(\leftrightarrow k=3,\; [12])
% \end{align}
This Hasse diagram shows the partial order of the comfortability restricted to the islands for $K_4$'s embeddings. 
In this Hasse diagram, the order of the Young diagrams $[9,3]$ and $[4,4,4]$ is not determined using only {\bf Properties} (1) and (2) of $h(x)$. However
it is possible to estimate the difference between 
$Q([9,3])$ and $Q([4,4,4])$. 
In fact, we have 
$Q[9,3]<Q[4,4]+h(0)$ by the following Hasse diagram:
\begin{align*}
Q([9,3])<Q([5,4,3])<Q([4,4,3,1])<Q([4,4,4])+h(0),
% {\tiny \ydiagram{9,3}} 
% &< {\tiny \ydiagram{5,4,3}}
% < {\tiny \ydiagram{4,4,3,1}}
% <{\tiny \ydiagram{4,4,4}}+h(0),  
\end{align*}
where all the inequalities are derived from {\bf Properties of $h(x)$} (1). On the other hand, we also have  
$Q[4,4]<Q[9,3]+h(0)$ by the following Hasse diagram:
\begin{align*}
Q([4,4,4])<Q([5,4,3])<Q([9,3])+h(0),
% {\tiny \ydiagram{4,4,4}} &< {\tiny \ydiagram{5,4,3}}<{\tiny \ydiagram{9,3}}+h(0),
\end{align*}
where the first and second inequalities are derived from Properties (2) and (1), respectively. 
Then we have 
\[ |Q([9,3])-Q([4,4,4])|<h(0)=2/|\log a|. \]

Note that according to Fig~\ref{fig:ranking} or Corollary~\ref{cor:main}, the comfortability of $``k=2,\;[8,4]"$ is greater than that of $``g=1,\;[8,4]"$, while these are the same in Fig.~\ref{fig:Hasse} for all the islands. The difference derives from the number of self-intersections; there are two points of the self-intersection in the enneagon for $g=1$ while there is only one point of the self-intersection in the enneagon for $k=2$. See Figure~\ref{fig:si}. The same reason can be applied to the case for $``k=2,\;[9,3]"$ and $``g=1,\;[9,3]"$. Therefore, the non-orientability decreases the self-intersection and increases the comfortability. 
\begin{figure}[hbtp]
    \centering
    \includegraphics[keepaspectratio, width=120mm]{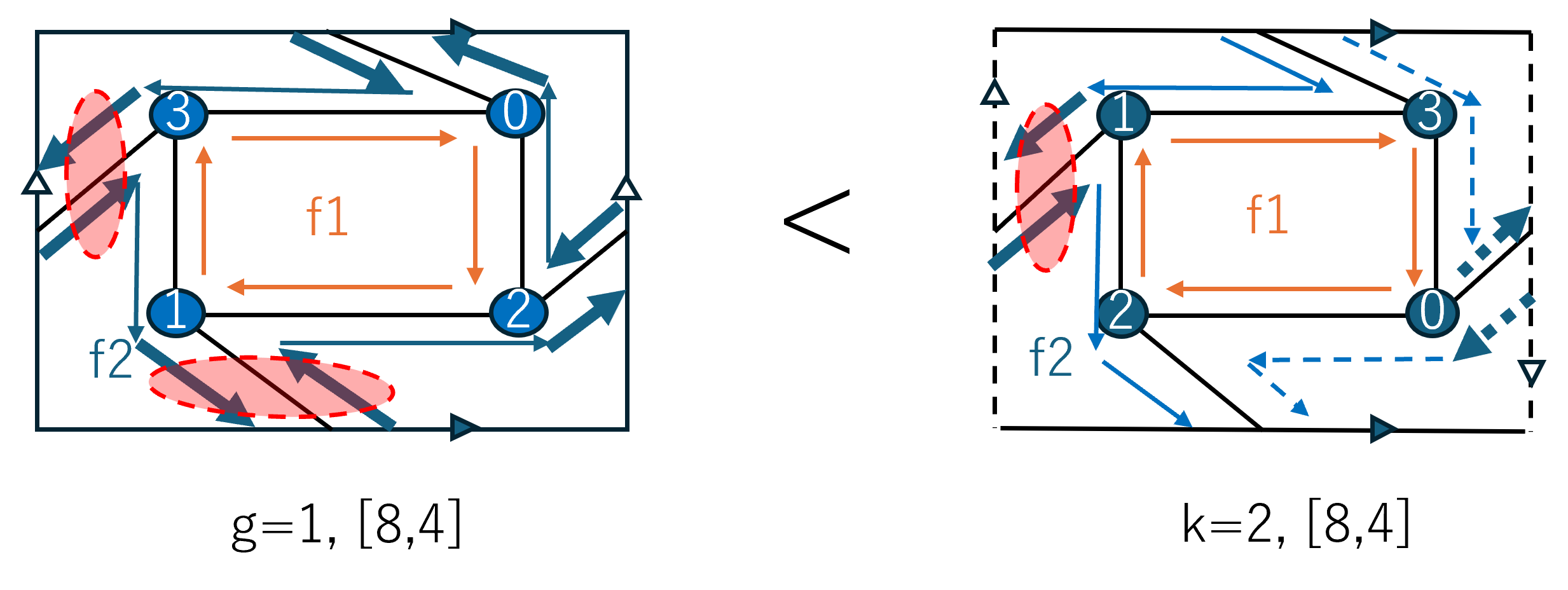}
    \caption{The self-intersection. Comparison between the embeddings on the torus and Klein bottle, whose faces are one octagon and one square. The numbers of self-intersections in the octagon at embeddings on the torus and Klein bottle are two and one, respectively. Thus the embedding on the Klein bottle is better than that on the torus in the setting of Corollary~\ref{cor:main}. }
    \label{fig:si}
\end{figure}
%%%%
\section{Quick review on the rotation system}\label{sect.RS}
In this section, we give a quick review on the rotation system following \cite{GT,MT}, which will be important to construct our quantum walk model. \\

\noindent {\bf Abstract graph.}
Let $G(X,A)$ be a symmetric digraph with the set of the vertices $X$ and the set of the symmetric arcs $A$, that is, $e\in A$ if and only if $\bar{e}\in A$, where $\bar{e}$ is the inverse arc of $e$. 
% Then an arc $e\in A$ is sometimes represented by $(o(e),t(e))$ to emphasize the origin and terminal vertices. 
The support edge of $e\in A$ is denoted by $|e|=|\bar{e}|$. 
The set of edges (which are undirected) is defined by $E=\{|e| \;|\; e\in A\}$. 
In this paper, the underlying graph is assumed to be simple without self-loops. 
For each arc $e\in A$, 
the terminus and origin of $e$ are denoted by $t(e)\in X$ and $o(e)\in X$.
The arc $e\in A$ with $o(e)=x$, $t(e)=y$ is also represented by $e=(x,y)\in A$. 
Let $A_x\subset A$ be the set of arcs whose terminal vertices are $x\in X$, that is, 
$A_x=\{e\in A \;|\; t(e)=x\}$.\\

\noindent {\bf Rotation.} 
A cyclic permutation on a countable set $W$ is 
 a bijection map $\pi: W\to W$ such that $\pi(\omega)\neq \omega$ for any $\omega\in W$. 
On each vertex $x\in X$, a cyclic permutation on $A_x$, $\rho_x: A_x\to A_x$, is assigned. The extension of $\rho_x$ to the whole arc set $A$ is given by
\[ \tilde{\rho}_x(e)=
\begin{cases}
\rho_x(e) &\text{: $e\in A_x$,}\\ 
e & \text{: otherwise.}
\end{cases}
\]
The {\it rotation} $\rho$ on the symmetric arc set $A$ is defined by 
\[ \rho=\left(\prod_{x\in X}\tilde{\rho}_x\right). \]

\noindent{\bf Twist.}
The {\it designation} for the twist of each edge is a map $\tau: A\to \mathbb{Z}_2$ such that $\tau(e)=\tau(\bar{e})$ for any $e\in A$. If  $\tau(e)=0$, the edge $e\in A$ is called {\it type-$0$}, otherwise, called {\it type-$1$}. The type-$1$ edge is regarded as twisted. \\

\noindent{\bf Rotation system.}
The triple $(G,\rho,\tau)$ is called the rotation system of $G$. 
$G$ alone is sometimes called an abstract graph. 
Let $\sigma$ be the transposition such that $\sigma(e)=\bar{e}$ for any $e\in A$.
A facial walk induced by the rotation system is the representative of the sequences of arcs identified by cyclic permutation and inverse
\[(e_0,e_1,\dots,e_{r-1}) \] 
with some natural number $r>2$   
such that 
\begin{equation}\label{eq:facialwalk}
e_{j+1}=
\begin{cases} 
\sigma (\rho (e_j)) & \text{: $\sum_{k=0}^{j}\tau(|e_j|)=0$}\\
\sigma (\rho^{-1}(e_j)) & \text{: $\sum_{k=0}^{j}\tau(|e_j|)=1$}
\end{cases}
\end{equation}
for $j=0,1,\dots,r-1$ 
in modulus of $r$. 
See Figure~\ref{fig:face} for a rotation system of $K_4$ and its facial walks. 
% Here the additions of $\tau(|e_j|)$'s are in modulus of $2$, which represent the M{\"o}bius band. 
The set of facial walks is denoted by $F$. 
Every $f\in F$ gives the boundary of each face of the resulting two-cell embedding of the closed surface.  
Indeed, it is very starting point for our paper that the orientation system determines the embedding on a closed surface in the following meaning. 
\begin{theorem}[\cite{GT,MT,NakamotoOzeki}]
Every rotation system on graph $G$ defines (up to equivalence of embeddings) a unique locally oriented graph embedding~\cite{GT}. Conversely, every locally oriented graph embedding defines a rotation system for $G$. 
\end{theorem}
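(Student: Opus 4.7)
The plan is to construct explicit maps in both directions between rotation systems $(G,\rho,\tau)$ and locally oriented $2$-cell embeddings of $G$, and then to verify that the two compositions agree (up to the stated equivalence). The facial-walk formula (\ref{eq:facialwalk}) will serve as the combinatorial bridge.

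For the forward direction, I would build from $(\rho,\tau)$ a CW-complex $X(\rho,\tau)$: put a $0$-cell at each $x \in X$, a $1$-cell for each $|e| \in E$ attached at its endpoints, and, for each facial walk $f = (e_0,\ldots,e_{r-1})$ produced by (\ref{eq:facialwalk}), an $r$-gonal $2$-cell glued along the closed walk $e_0 e_1 \cdots e_{r-1}$. The twist $\tau$ dictates whether the two face-polygons incident to an edge $|e|$ agree or disagree in their induced local orientations across that edge. The non-trivial step is then to verify that $X(\rho,\tau)$ is a closed surface: at an interior point of a $2$-cell this is immediate; at an interior point of an edge $|e|$ it follows because each arc $e\in A$ appears in exactly one facial walk by construction, so precisely two face-polygons meet along $|e|$ and the link is a circle regardless of $\tau(|e|)$; and at a vertex $x$ the link is the cycle obtained by chaining face-polygons around $x$, whose cyclic order is exactly $\rho_x$ by (\ref{eq:facialwalk}). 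Hence the link at every point is $S^1$, so $X(\rho,\tau)$ is a closed surface, and by construction the embedding $G \hookrightarrow X(\rho,\tau)$ is $2$-cell with the prescribed rotation system.

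For the reverse direction, given a locally oriented $2$-cell embedding $i\colon G \hookrightarrow \mathbb{F}^2$, I would define $\rho_x$ at each vertex $x$ by reading the cyclic order of the arcs of $A_x$ around a small oriented disk neighborhood of $x$, and define $\tau(|e|)$ by comparing the chosen local orientations at the two endpoints of $|e|$ via parallel transport through a tubular neighborhood of the edge (agreement gives $0$, disagreement gives $1$). A direct unwinding of (\ref{eq:facialwalk}) shows that the combinatorial facial walks so produced traverse exactly the boundaries of the $2$-cells of the embedding. For uniqueness, two embeddings giving the same rotation system can be matched cell-by-cell: the vertex and edge matchings are tautological, the face matching is dictated by the common facial walks, and these assemble into a homeomorphism of surfaces that restricts to the identity on $G$, which is the notion of equivalence in the statement.

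The delicate part I anticipate is the non-orientable bookkeeping. Changing the chosen local orientation at a single vertex $x$ replaces $\rho_x$ by $\rho_x^{-1}$ and flips $\tau(|e|)$ for every $e \in A_x$; one must verify that the forward construction is invariant under this \emph{vertex-flip} operation, so that flip-equivalent rotation systems yield homeomorphic embedded surfaces, and that conversely the reverse map is only well-defined modulo such flips. Packaging the topological data as a locally oriented embedding (rather than a bare embedding) is the clean way to absorb this ambiguity, so that both maps become honest bijections and the two constructions are mutually inverse by the link analysis above. Once the vertex-flip symmetry is reconciled with the notion of equivalence of locally oriented embeddings, the theorem follows.
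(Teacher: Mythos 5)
This theorem is quoted in the paper from the references \cite{GT,MT,NakamotoOzeki} and is given no proof there, so there is nothing internal to compare against; your proposal has to be judged against the standard textbook argument, and it matches it. The outline is essentially correct: build the CW-complex from the facial walks of (\ref{eq:facialwalk}), verify the link of every point is a circle (trivial in the interior of a $2$-cell, a two-sided gluing along each edge, and a single cycle at each vertex because $\rho_x$ is a single cyclic permutation of $A_x$), read the rotation and the twist off a locally oriented embedding for the converse, and quotient both sides by the vertex-flip $(\rho_x,\tau|_{A_x})\mapsto(\rho_x^{-1},\tau|_{A_x}+1)$, which is exactly the paper's Operation \textbf{iii} and is what ``up to equivalence'' absorbs. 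One imprecision worth fixing: from ``each arc appears in exactly one facial walk'' you conclude that ``precisely two face-polygons meet along $|e|$,'' but when a facial walk has a self-intersection at $|e|$ (it traverses both $e$ and $\bar e$, a situation the paper explicitly allows and exploits in Theorem~\ref{thm:comfa>0}), only one polygon meets that edge, along two of its sides. The correct statement is that exactly two sides of $2$-cells are glued along $|e|$, which still gives a circle link, so the argument survives; also note that in the non-orientable case the successor map in (\ref{eq:facialwalk}) depends on the accumulated twist parity, so the clean bookkeeping is on arc--sides (equivalently on the double cover $G^\tau$) rather than on arcs, which is presumably what you mean by ``by construction'' but should be said.
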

The Eular's formula gives the genus of the embedding:
\[ genus=\begin{cases} g=\frac{1}{2}(|E|-|V|-|F|)+1 & \text{: if the surface is orientable,}\\
k=|E|-|V|-|F|+2 & \text{: if the surface is non-orientable.}
\end{cases} \]
The graph $G$ contains a cycle which has odd number of type-$1$ edges if and only if the underlying surface is non-orientable. 
The following method is useful to detect the orientability of the underlying surface\cite{GT,MT}. See also Figure~\ref{fig:oridetection}: 
\\

\noindent {\bf Operation.}
\begin{enumerate}[label=\textbf{\roman*.}]
\item Choose an arbitrary spanning tree $T$ of $G$;
\item Choose a arbitrary vertex $x$ of $T$; $x$ is called the root; 
\item For any vertex which is adjacent to $x$ with the type-$1$ edge, say $y$, the orientation is reversed from $\rho_y$ to $\rho_y^{-1}$ and for any edge incident to $y$ in $G$, say $e$, the type is reversed from $\tau(e)$ to  $\tau(e)+1$; 
\item Continue this process until all the types of edges in the tree are type-$0$;
\item The surface is non-orientable if and only if there exists a type-$1$ edge in $G\setminus T$ after the above updated orientation system. 
\end{enumerate}
The following remark will play an important role in the construction of our quantum walk model. 
\begin{remark}
For any closed walk $(e_0,e_1,\dots,e_{s-1})$ on $G$, {\rm {\bf Operation   iii}} preserves the value of $\sum_{j=0}^{s-1}\tau(e_j)$, which ensures that the embedding is preserved by {\rm {\bf Opeation iii}}. 
\end{remark}
After the underlying closed surface is determined by the above procedures, for every vertex $x$, we first arrange a vertex $x$ and its ``half" edges connecting to $x$ clockwise according to the rotation $\rho_x$ on the surface. Secondly, if $x$ and $y$ are connected in $G$, then connect the corresponding half-edges each other without any crossing of edges so that the type of edge is conserved. See Figure~\ref{fig:embedding}.  
\begin{figure}[hbtp]
    \centering
    \includegraphics[keepaspectratio, width=80mm]{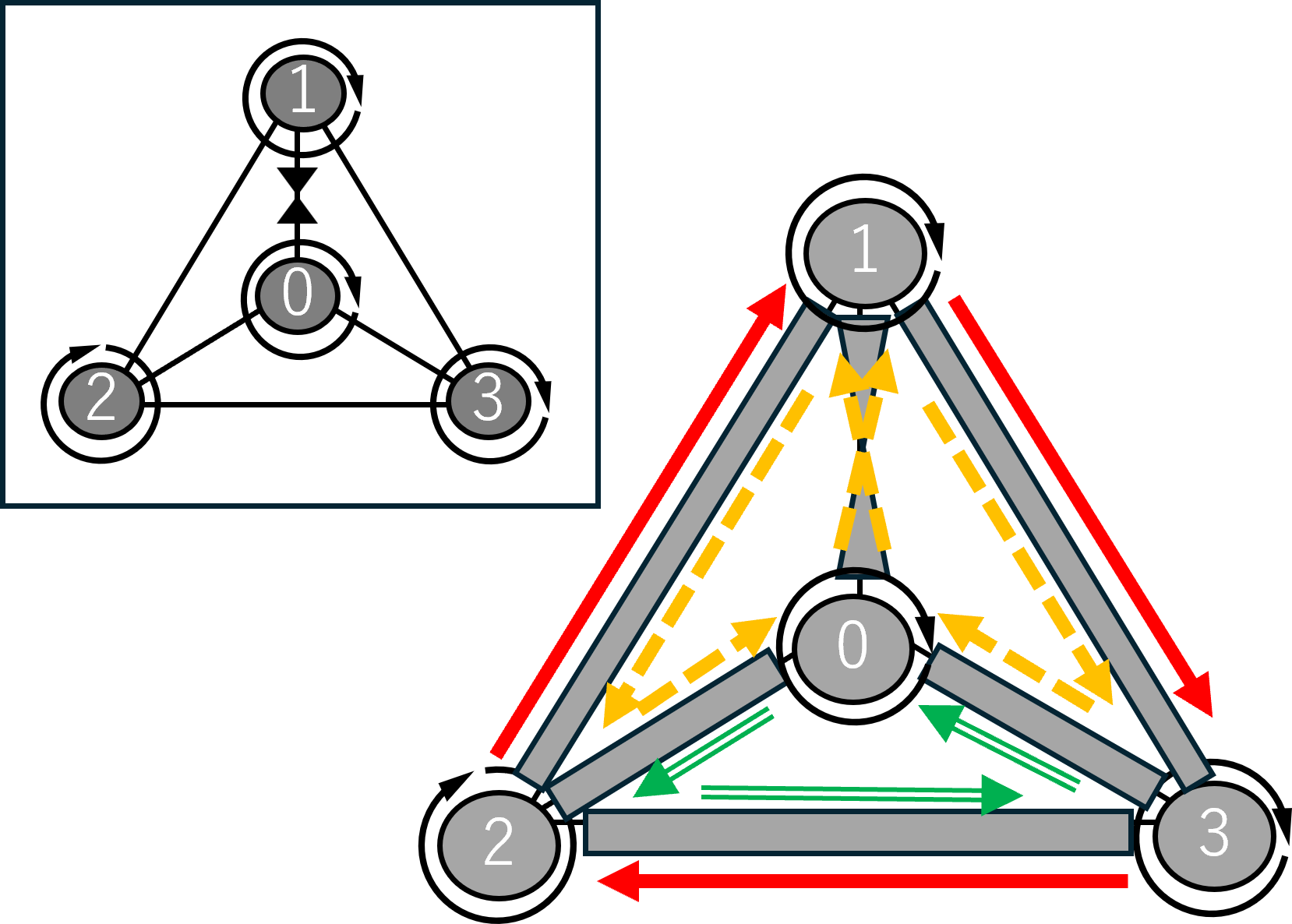}
    \caption{A rotation system of $K_4$ and facial closed walks: The rotation is assigned clockwise at each vertex, and the twist is assigned at the edge $\{0,1\}$. There are $3$ faces in this rotation system; $2$ triangles and $1$ hexagon.}
    \label{fig:face}
\end{figure}
\begin{figure}[hbtp]
    \centering
    \includegraphics[keepaspectratio, width=120mm]{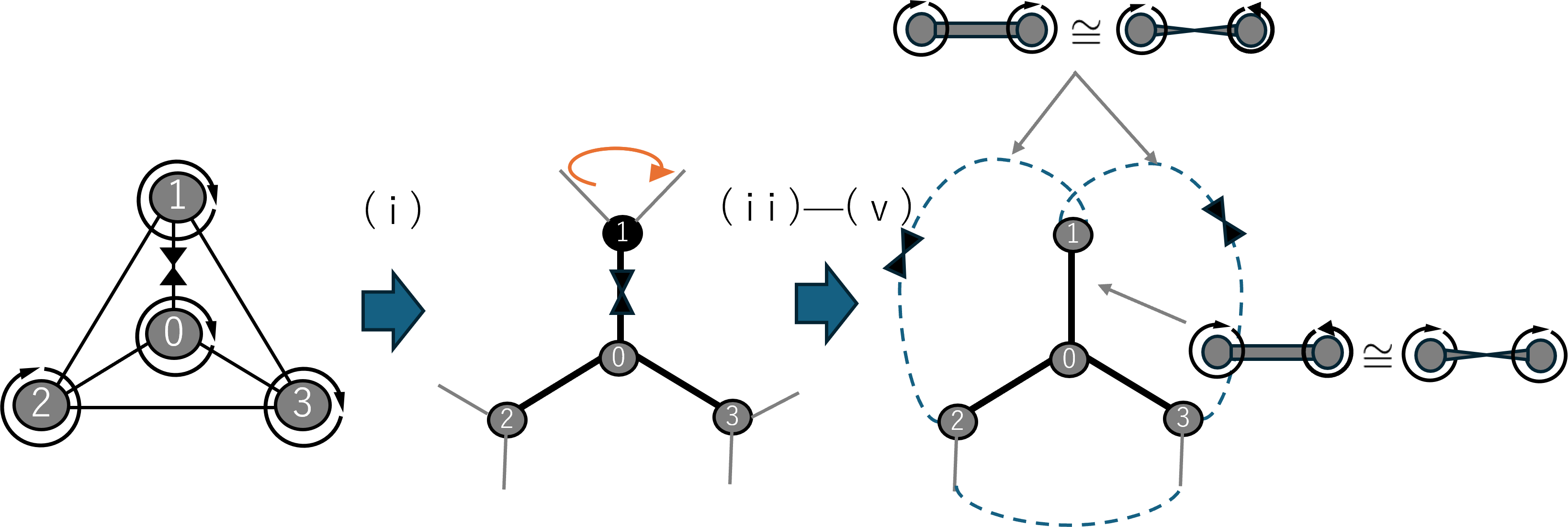}
    \caption{The detection of the orientability: In the rotation system, vertex $1$ is selected as the root of the spanning tree, and twist the vertex $1$ so that every type of edge in the tree become $0$. There are $2$ fundamental cycles containing a type-$1$ edge. Thus the closed surface must be non-orientable. By the Euler formula, the genus is $k=2-(|F|-|E|+|V|)=2-(3-6+4)=1$. Then the surface is the projective plane.  } 
    \label{fig:oridetection}
\end{figure}
\begin{figure}[hbtp]
    \centering
    \includegraphics[keepaspectratio, width=120mm]{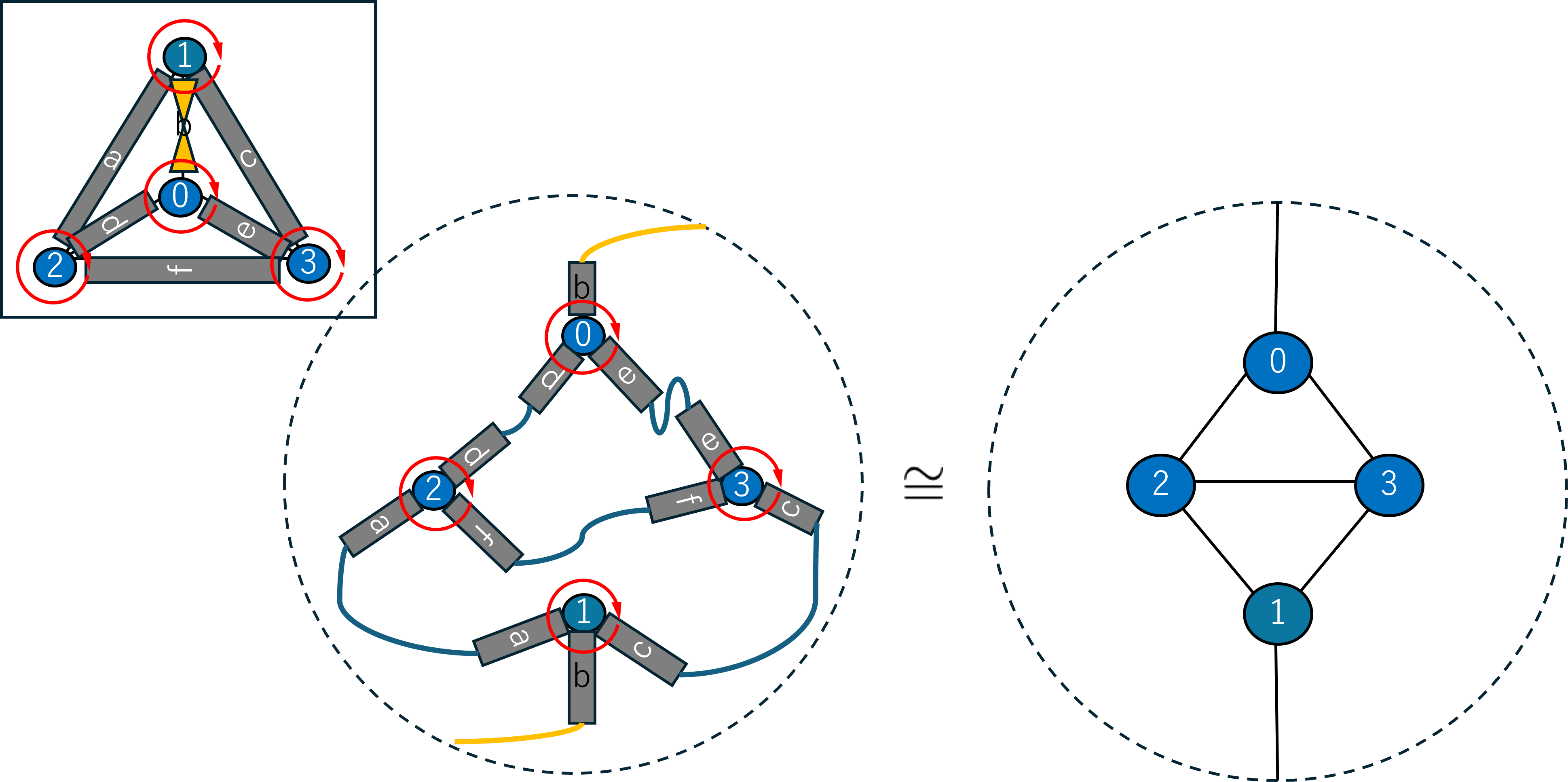}
    \caption{The drawing of $K_4$ on the projective plane: Diagonally located places on the dotted boundary are identified with each other. Incident half edges of each vertex are arranged clockwise so that its rotation is conserved. Connect corresponding half-edges without any crossing and every type-$0$ edge passes through the dotted boundary evenly, while every type-$1$ does so oddly. In this case, there are no crossings across the dotted boundary in every the type-$0$, while the type-$1$ edge crosses the dotted boundary once. }
    \label{fig:embedding}
\end{figure}

%%%%%
\section{Construction of quantum walk on the rotation system $(G,\rho,\tau)$}\label{sect:construction}
When a walker passes through a twisted edge on the rotation system $(G,\rho,\tau)$, the rotation of the endpoint vertex is reversed. 
%In spite of such a parity of the side, the walker cannot distinguish whether it is now running along the ``front side" or ``back side" of the graph.   
To reflect the parity of the ``sheet", that is, front/back, in the dynamics of our quantum walk, let us prepare the following notions.  
%%%%
\subsection{Double covering}\label{sect:doublecover}
The rotation system $(G,\rho^{-1},\tau)$ is called the chiral rotation system of $(G,\rho,\tau)$. 
To realize the parity and the rotation in the dynamics of the quantum walk model
we first prepare the two rotation systems which are chiral to each other and secondly join them by rewiring the twisted edges to the corresponding chiral vertices so that the original and its chiral rotations are conserved.  

More precisely, let us first set the double covering graph $G^\tau=(X^\tau,A^\tau)$ which is realized by the voltage assignment of $\tau:\mathbb{Z}_2\to A$ as follows.  
See also Figure~\ref{fig:FigBU} (b). 
\begin{enumerate}
\item The vertex set : $X^\tau=X\times \mathbb{Z}_2=X_0\sqcup X_1$, where $X_j=\{(x.j)\;|\; x\in X\}$ $(j\in\mathbb{Z}_2)$. 
\item The arc set : $(x,j)\in X^\tau$ and $(y,k)\in X^\tau$ are adjacent each other in $G^\tau$ if and only if $x$ and $y$ are adjacent in $G$ and $k=j+\tau(\;(x,y)\;)$. 
\end{enumerate}
Secondly, to reproduce the rotation $\rho$, the rotation $\rho$ is assigned to $X_0$, while the inverse rotation $\rho^{-1}$ is assigned to $X_1$. 
Such a resulting rotation system is called the {\it double covering} of $(G,\rho,\tau)$, which coincides with the rotation system    $(G^\tau,\rho\oplus \rho^{-1},\mathrm{id})$. 

It is possible to draw the abstract graph $G^\tau$ in the following way: 
\begin{align*}
\text{($*$)} &\text{ $X_0$ is placed in the ``left" } \\
&\text{ while $X_1$ is placed in the ``right".}
\end{align*}
\noindent We call the regions of the subgraphs in the left and right, the front and back sheets, respectively. 
Note that every vertex in the front (resp. back) sheet follows the rotation $\rho$ (resp. $\rho^{-1}$), respectively. Every arc crossing the boundary of the sheets $(\; (x,j),(y,k) \;)$ satisfies  $(x,y)\in A$ and $k\neq j$. 
See Figure~\ref{fig:FigBU} (b). 
%%%%%%%%%%%%%%
\begin{lemma}\label{lem:switch}
Under the drawing ($*$) of $G^\tau$, 
{\rm Operation {\bf iii}} to a vertex $y\in X$ in $(G,\rho,\tau)$ is equivalent to pulling $(y,0)$ and its chiral $(y,1)$ to their opposite sheets crossing the boundary and conserving the locations of the other vertices of $G^\tau$, and switching the labeling of each vertex. 
\end{lemma}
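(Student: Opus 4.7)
My plan is to show that Operation~iii on the abstract rotation system corresponds, at the level of the double cover, to the involution $\varphi: V(G^{\tau'})\to V(G^\tau)$ exchanging $(y,0)\leftrightarrow(y,1)$ and fixing every other vertex, and then to read this off in the drawing~$(*)$. Write $(G,\rho',\tau')$ for the modified rotation system, so that $\rho'_y=\rho_y^{-1}$ (others unchanged) and $\tau'(e)=\tau(e)+1$ for every arc $e$ incident to $y$ (others unchanged).

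The first step is to check that $\varphi$ is a rotation-system isomorphism. On arcs: for any $z$ adjacent to $y$ in $G$, the arc $((y,0),(z,\tau'(y,z)))$ of $G^{\tau'}$ has $\tau'(y,z)=\tau(y,z)+1$, so $\varphi$ sends it to $((y,1),(z,\tau(y,z)+1))$; in $G^\tau$ the vertex $(y,1)$ is joined to $(z,k)$ precisely when $k=1+\tau(y,z)$, so this is indeed an arc of $G^\tau$. The arcs based at $(y,1)$ in $G^{\tau'}$ are analogous, and every arc not incident to $y$ is fixed by $\varphi$. On rotations: in $G^{\tau'}$ the vertex $(y,0)$ carries $\rho'_y=\rho_y^{-1}$, which under $\varphi$ is transported to $(y,1)$ of $G^\tau$ where the prescription ``back sheet carries $\rho^{-1}$'' indeed assigns $\rho_y^{-1}$. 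The symmetric case $(y,1)\mapsto (y,0)$ goes through the same way, and cyclic permutations at all other vertices are untouched.

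Translating $\varphi$ into the drawing~$(*)$ yields the statement of the lemma: since vertices with second coordinate $0$ (resp.\ $1$) are drawn on the front (resp.\ back) sheet, the involution $\varphi$ literally transports the dot labeled $(y,0)$ across the sheet boundary to the right, simultaneously moves $(y,1)$ to the left, and relabels them by swapping the second coordinate, while every other dot stays put. The main obstacle in writing this up carefully is keeping the physical cyclic order of arcs around the moved dot distinct from the abstract rotation read off under~$(*)$: the physical order does not change during transport, but the convention ``left sheet reads $\rho$, right sheet reads $\rho^{-1}$'' reinterprets that physical order as its inverse after the move, which is exactly the Operation~iii flip $\rho_y\mapsto \rho_y^{-1}$. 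The same bookkeeping shows that an arc incident to $y$ that was drawn inside a single sheet becomes sheet-crossing after the move, and vice versa, producing the required type flip $\tau(e)\mapsto \tau(e)+1$.
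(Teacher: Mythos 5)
Your proposal is correct and follows essentially the same route as the paper: both arguments come down to observing that Operation~iii at $y$ lifts to the transposition of the two fibres $(y,0)\leftrightarrow(y,1)$ in the double cover (adjacency is preserved because $\tau'(y,z)=\tau(y,z)+1$ exactly compensates the swap of the second coordinate, and the rotation $\rho_y^{-1}$ lands on the sheet whose convention already reads $\rho^{-1}$), which in the drawing $(*)$ is realized by dragging the two dots across the sheet boundary while every other vertex stays put. Your write-up is somewhat more explicit than the paper's pictorial argument in that you name the isomorphism $\varphi$ and check arc- and rotation-compatibility separately, and you correctly isolate the one delicate point — that the unchanged physical cyclic order is reinterpreted as its inverse once the dot sits on the other sheet, and that sheet-crossing status (hence edge type) flips for precisely the edges incident to the moved vertex.
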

\begin{proof}
Note that the situation of a vertex $y$ with its connected edges $D_y=\{e_1,\dots,e_r\}$ in $(G,\rho,\tau)$ is equivalent to the situation in the drawing ($*$) of $G^\tau$ that all the edges of $D_y$ and its chiral edges of $D_{y'}$ with $\tau(e_j)=0$ are in the front sheet and back sheet, respectively, while all the edges with $\tau(e_j)=1$ cross the boundary. 
Operation {\bf iii} to a vertex $y$ switches the rotation and types of all its incoming and outgoing arcs. 
Then in the situation of $G^\tau$ with the drawing ($*$), the vertex $y$ and its chiral vertex $y'$ are pulled to the opposite sheets conserving the locations of the other vertices.
After this switching, 
the resulting drawing of $G^\tau$, the rotations of $y$ and $y'$, and also the types of all the connected arcs are switched. 
\end{proof}
\begin{proposition}
The rotation system $(G,\rho,\tau)$ is non-orientable if and only if $G^\tau$ is connected. 
\end{proposition}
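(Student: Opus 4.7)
The plan is to leverage the characterization of non-orientability recalled in Section~\ref{sect.RS}: the rotation system $(G,\rho,\tau)$ is non-orientable if and only if $G$ contains a cycle (equivalently a closed walk) with an odd number of type-$1$ edges. Under this reduction, the problem becomes purely combinatorial: show that $G^\tau$ is connected if and only if $G$ admits such an odd-$\tau$ closed walk.

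First I would establish a unique lifting property for $G^\tau$. By the defining adjacency $(x,j)\sim(y,k)\iff k=j+\tau(\,(x,y)\,)$, any walk $w=(e_0,\dots,e_{s-1})$ in $G$ starting at $x$ lifts, for each $j\in\mathbb{Z}_2$, to a unique walk $\tilde w$ in $G^\tau$ starting at $(x,j)$, with terminal vertex
\[
\Bigl(\,t(e_{s-1}),\ j+\sum_{i=0}^{s-1}\tau(e_i)\ \bmod 2\,\Bigr).
\]
This follows from the adjacency rule by induction on the length. In particular, the lift from $(x,0)$ of a closed walk at $x$ returns to $(x,0)$ when the $\tau$-sum is even and to $(x,1)$ when it is odd.

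From here the conclusion comes quickly. Assuming $G$ is connected (the standing assumption for graph embeddings), the projection $(x,j)\mapsto x$ shows that every connected component of $G^\tau$ surjects onto $X$, so $G^\tau$ has at most two components, and the dichotomy ``$(x,0)$ and $(x,1)$ lie in the same component'' is independent of the choice of $x$. Combining this with the lifting formula, $G^\tau$ is connected precisely when some (equivalently every) pair $(x,0),(x,1)$ is joined by a walk in $G^\tau$, which happens exactly when there is a closed walk at $x$ in $G$ with odd $\tau$-sum. Decomposing closed walks into cycles, this is equivalent to the presence in $G$ of a cycle containing an odd number of type-$1$ edges, which by the criterion from \cite{GT,MT} is non-orientability of $(G,\rho,\tau)$.

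The main point requiring care is the $x$-independence of the dichotomy: if $(x_0,0)$ and $(x_0,1)$ are in different components, one must check the same holds for every $y\in X$. This is handled by lifting a fixed path in $G$ from $x_0$ to $y$ via the formula above and observing that both lifts shift the second coordinate by the same amount. Beyond this, the argument is essentially bookkeeping, and I do not anticipate any substantive obstacle.
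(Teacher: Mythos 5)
Your proof is correct, but it takes a genuinely different route from the paper's. The paper argues via its Lemma~\ref{lem:switch} and the normalization \textbf{Operation i--iv}: an orientable rotation system can be brought to the twist-free form $(G,\rho,\mathrm{id})$, whose double cover is visibly two disjoint copies of $G$, and the vertex-reversal operation preserves disconnectivity of $G^\tau$; hence orientable implies disconnected. You instead give the standard voltage-graph argument: unique path lifting in the $\mathbb{Z}_2$-cover, the observation that a lift of a walk shifts the second coordinate by $\sum_i\tau(e_i)\bmod 2$, the fact that (for connected $G$) each component of $G^\tau$ projects onto $X$ so there are at most two components and the dichotomy is $x$-independent, and finally the cycle criterion for non-orientability that the paper recalls from \cite{GT,MT} in Section~\ref{sect.RS}. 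Your approach is more elementary and self-contained (it does not need Lemma~\ref{lem:switch}), and it has the advantage of establishing both implications explicitly --- in particular the direction ``non-orientable $\Rightarrow$ connected,'' which the paper's proof leaves implicit, since it only argues that orientability forces disconnectedness. The paper's approach, on the other hand, reuses the equivalence-of-embeddings machinery that it needs anyway for the unitary-equivalence results in Section~5. Two small points to keep explicit in a final write-up: the connectivity of $G$ must be assumed (as you note), and the reduction from ``closed walk with odd $\tau$-sum'' to ``cycle with odd number of type-$1$ edges'' should mention that backtracked edges contribute evenly, so the parity is carried by the cycle decomposition.
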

\begin{proof}
Every orientable rotation system can be drawn without any twisted edges, $(G,\rho,\mathrm{id})$, by Operation {\bf i}--{\bf iv}. 
The rotation system $(G,\rho,\mathrm{id})$ is orientable and the corresponding double covering graph is constructed by the two connected components, that is, disconnected because there are no type-$1$ edges.  
The operation reversing the rotation of a vertex and the types of its connected edges keeps the disconnectivity of the double covering graph by the previous lemma.  
\end{proof}

\subsection{Blow-up}\label{sect:blowup}
The blow-up graph of $G=(V,A)$ with the rotation $\rho$ is obtained by replacing each vertex of $G$ into the directed cycle following the rotation assigned to each vertex and conserving the adjacency relation of the original graph as follows. 
The vertex set of the blow-up graph $G^{BU}(\rho)=(X^{BU}(\rho), A^{BU}(\rho))$ is defined by
\[ X^{BU}(\rho) = A; \]
the arc set of the blow-up graph is defined by the disjoint union of 
\[A^{BU}(\rho)=A_{is}\sqcup A_{br},\] 
which are called the bridge and the island, respectively.  
Here the ``bridge" $A_{br}$ and the ``island" $A_{is}$ are both isomorphic to $A$ and defined by 
\[
A_{is}=\{ (e,\rho(e))\;|\; e\in A\},\;
A_{br}= \{ (e,\bar{e})\;|\; e\in A \} \]
through the direct product $A\times A$. 
See Figure~\ref{fig:FigBU} (c). 

Let us give some remarks and introduce some new notations which will be important to describe the time evolution of our quantum walk model. 
\begin{remark}
It holds that, for $e\in A$ with $t(e)=x\in X$ in $G$, 
\[ \{ ( e',\; \rho(e'))\;|\; \text{$t(e')=x$ in $G$}  \}=\{ (\;\rho^{j}(e),\; \rho^{j+1}(e)\;) \;|\; j=0,1,\dots,\deg_G(x)-1\}\subset A_{is} . \]
\end{remark}
We call the above directed cycle induced by $x\in X$ the {\it island of $x\in X$}, which is denoted by $A_{is}(x)$. 
On the other hand, for $\epsilon\in E$ with the end vertices $x,y\in X$ in $G$, $(e,\bar{e})$ and $(\bar{e},e)$, where $|e|=\epsilon$, are called the {\it bridge between the islands $x$ and $y$}. 
Then we define an extension of $\rho$ and $\tau$ to the blow-up graph $G^{BU}(\rho)$ as follows:  
\[\rho(\bs{e}_{is}):=(\rho(e),\rho^2(e)) \] 
for an island arc $\bs{e}_{is}=(e,\rho(e))\in A_{is}$,  
and  
\[ %\bar{\bs{e}}_{br}:=(\bar{e},e), \;
\tau(\bs{e}_{br}):=\tau(e) \]
for a bridge arc $\bs{e}_{br}=(e,\bar{e})\in A_{br}$. 

\noindent For each vertex $e\in X^{BU}$, the incoming arcs to $e$ come from the vertex $\rho^{-1}(e)$ along the island of $t_G(e)$ and the vertex $\bar{e}$ along the bridge between the islands $t_G(e)$ and $o_G(e)$ while the outgoing arcs from $e$ go out to the vertex $\rho(e)$ along the island of $t_G(e)$ and the vertex $\bar{e}$ along the bridge between the islands $t_G(e)$ and $o_G(e)$. 

Then for any island arc $\bs{e}_{is}\in A_{is}$, 
there are exactly two bridge arcs incoming to $\bs{e}_{is}$: $\bs{e}_{br}$ and $\bs{e}_{br}'\in A_{br}$ satisfy  $o(\bs{e}_{is})=t(\bs{e}_{br})$ and $t(\bs{e}_{is})=t(\bs{e}_{br}')$, respectively. 
Such bridge arcs for the island arc $\bs{e}_{is}$ are denoted by 
\begin{equation}\label{eq:brbr}
\bs{e}_{br}:=\mathrm{br}(\bs{e}_{is}),\;
\bs{e}_{br}':=\mathrm{br}^{\sharp}(\bs{e}_{is}), 
\end{equation} respectively. 
On the other hand, for any bridge arc $\bs{e}_{br}\in A_{br}$, there are exactly two island arcs incident to $o(\bs{e}_{br})$: $\bs{e}_{is}$ and $\bs{e}_{is}'$ satisfy $o(\bs{e}_{br})=t(\bs{e}_{is})=o(\bs{e}_{is}')$. Such island arcs for the bridge arc $\bs{e}_{br}$ are denoted by 
\begin{equation}\label{eq:isis}
\bs{e}_{is}:=\mathrm{is}(\bs{e}_{br}),\;
\bs{e}_{is}':=\mathrm{is}^{\sharp}(\bs{e}_{br}),
\end{equation}
respectively. 
See Fig.~\ref{fig:isbr}.
\begin{remark}
The in-degree and out-degree for every vertex of the blow-up graph are equally $2$.  
This condition is suitable for the purpose to implement a quantum walk model as a circuit of optical polarized elements~\cite{MHMHS}. 
\end{remark}
\begin{figure}[hbtp]
    \centering
    \includegraphics[keepaspectratio, width=180mm]{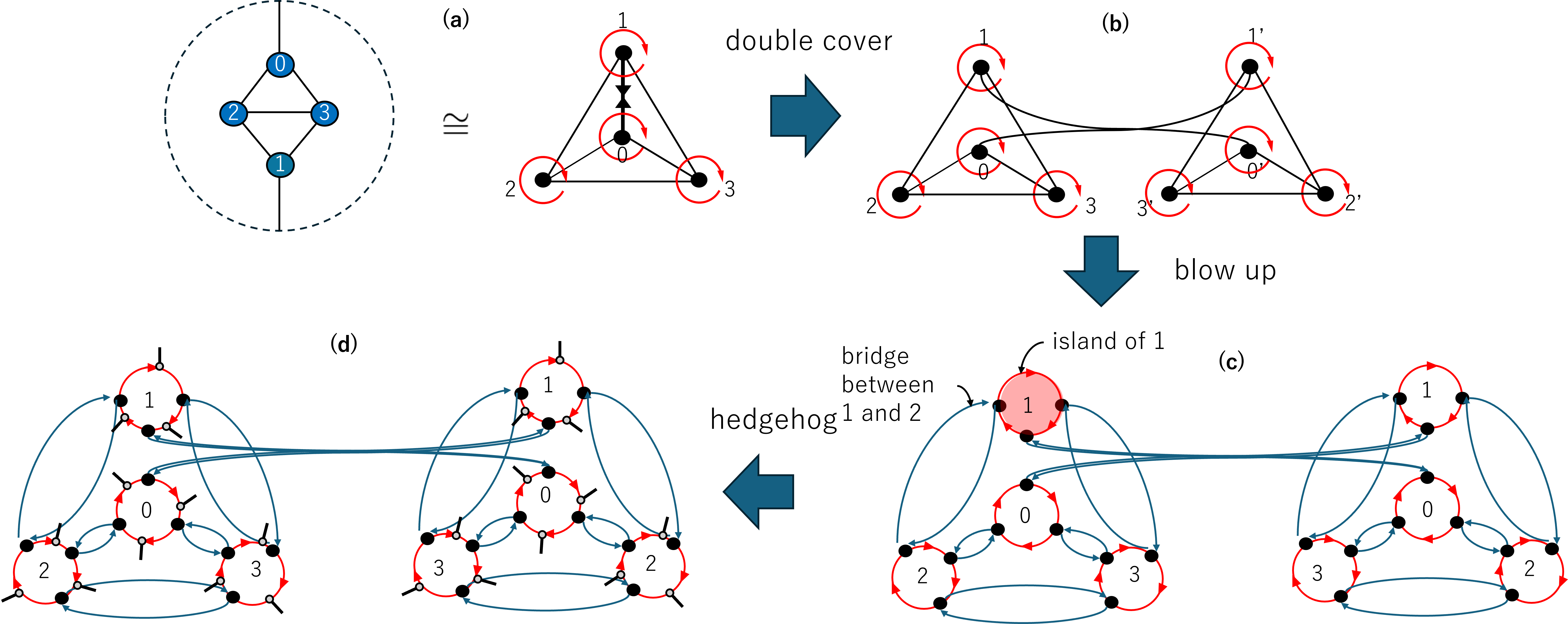}
    \caption{The construction of $G(\rho,\tau)$: (a) the rotation system of $K_4$ embedding in the projective plain $(G,\rho,\tau)$; (b) the rotation system of the double covering graph $G^\tau$, $(G^\tau,\rho\oplus\rho^{-1},\mathrm{id})$; (c) the blow-up graph of $G(\rho,\tau)$; (d) the blow-up graph with tails $\tilde{G}(\rho,\tau)$ for $\delta A_{is}=A_{is}$ case  (every tail is inserted on each island arc, and such a tail arrangement is called hedgehog.)    } 
    \label{fig:FigBU}
\end{figure}

\subsection{Quantum walk on the rotation system $(G,\rho,\tau)$}\label{sect:TEF}
For the rotation system $(G,\rho,\tau)$, let us consider the blow-up graph of $G^\tau$ induced by the rotation $\rho\oplus \rho^{-1}$. 
% by replacing $G$ discussed in the previous subsection~\ref{sect:blowup} with $G^\tau$ discussed also in  subsection~\ref{sect:doublecover}. 
This blow-up graph is denoted by $G(\rho,\tau)=(X(\rho,\tau), A(\rho,\tau))$, that is,  
\[G(\rho,\tau):=(G^\tau)^{BU}(\rho\oplus \rho^{-1}).\] 

Now we are ready to describe the dynamics of this quantum walk model. The total state space of our quantum walk is given by the vector space 
\[ \mathcal{H}_{total}=\mathbb{C}^{A(\rho,\tau)}. \]
In our quantum walk model, a local time evolution at each vertex of $G(\rho,\tau)$ is represented by a $2$-dimensional unitary matrix, because the degree of the blow-up graph is $2$, and if a quantum walk passes through the twisted bridge, the phase of the quantum walker is converted by $e^{i\pi}=-1$.  

In the following,  we represent this dynamics of our quantum walk model in more precisely.  
%For any $\omega\in \Omega$, the corresponding standard base is denoted by $\delta_\omega\in \mathbb{C}^{\Omega}$, where 
%\[\delta_\omega(\omega')=\begin{cases}1 & \text{: $\omega=\omega'$,}\\ 0 & \text{: otherwise.} \end{cases} \]
Let $W: \mathbb{C}^{A(\rho,\tau)}\to \mathbb{C}^{A(\rho,\tau)}$ be the unitary time evolution operator such that    
for each time step $t\in \mathbb{N}$, the total state of the quantum walk at time $t$,  $\psi_t\in \mathbb{C}^{A(\rho,\tau)}$, is given by 
\[ \psi_{t+1}=W\psi_t \text{ for $t\geq 0$}\]
with some initial state $\psi_0\in \mathbb{C}^{A(\rho,\tau)}$. 
The unitary time evolution operator $W$ is defined as follows. 
See also Figs.~\ref{fig:isbr} and \ref{fig:LocTE}. 
\begin{definition}\label{def:TE1}
Let us set a $2\times 2$ unitary matrix by 
\[ C=\begin{bmatrix} a & b \\ c & d \end{bmatrix}. \]
Let $\psi_t\in \mathbb{C}^A$ be the state of the quantum walk at time $t=0,1,2,\dots$.  
The time evolution operator $W$ is defined by 
$\psi_{t+1}=W\psi_t$ such that 
for any bridge arc $\bs{e}_{br}\in A_{br}\subset A(\rho,\tau)$, the local time evolution on the vertex $o(\bs{e}_{br})\in X(\rho,\tau)$ is described by
\begin{equation}\label{rem:TM}
\begin{bmatrix}
\psi_{t+1}(\;\mathrm{is}^{\sharp}(\bs{e}_{br})\;) \\ \psi_{t+1}(\bs{e}_{br})
\end{bmatrix}
=\begin{bmatrix}1 & 0 \\ 0 & (-1)^{\tau(\bs{e}_{br})}\end{bmatrix}\;C \begin{bmatrix}
\psi_t(\;\mathrm{is}(\bs{e}_{br})\;) \\ \psi_t(\bar{\bs{e}}_{br})
\end{bmatrix}.
\end{equation}
In other words, 
for any island arc $\bs{e}_{is}\in A_{is}\subset A(\rho, \tau)$ and for any bridge arc $\bs{e}_{br}\in A_{br}\subset A(\rho,\tau)$, 
\begin{align}
(W\psi)(\bs{e}_{is}) &= a\;\psi(\rho^{-1}(\bs{e}_{is}))+b\;\psi(\mathrm{br}(\bs{e}_{is}))\\
(W\psi)(\bs{e}_{br}) &= (-1)^{\tau(\bs{e}_{br})}\left\{c\;\psi(\mathrm{is}(\bs{e}_{br}))+d\;\psi(\bar{\bs{e}}_{br})\right\}
\end{align}
for any $\psi\in \mathbb{C}^{A(\rho,\tau)}$. 
\end{definition}
\begin{figure}[hbtp]
    \centering
    \includegraphics[keepaspectratio, width=100mm]{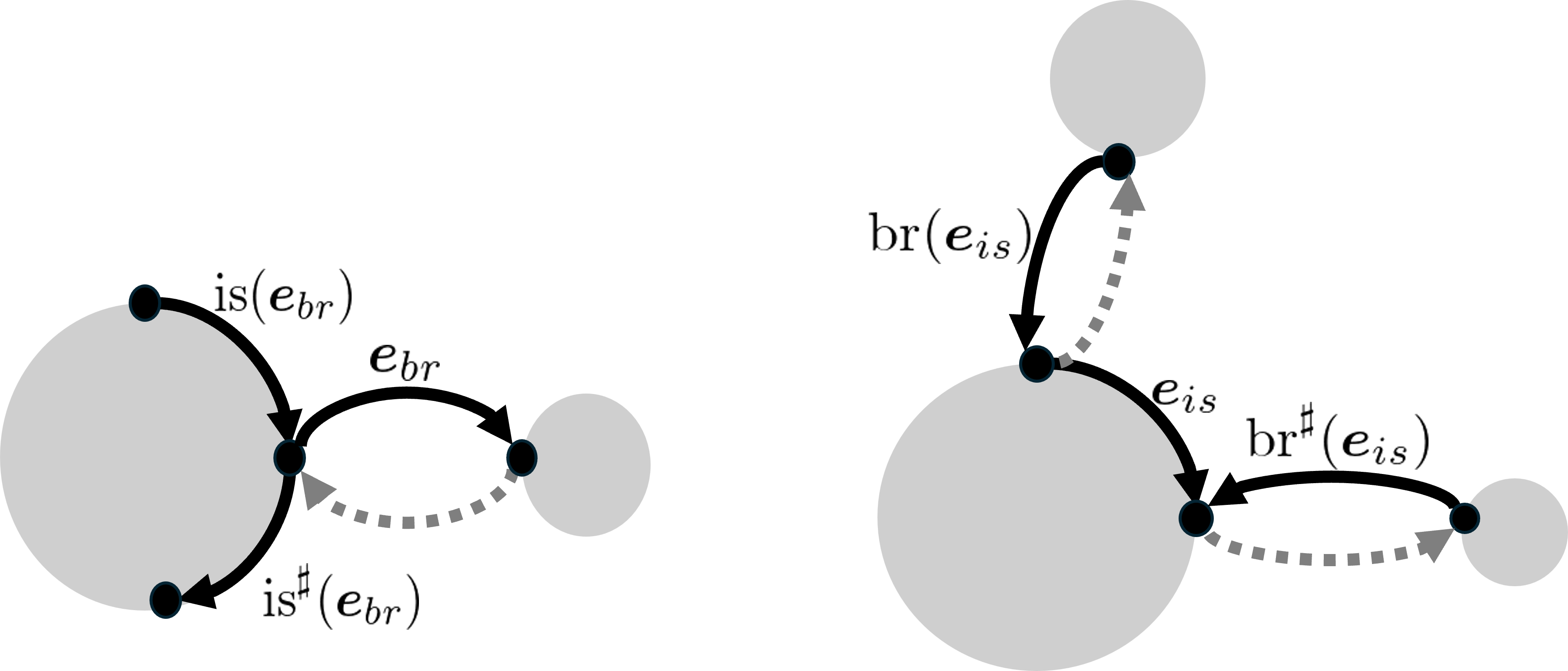}
    \caption{The definitions of $\mathrm{is}(\bs{e}_{br})$, $\mathrm{is}^\sharp(\bs{e}_{br})$, $\mathrm{br}(\bs{e}_{is})$ and $\mathrm{br}^\sharp(\bs{e}_{is})$.}
    \label{fig:isbr}
\end{figure}
\begin{figure}[hbtp]
    \centering
    \includegraphics[keepaspectratio, width=100mm]{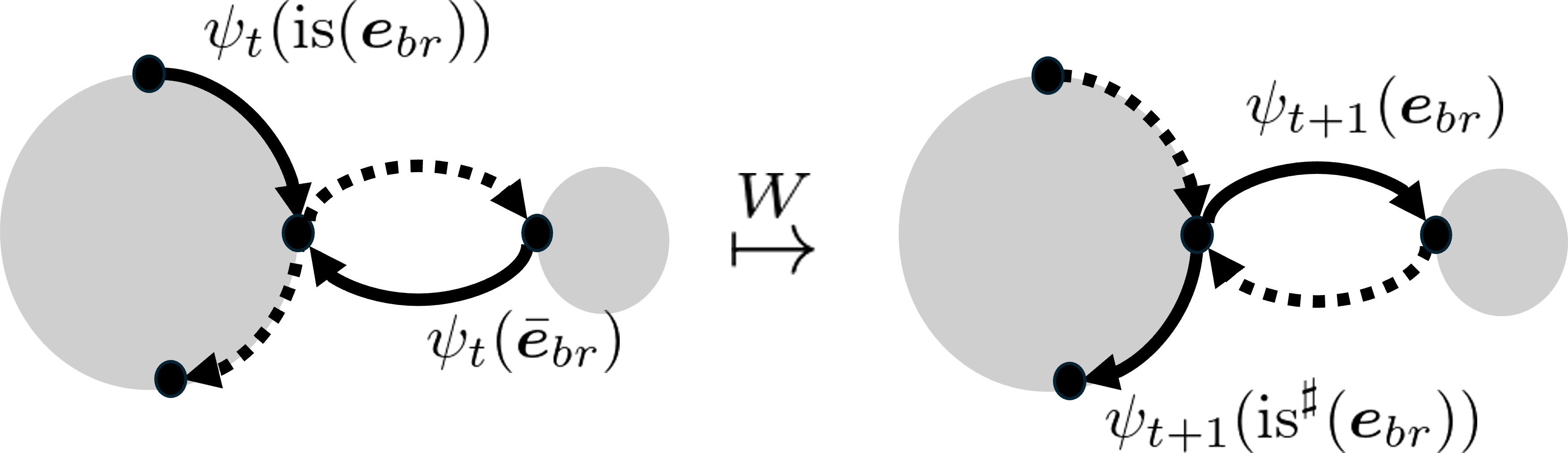}
    \caption{The local time evolution}
    \label{fig:LocTE}
\end{figure}

\subsection{Extension to an infinite system}\label{sect:TEIF}
For the blow-up graph $G(\rho,\tau)$, 
let us choose a subset of $A_{is}$ by $\delta A_{is}$ as the boundary to the ``outside". 
We deform this blow-up graph to an infinite graph by the following procedure. See Figure~\ref{fig:qypr}. 
\begin{definition}[The blow-up graph with tails]\label{def:BUT}
The blow-up graph with tail induced by $G(\rho,\tau)$ is given as follows. 
\begin{enumerate}
\item Each $\xi\in \delta A_{is}$ is divided by two arcs by replacing $\xi$ with two new arcs $\xi^-$ and $\xi^+$ satisfying with $o(\xi)=o(\xi^-)$, $t(\xi^-)=o(\xi^+)$ and $t(\xi^+)=t(\xi)$. 
\item Semi-infinite path $\mathbb{P}_e$ with the root vertex $o(\mathbb{P}_e)$ is joined by identifying $o(\mathbb{P}_e)$ with $t(\xi^-)=o(\xi^+)$. 
The pair of $(\xi^-,\xi^+)$ is called the {\it quay} of $e\in\delta A_{is}$. 
The sets of such as all $\xi^-$ and all $\xi^+$ are denoted by $\delta A_{qy}^-$ and $\delta A_{qy}^+$, respectively. 
\item Every $e\in A(\rho,\tau)\setminus \delta A_{is}$ is left as it is, without any transformations. 
\end{enumerate}
\end{definition}
\begin{figure}[hbtp]
    \centering
    \includegraphics[keepaspectratio, width=70mm]{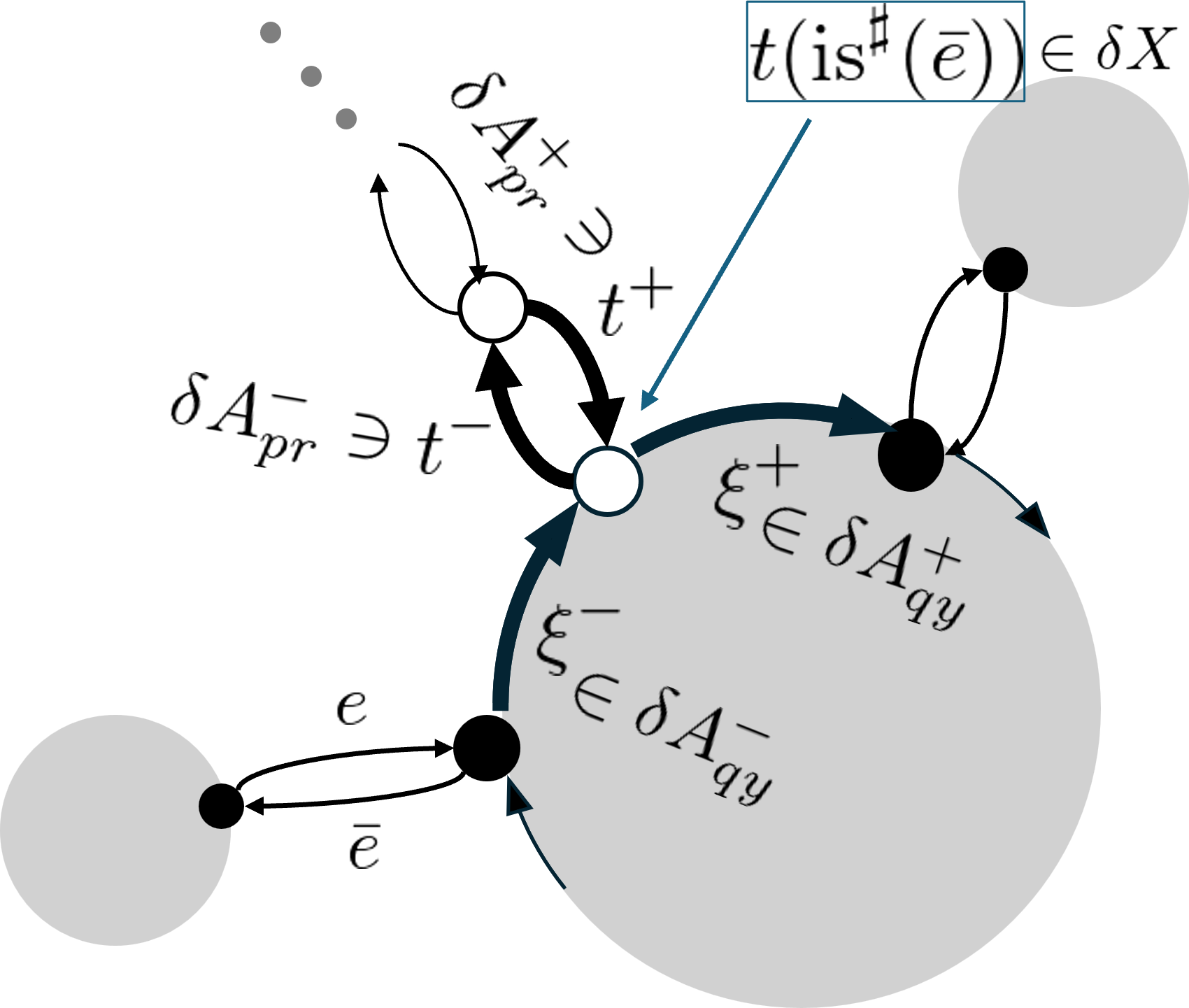}
    \caption{The quay and pier of the blow-up graph: The gray discs represent some islands of the blow-up graph. The semi-infinite path with the white vertices is the tail. The replaced arcs in the deformation process (1) with $\xi\in \delta A$ are called the quay arcs. The tail is inserted between the quay arcs $\xi^-$ and $\xi^+$. The arcs of of tails incident to the internal graph are called the pier arcs.   }
    \label{fig:qypr}
\end{figure}
The resulting graph is denoted by 
${\tilde{G}}(\rho,\tau)=(\tilde{X}(\rho,\tau),\tilde{A}(\rho,\tau))$, which is an infinite graph. 
See Figure~\ref{fig:FigBU} (d). 
The set of boundary vertices of $\tilde{G}(\rho,\tau)$ is defined by 
\[\delta X=\bigcup_{e\in\delta A_{is}}o(\mathbb{P}_e)\subset {\tilde{X}}(\rho,\tau) .\] 
The set of arcs of tails is denoted by $A_{tl}$. 
The subset of $A_{tl}$ called the {\it pier} is defined as  
\[ \delta A_{pr}=\delta A_{pr}^+ \sqcup \delta A_{pr}^-, \]
where 
\[ \delta A_{pr}^{+}= \{t^+\in A_{tl} \;|\; t(t^+)\in \delta X \},\;\delta A_{pr}^{-}= \{t^-\in A_{tl} \;|\; o(t^-)\in \delta X \}. \]
Note that for each island having a boundary vertex, the local rotation in $\tilde{G}(\rho,\tau)$ is naturally extended by the insertion of the tail.  We use the same symbol $\rho$ for the new rotation. 
% Let $\tilde{A}_{is}\subset {\tilde{A}(\rho,\tau)}$ be the set of the islands of $\tilde{G}(\rho,\tau)$ following the rotation $\rho$. 
% Moreover let $\tilde{A}_{br}$ be the union of $A_{br}$ and $\delta A_{pr}$, 
Let us introduce the island $\tilde{A}_{is}$ and the bridge $\tilde{A}_{br}$ in $\tilde{G}(\rho,\tau)$ as follows. 
\begin{align*} 
\tilde{A}_{is} &= (A_{is}\setminus \delta A_{is})\sqcup \delta A_{qy}^{+} \sqcup \delta A_{qy}^{-}, \\
\tilde{A}_{br} &= A_{br} \sqcup \delta A_{pr}^+\sqcup \delta A_{pr}^{-}. 
\end{align*}
We assume $\tau (e)=0$ for any $e\in A_{tl}$. \\

The total state space here is extended to 
\[ \tilde{\mathcal{H}}_{total}=\mathbb{C}^{\tilde{A}(\rho,\tau)}. \]
The time evolution operator $\tilde{W}$ on $\mathbb{C}^{\tilde{A}(\rho,\tau)}$ is given as follows, which is essentially the same as Definition~\ref{def:TE1} except the boundaries and the tails, but the setting of the initial state is crucial to obtain the stationary state: 
\begin{definition}
Let $\Psi_t$ be the $t$-th iteration of $\tilde{W}$, such that 
$\Psi_{t}=\tilde{W}\Psi_{t-1}$ $t=1,2,\dots$
% with the initial state $\Psi_0$ defined in (\ref{eq:initialstate}). 
Here the time eovlution operator $\tilde{W}$ and the initial state $\Psi_0$ are defined as follows. 
\\

\noindent {\bf The time evolution operator $\tilde{W}$.}
\begin{enumerate}
\item For $\bs{e}\in A_{tl}\setminus \delta A_{pr}^-$:
The arcs of a tail are put by $e_0,e_1,\dots$ and $\bar{e}_0,\bar{e}_1,\dots$ with $t(e_0)\in \delta X$ $t(e_{j+1})=o(e_j)$ $j=0,1,2,\dots$. 
The time evolution on the tail is {\it free}, that is, 
\begin{equation}\label{eq:free} 
\Psi_{t+1}(e_{j})=\Psi_{t}(e_{j+1}),\;\Psi_{t+1}(\bar{e}_{j+1})=\Psi_{t}(\bar{e}_{j}) 
\end{equation}
for any $j=0,1,2,\dots$. 
\item For $\bs{e} \in \tilde{A}_{is}\cup A_{br}\cup \delta A_{pr}^-$: 
Let $a,b,c$ and $d$ are the same as in Definition~\ref{def:TE1}. 
For any island arc $\bs{e}_{is}\in \tilde{A}_{is}$, and for any ``bridge or pier" arc $\bs{e}_{br}\in A_{br}\cup \delta A_{pr}^-$, 
\begin{align*}
\Psi_{t+1}(\bs{e}_{is}) &= a\;\Psi_t(\rho^{-1}(\bs{e}_{is}))+b\;\Psi_t(\mathrm{br}(\bs{e}_{is})),\\
\Psi_{t+1}(\bs{e}_{br}) &= 
(-1)^{\tau(\bs{e}_{br})}\left\{c\;\Psi_t(\mathrm{is}(\bs{e}_{br}))+d\;\Psi_t(\bar{\bs{e}}_{br})\right\}.  
\end{align*}
Here if $\bs{e}_{is}\in\delta A_{qy}^{\pm}$, the bridge arc $\mathrm{br}(\bs{e}_{is})\in \delta A_{pr}^-$ is the pier from $t(\bs{e}_{is})$;
if $\bs{e}_{br}\in \delta A_{pr}^-$, the island arc $\mathrm{is}(\bs{e}_{br})\in \delta A_{qy}^{-}$ is the quay of the boundary vertex $o(\bs{e}_{br})$.
\end{enumerate}
\noindent{\bf The initial state $\Psi_0$.} \\
Let each tail be labeled by each element of  $\delta A_{pr}^+$. Prepare the sequence of complex values $(\alpha_{e})_{e\in \delta A_{pr}^+}$. 
Then we set the following uniformly bounded initial state on the tails:  
\begin{equation}\label{eq:initialstate} 
\Psi_0(e')=
\begin{cases}
\alpha_{e} & \text{: $e'\in A(\mathbb{P}_e)$,  $\dist (t(e'),o(\mathbb{P}_e))>\dist (t(e'),o(\mathbb{P}_e))$ ($e\in \delta A_{is}^+$),}\\
0 & \text{ otherwise.}
\end{cases} 
\end{equation}
\end{definition}
Note that the initial state $\Psi_0$ is bounded but no longer square summable. Such a setting of initial state  provides a constant inflow from the tails at every time step.
On the other hand, the setting of the time evolution on the tails implies that a walk never returns to the interior once it goes out to the outside, which can be regarded as an outflow from the interior. 
In this setting, the convergence to a fixed point of this quantum walk is ensured with the following meaning: 
\begin{proposition}[\cite{HS}]
For any $e \in \tilde{A}(\rho,\tau)$,
\[\exists\lim_{t\to\infty} \Psi_t(e)=\Psi_\infty(e). \]
\end{proposition}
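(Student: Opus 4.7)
The plan is to split the solution into a constant inflow-driven piece and a transient that decays, and then to analyze the transient via a spectral argument on a finite compression of $\tilde{W}$. Since $\Psi_0$ is only bounded and not square-summable, I cannot apply $\ell^2$ spectral theory directly to $\tilde{W}$; instead I will reduce to a finite-dimensional affine recursion on the internal arcs and then extend the convergence conclusion to the tails using the free-propagation rule~(\ref{eq:free}).

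Concretely, let $A_{int} := \tilde{A}_{is}\cup A_{br}\cup \delta A_{pr}^-$ be the set of arcs whose evolution is governed by the coin matrix $C$, and let $E:\mathbb{C}^{A_{int}}\to \mathbb{C}^{A_{int}}$ be the corresponding compression of $\tilde{W}$. By Definition~\ref{def:TE1} (extended to $\tilde{G}(\rho,\tau)$), the value $(\tilde{W}\Psi_t)(\bs{e})$ for $\bs{e}\in A_{int}$ depends only on $\Psi_t|_{A_{int}}$ together with the amplitudes on the incoming pier arcs $\delta A_{pr}^+$; by (\ref{eq:initialstate}) and the free dynamics, those pier amplitudes equal the fixed inflow $\alpha_e$ at every time $t\geq 0$. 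Consequently $\phi_t := \Psi_t|_{A_{int}}$ obeys the affine recursion
\[\phi_{t+1} = E\,\phi_t + \bs{\alpha},\]
where $\bs{\alpha}\in \mathbb{C}^{A_{int}}$ is a fixed vector built from $(\alpha_e)_{e\in \delta A_{is}}$.

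The crux, and what I expect to be the main obstacle, is to prove that the spectral radius satisfies $r(E)<1$. Unitarity of $\tilde{W}$ on $\ell^2(\tilde{A}(\rho,\tau))$ already gives $\|E\|\leq 1$; to exclude unimodular eigenvalues, suppose $E\phi=\lambda\phi$ with $|\lambda|=1$, and let $\tilde\phi$ be the zero-extension of $\phi$ to $\ell^2(\tilde{A}(\rho,\tau))$. A direct computation using the free dynamics shows that $(\tilde W\tilde\phi)(e)=0$ for every $e\in A_{tl}\setminus A_{int}$ except for the arcs $\bar e_1$ immediately outside each outgoing pier, on which it equals $\phi(\bar e_0)$. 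The Pythagorean identity $\|\tilde\phi\|^2=\|\tilde W\tilde\phi\|^2$ therefore yields
\[\|E\phi\|^2=\|\phi\|^2-\sum_{\bar e_0 \in \delta A_{pr}^-}|\phi(\bar e_0)|^2,\]
so $|\lambda|=1$ forces $\phi$ to vanish on every outgoing pier $\delta A_{pr}^-$. Iterating with $E^n\phi=\lambda^n\phi$ propagates this vanishing along the backward orbits of the piers under $\tilde W$; under the hedgehog assumption $\delta A_{is}=A_{is}$ every island arc is adjacent to a pier, so combined with the strong connectivity of the blow-up graph $G(\rho,\tau)$ this forces $\phi=0$, a contradiction. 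Hence $r(E)<1$.

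With $r(E)<1$ the affine recursion yields $\phi_t\to \phi_\infty:=(I-E)^{-1}\bs{\alpha}$ geometrically, so $\Psi_t(e)$ converges for every $e\in A_{int}$. For $e$ on a tail, (\ref{eq:free}) expresses $\Psi_t(e)$ either as a constant inflow amplitude $\alpha$ (on inward-oriented arcs, once the initial data has swept past $e$) or as $\phi_{t-j}(\bar e_0)$ for a fixed outgoing pier and some $j$ determined by the distance of $e$ from the boundary (on outward-oriented arcs carrying outflow). Both sequences converge, yielding the pointwise limit $\Psi_\infty(e)$ for every $e\in\tilde{A}(\rho,\tau)$.
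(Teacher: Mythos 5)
Your proposal is correct and follows essentially the same route as the proof this paper defers to in \cite{HS}: compress $\tilde{W}$ to the internal arcs, observe that the constant inflow (guaranteed by the free dynamics and the initial state) turns the internal dynamics into the affine recursion $\phi_{t+1}=E\phi_t+\bs{\alpha}$, rule out unimodular eigenvalues of the finite-dimensional contraction $E$ via the unitarity/leakage identity so that $r(E)<1$, and then push the convergence back out onto the tails. The one loosely stated step is the propagation of the vanishing of $\phi$ from the outgoing piers to all of the interior: following ``backward orbits'' literally stalls at the island arcs, since the coin mixes two incoming amplitudes and a single relation $a\phi(\xi^{+}_{j-1})+b\phi(\bs{e}_{br})=0$ does not force each term to vanish; the clean completion notes that zero leakage makes the zero-extension $\tilde{\phi}$ an exact eigenvector of the unitary $\tilde{W}$, whence the forward coin relations at each tail root (using $c\neq 0$ and $|d|<1$, which the paper's normalizations by $1/|b|^2$ implicitly assume) kill $\phi$ on every quay arc, the hedgehog condition then kills it on every island arc, and the bridge relations $\lambda\,\phi(\bs{e}_{br})=\pm d\,\phi(\bar{\bs{e}}_{br})$ finish the job --- a routine repair rather than a gap in the strategy.
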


\section{Unitary equivalence}
Let $(G,\rho,\tau)$ be a rotation system whose underlying abstract graph is $G=(X,A)$. 
Let us deform the rotation system as follows which is corresponding to Operation {\bf iii}.\\

\noindent{\bf Operation} ($\star$) \\
Choose a one vertex from $G$, say $x\in X$. Then reverse its rotation and switch all the edge types of the incident edges of $x$. 
Such a rotation system is denoted by $(G,\rho^{(x)},\tau^{(x)})$. \\

\noindent By Lemma~\ref{lem:switch}, the resulting embeddings on a closed surface of the rotation systems of the double covering graph, $(G^\tau,\rho\oplus\rho^{-1},\mathrm{id})$ and $(G^{\tau},\rho^{(x)}\oplus{\rho^{(x)}}^{-1},\mathrm{id})$, are equivalent embedding to each other. 
Under these equivalent two embeddings, we take the blow-up and choose the boundary island arcs $\delta A$ from $G(\rho,\tau)$ and its isomorphic boundary $\delta A'$ from $G(\rho^{(x)},\tau^{(x)})$.  
The resulting blow-up graphs are denoted by 
$\tilde{G}(\rho,\tau)$ and $\tilde{G}(\rho^{(x)},\tau^{(x)})$, respectively.
Note that they are still isomorphic to each other: $\tilde{G}(\rho,\tau)\cong \tilde{G}(\rho^{(x)},\tau^{(x)})$ under the following bijection $\phi$ for the incident arcs of islands $x$ and $x'$: \\

\noindent {\bf Bijection $\phi: \tilde{G}(\rho,\tau)\to \tilde{G}(\rho',\tau')$} satisfies the following: 
\begin{enumerate}
\item island : \\
$e_{is}=(e,\rho(e)) \in A_{is}((x,k))\subset A(\rho,\tau)$ \\ with $o(e)=(y,k')$ and $o(\rho(e))=(z,k'')$ in $G^\tau$ \\
$\leftrightarrow $\\
$\phi(e_{is})=(e',\rho(e')) \in A_{is}((x,k+1))\subset A(\rho,\tau)$ \\with $o(e')=(z, k''+1)$ and $o(\rho(e'))=(y, k'+1)$ in $G^\tau$. 
\item bridge : \\
$e\in A_{br}\subset A(\rho,\tau)$ \\
with $o(e)=(x,\ell)$ and $t(e)=(y,m)$ in $G^\tau$ \\ 
$\leftrightarrow$ \\
$\phi(e)\in A'_{br}\subset A(\rho',\tau')$ \\
with $o(e)=(x,\ell+1)$ and $t(e)=(y,m+1)$ in $G^\tau$
\end{enumerate}
The other arcs are left nothing as it is. \\

In this section, we show that the time evolution operators $\tilde{W}$ on $\tilde{G}(\rho,\tau)$ and $\tilde{W}^{(x)}$ on $\tilde{G}(\rho^{(x)},\tau^{(x)})$ are unitarily equivalent. Indeed we have the following proposition. 
%%%%%
\begin{proposition}\label{prop:uniequ}
Let the rotation system $(G,\rho,\tau)$ and $(G,\rho',\tau')$ are equivalent embeddings to each other, and let the boundaries of the induced blow-up graphs are also equivalent to each other. 
Then the time evolution operators $W$ on $\tilde{G}(\rho,\tau)$ and $W'$ on  $\tilde{G}(\rho',\tau')$ are unitarily equivalent, that is, 
there is an unitary map $\mathcal{U}$ such that 
\[ W'=\mathcal{U}^*\;W\;\mathcal{U}. \]
\end{proposition}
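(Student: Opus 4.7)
The plan is to reduce to a single step of Operation ($\star$) at a fixed vertex $x \in X$, since by the discussion preceding the proposition and by Operation {\bf iii} of Section~\ref{sect.RS}, any two equivalent rotation systems are connected by a finite sequence of such local operations. Unitary equivalence is an equivalence relation preserved under composition, so it suffices to exhibit an intertwining unitary for one Operation ($\star$) and chain these together.

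For the single-step case, I would realize $\mathcal{U}:\mathbb{C}^{\tilde{A}(\rho,\tau)}\to\mathbb{C}^{\tilde{A}(\rho^{(x)},\tau^{(x)})}$ as the signed permutation $(\mathcal{U}\psi)(\phi(\bs{e}))=s(\bs{e})\,\psi(\bs{e})$ induced by $\phi$, where $s:\tilde{A}(\rho,\tau)\to\{\pm 1\}$ is a sign function. The only discrepancy between $W$ and the pullback $\phi^{*}W'$ is that the factor $(-1)^{\tau(\bs{e}_{br})}$ appearing in the local scattering at each outgoing bridge arc gets flipped whenever the underlying edge of $G$ is incident to $x$ (since $\tau^{(x)}=\tau+1$ there), and the role of $s$ is exactly to absorb these extra signs. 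A consistent assignment is to set $s\equiv+1$ on all arcs of the two islands $A_{is}((x,0))$, $A_{is}((x,1))$ and the tails attached to them, and $s\equiv-1$ on the arcs of every other island and its tails; the value of $s$ on each bridge arc is then forced by the consistency condition at its origin. A short calculation yields that a bridge arc directed out of an $x$-island carries sign $-1$ while its reverse carries $+1$, and a bridge not incident to $x$ carries $-1$ in both directions.

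The identity $\mathcal{U}W=W'\mathcal{U}$ is then checked entry by entry. Using Lemma~\ref{lem:switch} to see that $\phi$ intertwines the rotations and the blow-up incidences of the two systems, each of the four types of nonzero matrix elements of $W$—island-to-island of weight $a$, bridge-to-island of weight $b$, island-to-bridge of weight $(-1)^{\tau(\bs{e}_{br})}c$, and bridge-to-bridge of weight $(-1)^{\tau(\bs{e}_{br})}d$—reduces to a sign identity of the form $s(\bs{e}')s(\bs{e})=(-1)^{\tau^{(x)}(\phi(\bs{e}'))-\tau(\bs{e}')}$, which holds because $\tau^{(x)}-\tau$ is supported precisely on bridges incident to $x$ and the sign rule above realizes this $\mathbb{Z}_2$-valued difference as a $1$-coboundary. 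The free dynamics on the tails (\ref{eq:free}) manifestly commutes with multiplication by any constant sign along each tail, so nothing further is needed there. I expect the principal obstacle to be the global well-definedness of $s$: this amounts to checking that the cochain ``flip type on $x$-incident bridges'' is a coboundary on the blow-up, which follows from the remark after Operation~{\bf iii} in Section~\ref{sect.RS} that $\sum_{j}\tau(e_j)$ is preserved along every closed walk of $G$.
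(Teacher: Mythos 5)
Your proposal is correct and follows essentially the same route as the paper: reduce to a single Operation ($\star$) at one vertex, factor the intertwiner as the relabelling permutation $\mathcal{U}_\phi$ composed with a diagonal sign (gauge) transformation whose well-definedness rests on the preservation of $\sum_j\tau(e_j)$ along closed walks — your $\{\pm 1\}$-valued $s$ is exactly the paper's $e^{\im\Delta(t(\cdot))}$ with $\Delta\in\{0,\pi\}$, up to a global sign. The only slip is cosmetic: the sign of a bridge arc is forced by consistency at its \emph{terminal} vertex (it must match the island it enters, so that the two inputs to the local coin there carry a common scalar commuting with $C$), though the explicit values you list are the correct ones.
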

%%%%%
\begin{proof}
It is sufficient to consider the case for $\rho'=\rho^{(x)}$ for a fixed arbitrary vertex $x\in X$.
By Lemma~\ref{lem:switch}, ``the reversing the rotation of $x$" and ``the reversing all the incident edges of $x$" are reflected in the corresponding time evolution operator on $\tilde{G}(\rho,\tau)$:
\begin{enumerate}
    \item[{\bf Op(1)}] Switching the labels of all the arcs associated to $(x,0)$ and $(x,1)$ following the bijection $\phi$ ($\leftrightarrow$ Reversing the rotation of $x$);  
    \item[{\bf Op(2)}] Changing the twist $\tau$ to $\tau'$ by 
    \[ \tau'(e)=\begin{cases} 
    \tau(e)+1 & \text{: $e$ is incident to $(x,0)$ or $(x,1)$ in $G^\tau$,} \\ 
    \tau(e) & \text{: otherwise.}
    \end{cases} \]
    ($\leftrightarrow$ Reversing all the incident edges of $x$)
\end{enumerate}
Obviously, {\bf Op(1)} has the corresponding unitary map. This map is denoted by $\mathcal{U}_\phi$, that is, 
\[ (\mathcal{U}_{\phi}\psi)(e)=\psi(\phi^{-1}(e)). \] 

Then in the rest of the discussion, let us find the corresponding unitary map of {\bf Op(2)} following \cite[Lemma 2.4]{HiguchiShirai}. 
To this end, for the blow-up graph $\tilde{G}(\rho,\tau)$, set 
\[\mathcal{A}:=\{ \theta: A(\rho,\tau)\to \mathbb{C}\;|\; \theta(e)=0\text{ for all $e\in A_{is}$},\;\theta(\bar{e})=-\theta(e)\text{ for all $e\in A_{br}$} \}.\] 
Note that if we set 
\begin{equation}\label{eq:theta1}
\theta_1(e)=\begin{cases} \pi & \text{: $e\in A_{br}$ with $\tau(e)=1$,} \\ 0 & \text{: otherwise,} \end{cases} 
\end{equation}
then $\theta_1\in \mathcal{A}$ and 
$e^{\im \; \theta_1(e)}=(-1)^{\tau(e)}$ for any $e\in A_{br}$, where $\im =\sqrt{-1}$. 
On the other hand if we set 
\begin{equation}\label{eq:theta2}
\theta_2(e)=\begin{cases} \pi & \text{: $e\in A_{br}$ with $\tau'(e)=1$,} \\ 0 & \text{: otherwise,} \end{cases} 
\end{equation}
then $\theta_2\in \mathcal{A}$ and $e^{\im \; \theta_2(e)}=(-1)^{\tau'(e)}$ for any $e\in A_{br}$.
For $\theta\in \mathcal{A}$, and an arbitrary path in $G(\rho,\tau)$, $p=(a_1,a_2,\dots,a_k)$ with $t(a_j)=o(a_{j+1})$ $(j=1,\dots,k-1)$, we define 
\[  \int_p \theta := \sum_{j=1}^k \theta(e_j).  \]
Set $\tilde{\rho}:=\rho\oplus \rho^{-1}$.
A path $p=(a_1,\dots,a_s)$ with $a_1\in A_{br}$ and $a_s\in A_{is}$ in $G(\rho,\tau)$ can be represented by 
\[  p=(e_1,\Gamma_1,e_2,\Gamma_2,\dots,e_r,\Gamma_r), \]
where $e_1,\dots,e_r\in A_{br}$,  
\[ \Gamma_j=(\tilde{\rho}^0(\xi_j),\tilde{\rho}^1(\xi_j),\dots,\tilde{\rho}^{k_j}(\xi_j)),\;\;(\tilde{\rho}^{\ell}(\xi_m)\in A_{is}) \]
with $e_1=a_1$ and $\rho^{k_r}(\xi_r)=a_s$. 
Here if $\Gamma_j=\emptyset$, then $e_{j+1}=\bar{e}_j$. 
% \begin{multline*}
% p=(\;
% e_1,
% \xi_1,\tilde{\rho}^1(\xi_1),\dots,\tilde{\rho}^{k_1}(\xi_1),
% \;e_2,\;
% \xi_2,\tilde{\rho}^1(\xi_2),\dots,\tilde{\rho}^{k_2}(\xi_2),
% \;e_3,\;
% \xi_3,\tilde{\rho}^1(\xi_3),\dots,\tilde{\rho}^{k_3}(\xi_3),\dots\\
% \dots, e_{r},\xi_{r},\tilde{\rho}^1(\xi_{r}),
% \dots,\tilde{\rho}^{k_r}(\xi_r)\;), 
% \end{multline*}
% where $\xi_j,\tilde{\rho}^s(\xi_j)\in A_{is}$ and $e_j\in A_{br}$. 
% Here if there is a back-tracking in $p$, a corresponding subsequence of the walk on an island disappears. 
The inverse path of $p$ is defined by 
\[ \bar{p}=(\bar{\Gamma}_r,\bar{e}_r,\dots, \bar{\Gamma}_2,\bar{e}_2,\bar{\Gamma}_1,\bar{e}_1), \]
where 
\[ \bar{\Gamma}_j=(\tilde{\rho}^{k_j+1}(\xi_j),\tilde{\rho}^{k_j+2}(\xi_j),\dots,\tilde{\rho}^{d_j-1}(\xi_j)). \]
% \begin{multline*} 
% \bar{p}=(\;
% \tilde{\rho}^{k_r+1}(\xi_{r}),\tilde{\rho}^{k_r+2}(\xi_r),\dots,\tilde{\rho}^{d_r-1}(\xi_r),
% \;\bar{e}_r,\dots \\
% \dots,\tilde{\rho}^{k_3+1}(\xi_3),\tilde{\rho}^{k_3+2}(\xi_3),\dots,\tilde{\rho}^{d_3-1}(\xi_3),
% % \;\bar{e}_3, 
% % \tilde{\rho}^{k_2+1}(\xi_2),\tilde{\rho}^{k_2+2}(\xi_2),\dots,\tilde{\rho}^{d_2-1}(\xi_2),
% \;\bar{e}_2,
% \tilde{\rho}^{k_1+1}(\xi_1),\tilde{\rho}^{k_1+2}(\xi_1),\dots,\tilde{\rho}^{d_1-1}(\xi_1),
% \;\bar{e}_1
% \;), 
% \end{multline*}
% where $\tilde{\rho}:=\rho\oplus \rho^{-1}$. 
The inverse path of $p$ with $a_1\in A_{is}$ and $a_s\in A_{is}$ and so on is also defined in the same way. 
\begin{lemma}\label{lem:welldefine}
Set $\alpha, \beta\in \mathcal{A}$.  
If $\int_c \alpha =\int_c \beta $
for any closed path $c$ in $G(\rho,\tau)$, then 
$\int_p (\beta-\alpha)=\int_{p'} (\beta-\alpha)$
for any path $p$ and $p'$ with $o(p)=o(p')$ and $t(p)=t(p')$. 
\end{lemma}
\begin{proof}
It holds that $\int_{\bar{p}} \theta=-\int_{p} \theta$ for any path $p$ by the definition. 
Note that $p\cup \bar{p'}$ is a closed path. Then we have 
\[ 0=\int_{p \cup \bar{p}}(\beta-\alpha) 
=\int_{p}(\beta-\alpha)+\int_{\bar{p'}} (\beta-\alpha)
=\int_{p}(\beta-\alpha)-\int_{p'} (\beta-\alpha), \]
which is the conclusion.
\end{proof}
Lemma~\ref{lem:welldefine} tells us that the value $\int_p (\theta_2-\theta_1)$ is independent of choice of path from $o(p)$ to $t(p)$. 
Let us fix a vertex $x_*$. Since $\int_p (\theta_2-\theta_1)$ for a path $p$ starting from $x_*$ is determined by $t(p)=x$, we denote such a value by $\Delta(x)$, which is well-defined.  

Let $\psi_t$ (resp. $\psi_t'$) be the $t$-th iteration of $\tilde{W}$ (resp. $\tilde{W'}$) with the twist $\tau$ (resp. $\tau'$) such that $\psi_{t+1}=\tilde{W}\psi_t$ (resp. $\psi'_{t+1}=\tilde{W}'\psi_t'$). 
By Remark~\ref{rem:TM}, we have 
\begin{equation}\label{eq:TM2}
\begin{bmatrix}
\psi_{t+1}(\xi) \\ \psi_{t+1}(\epsilon)
\end{bmatrix}
=\begin{bmatrix}e^{\im\;\theta_1(\xi)} & 0 \\ 0 & e^{\im\; \theta_1(\epsilon)}\end{bmatrix}\;C \begin{bmatrix}
\psi_t(\xi^\flat) \\ \psi_t(\bar{\epsilon})
\end{bmatrix},  
\end{equation}
for any $\epsilon\in A_{br}$,  
where $\xi^\flat=\mathrm{is}(\epsilon)$ and $\xi=\mathrm{is}^\sharp(\epsilon)$.
Here $\theta_1$ is defined in (\ref{eq:theta1}).
Now let us set $\theta_2$ by (\ref{eq:theta2}). 
{\bf Operation $(\star)$} keeps the parity of any paths passing through the vertex $x$, because $\tau(e)+\tau(e')$ with $t(e)=x=o(e')$ in $G$ is changed to \[ \tau'(e)+\tau'(e')=\tau(e)+1+\tau(e')+1=\tau(e)+\tau(e'), \]
which implies $\int_c \theta_1=\int_{c} \theta_2$ for any cycle $c$ in $\tilde{G}$. 
Let us consider a path $p$ with $o(p)=x_*$ and $t(p)=o(\xi)=o(\epsilon)$ in $\tilde{G}$. Thus we have 
\begin{align*} 
\Delta(t(\xi))&=\int_{p+\xi} (\theta_2-\theta_1) 
= \int_{p} (\theta_2-\theta_1) + \theta_2(\xi)-\theta_1(\xi) \\
&= \Delta(o(\xi))+ \theta_2(\xi)-\theta_1(\xi),\\
\Delta(t(e))&=\int_{p+ e} (\theta_2-\theta_1) 
= \int_{p} (\theta_2-\theta_1) + \theta_2(e)-\theta_1(e) \\
&= \Delta(o(\epsilon))+ \theta_2(e)-\theta_1(e),\\
\end{align*}
Here for a path $p$ with $t(p)=x$ and the fixed vertex $x_*$, 
\begin{equation}\label{eq:Delta}
\Delta(x)=\int_{p} \theta_2-\theta_1  
\end{equation} 
is well-defined by Lemma~\ref{lem:welldefine}. 
Inserting them into (\ref{eq:TM2}), we  have 
\begin{align*} \begin{bmatrix} e^{\im\;\Delta(\;t(\xi)\;)}\;\psi_{t+1}(\xi) \\ e^{\im\;\Delta(\;t(\epsilon)\;)}\; \psi_{t+1}(\epsilon) \end{bmatrix}
& = \begin{bmatrix} e^{\im\;\theta_2(\xi)} & 0 \\ 0 & e^{\im\;\theta_2(\epsilon) } \end{bmatrix} \begin{bmatrix} e^{\im\;\Delta(\;o(\xi)\;)} & 0 \\ 0 & e^{\im\;\Delta(\;o(\epsilon)\;)} \end{bmatrix} \;C\; \begin{bmatrix} \psi_t(\xi^\flat) \\ \psi_t(\bar{\epsilon}) \end{bmatrix} \\
&= \begin{bmatrix} e^{\im\;\theta_2(\xi)} & 0 \\ 0 & e^{\im\;\theta_2(\epsilon) } \end{bmatrix}\;C\;
\begin{bmatrix} e^{\im\;\Delta(\;t(\xi^\flat)\;)}\;\psi_t(\xi^\flat) \\ e^{\im\;\Delta(\;t(\bar{\epsilon})\;)}\;\psi_t(\bar{\epsilon}) \end{bmatrix}
\end{align*}
The second equality derives from $o(\xi)=o(\epsilon)$ which gives the commutativity of the second diagonal matrix and $C$ in the first equality. 
Therefore introducing the unitary map $\mathcal{U}_\Delta$ by
\[ (\mathcal{U}_\Delta\psi)(a)=e^{\im\;\Delta(t(a))}\;\psi(a), \]
we obtain 
\[ W'=(\mathcal{U}_\Delta\mathcal{U}_\phi)^*\; W\;(\mathcal{U}_\Delta\mathcal{U}_\phi).  \]
\end{proof}
\begin{remark}
For a vertex $(x,j)\in X(\rho,\tau)$, define $\mathrm{sheet}(x,j):=j\in \mathbb{Z}_2$. 
Then the unitary map $\mathcal{U}=\mathcal{U}_\Delta\;\mathcal{U}_{\phi}$ is described by 
\[ (\mathcal{U}\psi)(a)=
\begin{cases}
\psi(\phi^{-1}(a)) & \text{:  $\mathrm{sheet}(t(a))+\mathrm{sheet}(x_*)=\mathrm{sheet}(\phi(t(a)))+\mathrm{sheet}(\phi(x_*))$,}\\
-\psi(\phi^{-1}(a)) & \text{:  $\mathrm{sheet}(t(a))+\mathrm{sheet}(x_*) \neq \mathrm{sheet}(\phi(t(a)))+\mathrm{sheet}(\phi(x_*))$.}
\end{cases} \]
\end{remark}
\section{Scattering}\label{sect:scat}
% In this section, we discuss the scattering under the following assumption.
% \begin{assumption}\label{ass:hd}
% \noindent 
% \begin{enumerate}
%     \item the setting of the tails is hedgehog;
%     \item the $(2,2)$ element of $C$, $d$, is a real number.
% \end{enumerate}
% \end{assumption}
\subsection{The scattering matrix represented by faces}
For the blow-up graph $G(\rho,\tau)$, 
let $\delta X=\{x_1,\dots,x_{\kappa}\}$ and $\delta A_{pr}^+=\{e_1,\dots,e_{\kappa}\}$ such that $t(e_j)=x_j$. 
The stationary state is denoted by $\Psi_\infty\in\mathbb{C}^{\tilde{A}(\rho,\tau)}$. 
Let us represent the inflow from the outside by $\bs{\alpha}_{in}\in \mathbb{C}^{\delta X}$ such that 
\[ \bs{\alpha}_{in}(x)=\Psi_\infty(e) \text{ with $t(e)=x$} \]
for any $x\in \delta X$, $\ell\in \mathbb{Z}_2$, 
while the outflow to the outside by  $\bs{\beta}_{out}\in \mathbb{C}^{\delta X}$ such that 
\[ \bs{\beta}_{out}(x)=\Psi_\infty(\bar{e}) \text{ with $t(e)=x$} \] 
for any $x\in \delta X$. 
The scattering matrix $S:\mathbb{C}^{\delta X}\to \mathbb{C}^{\delta X}$ is defined by 
\begin{equation}\label{eq:defscat}  
\bs{\beta}_{out}=S\bs{\alpha}_{in}, 
\end{equation}
which is determined by the rotation system $G(\rho,\tau)$ and $\delta X$, and independent of $\bs{\alpha}_{in}$, $\bs{\beta}_{out}$. 
It is shown in \cite{HS} that such a matrix $S$ exists and is unitary, but its explicit expression is up to the individual setting. Here we characterize the scattering matrix $S$ by the geometric information. \\

We prepare important graph notions which express the scattering matrix. 
%Note that every island arc in $\tilde{G}(\rho,\tau)$ is the quay in the hedgehog graph. 
Let $f=(e_0,\dots,e_{\kappa-1})\in F$ be a facial closed walk of the rotation system $(G^\tau,\rho\oplus\rho^{-1},\mathrm{id})$ with length $\kappa=|f|$. The extended facial walk of $\tilde{f}$ in $A(\rho,\tau)$ is defined by just alternatively inserted corresponding island arcs into each arc of $f$; (we use the same notation for the extended facial walk by $f$): 
\begin{equation}\label{eq:externalwalk}
    f:=(\;\xi_0,e_0,\dots,\xi_{\kappa-1},e_{\kappa-1}\;), 
\end{equation} 
where $e_j\in A_{br}\cong A$ and $\xi_j\in A_{is}$ with 
\begin{equation}\label{eq:condition}
\xi_j = \mathrm{is}(e_j), \;
e_{j}= \mathrm{br}(\xi_{j+1})
\end{equation}
for any $j\in \{0,1,\dots,\kappa-1\}$ in the modulus of $\kappa$. 
We remark that $\xi_j$ represents the sequence of two quay arcs $\xi_j^{-}$ and $\xi_j^{+}$ if $\xi_j\in \delta A_{is}$. 
Let $\delta F\subset F$ be the set of (extended) facial walk passing through a boundary vertex $\delta X$ in $\tilde{G}$. 
%The {\it distance between the boundaries $j_k$ to $j_{k+1}$} is defined by \[ \mathrm{dist}_f(j_k,j_{k+1})=j_{k+1}-j_k \]
%\begin{equation}\label{eq:externalwalk}
%    f:=(e_1,\xi_1,\dots,e_{\ell_0},\xi_{\ell_0},\dots,e_{\ell_1},\xi_{\ell_1},\dots,e_{\ell_{\kappa-1}},\xi_{\ell_{\kappa-1}},\dots, e_1,\xi_1), 
%\end{equation} 
% \begin{equation}\label{eq:externalwalk}
%     f:=(\;\xi_{0}^{+},e_{0},\;\;\xi_{1}^{-},\xi_1^{+},e_{1},\;\;\xi_{2}^{-},\xi_2^{+},e_{2},\;\dots\\
%     \dots,\;\xi_{\kappa-1}^{-},\xi_{\kappa-1}^{-},e_{\kappa-1},\;\;\xi_{0}^{-}\;), 
% \end{equation} 
% where $e_j\in A_{br}\cong A$ and $\xi_j^-,\xi_j^+\in \delta A_{qy}$ with 
% \begin{equation}\label{eq:condition}
% \xi_j^+ = \mathrm{is}(e_j), \;
% e_{j}= \mathrm{br}(\xi_{j+1}^-)
% \end{equation}
% for any $j\in \{0,1,\dots,\kappa-1\}$ in the modulus of $\kappa$. 
For any facial walk $f$, there exists a chiral facial walk $f^{*}$ defined by going around the opposite direction on the opposite sheet.  
We introduce the following useful lemma to consider the scattering.
\begin{lemma}\label{lem:key}
Assume $d\in\mathbb{R}$. 
Let $\psi_\infty\in \mathcal{H}$ be the stationary state. Set $\omega=-\det (C)$. 
For each facial walk $f\in \delta F$ of the orientation system $(G,\rho,\tau)$ represented by a sequence of $\Omega$,  
\begin{equation*}
    f:=(\;\xi_0,e_0,\xi_1,e_1,\dots,\xi_{\kappa-1},e_{\kappa-1}\;), 
\end{equation*} 
 we have 
\begin{equation}\label{eq:key}
\psi_\infty(\xi'_{j+1})=
(-1)^{\tau(e_j)}\omega \;\psi_\infty(\xi_j''),\;(j=0,1,\dots,\kappa-1) 
\end{equation}
where if $\xi_{j+1}\in \delta A_{is}$, then $\xi_{j+1}'=\xi_{j+1}^{-}$, otherwise $\xi_{j+1}$; if $\xi_j\in \delta A_{is}$, then $\xi_{j}''=\xi_{j}^+$, otherwise $\xi_j''=\xi_j$. 
\end{lemma}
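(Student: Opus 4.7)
The plan is to deduce the identity (\ref{eq:key}) directly from the local stationary equations at the two vertices $v_{j}:=t(\xi_j)=o(e_j)$ and $v_{j+1}:=t(e_j)=o(\xi_{j+1})$ that straddle the bridge arc $e_j$, and then to exploit the algebraic constraints imposed by $C$ being unitary with $d\in\mathbb{R}$. The key observation is that when one substitutes the two bridge-update equations into the island-update equation at $v_{j+1}$, the coefficient of the ``off-face'' island value $\psi_\infty(\eta)$ --- where $\eta$ is the incoming island arc at $v_{j+1}$, not lying on the walk $f$ --- vanishes precisely under Assumption~\ref{ass:hd}.

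Concretely, I first record the three local equations satisfied by the stationary state. Setting $\varepsilon:=(-1)^{\tau(e_j)}$ and letting $\eta$ stand for the incoming island arc at $v_{j+1}$ (which is $\rho^{-1}(\xi_{j+1})$ in the interior case, or its ``$+$''-quay variant if that arc is on the boundary --- the specific identification being immaterial), these read
\begin{align*}
\psi_\infty(e_j) &= \varepsilon\bigl[c\,\psi_\infty(\xi_j'') + d\,\psi_\infty(\bar{e}_j)\bigr],\\
\psi_\infty(\bar{e}_j) &= \varepsilon\bigl[c\,\psi_\infty(\eta) + d\,\psi_\infty(e_j)\bigr],\\
\psi_\infty(\xi_{j+1}') &= a\,\psi_\infty(\eta) + b\,\psi_\infty(e_j).
\end{align*}
The first two form a $2\times2$ linear system in $\psi_\infty(e_j),\psi_\infty(\bar{e}_j)$ with determinant $1-d^2$; inverting gives $\psi_\infty(e_j)=\tfrac{c}{1-d^2}\bigl[\varepsilon\,\psi_\infty(\xi_j'')+d\,\psi_\infty(\eta)\bigr]$. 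Substituting this into the third equation yields
\[
\psi_\infty(\xi_{j+1}')=\Bigl(a+\tfrac{bcd}{1-d^2}\Bigr)\psi_\infty(\eta)+\tfrac{\varepsilon\,bc}{1-d^2}\,\psi_\infty(\xi_j'').
\]

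Next I extract the two algebraic identities implied by Assumption~\ref{ass:hd}. Comparing $C^{-1}=C^{*}$ with the cofactor formula of $C$ and using $\bar{d}=d$ yields $a=-\omega d$; unitarity of $C$ gives $|b|^{2}=1-d^{2}$; and the defining relation $\omega=bc-ad$ combined with $a=-\omega d$ produces $bc=\omega(1-d^{2})$. Plugging these into the display above collapses the coefficient of $\psi_\infty(\eta)$ to $-\omega d+\omega d=0$ and simplifies the coefficient of $\psi_\infty(\xi_j'')$ to $\varepsilon\omega$, thereby proving (\ref{eq:key}).

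The main obstacle is conceptual rather than computational: one must notice that $\eta$ is the natural spoiler in the argument (its value carries information from outside the facial walk) and that realness of $d$ is exactly the condition needed to cancel its coefficient. Once that cancellation is spotted, the boundary case requires no separate treatment, because $\eta$ is precisely the arc that might acquire a quay variant near $\delta X$, and its vanishing coefficient makes that ambiguity irrelevant; meanwhile the surviving arcs $\xi_j''$ and $\xi_{j+1}'$ already carry the correct quay labels ($\xi_j^{+}$ on the approach side of the tail and $\xi_{j+1}^{-}$ on the departure side) by the very definition adopted in the lemma.
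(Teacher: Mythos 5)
Your proof is correct and follows essentially the same route as the paper's: both arguments combine the local stationary equations at the two endpoints of the bridge arc $e_j$ with the unitarity identities $a=-\omega d$ and $bc=\omega(1-d^2)$ (valid since $d\in\mathbb{R}$) to propagate $\psi_\infty$ along the facial walk. The paper packages the elimination as a product of two $2\times 2$ transfer matrices (thereby also obtaining the companion relation $\psi_\infty(e_0^L)=\omega'\psi_\infty(e_2^L)$ for the chiral face), whereas you eliminate $\psi_\infty(e_j),\psi_\infty(\bar e_j)$ directly and observe that the coefficient of the off-face island arc vanishes — a presentational difference only.
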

\begin{proof}
For any bridge arcs $e\in A_{br}$, (of course) the vertices $o(e)$ and $t(e)$ are connected by the bridge arcs $e^R_1:=e$ and $e^L_1:=\bar{e}$, moreover the island arcs $e_0^R:=\mathrm{is}(e)$ and 
$e_0^L:=\mathrm{is}^\sharp(e)$ are connected to $o(e)$, while the island arcs $e_2^R:=\mathrm{is}^\sharp(\bar{e})$ and $e_2^L:=\mathrm{is}(\bar{e})$ are connected to $o(t)$. 
Let us set $\psi:=\psi_\infty$. 
The local scatterings at $o(e)$ and $t(e)$ imply 
% The point to get the transfer matrix is to align the ``subscriptions" for each vector: the local eigenequation is rewritten by 
\begin{align}\label{eq:trandfermat}
\begin{bmatrix}
-a & 1 \\ -c' & 0 
\end{bmatrix}
\begin{bmatrix}
\psi(e_0^R) \\ \psi(e_0^L)
\end{bmatrix}
=
\begin{bmatrix}
0 & b \\ -1 & d' 
\end{bmatrix}
\begin{bmatrix}
\psi(e_1^R) \\ \psi(e_1^L)
\end{bmatrix}\text{ and }
\begin{bmatrix}
1 & -a \\ 0 & -c' 
\end{bmatrix}
\begin{bmatrix}
\psi(e_2^R) \\ \psi(e_2^L)
\end{bmatrix}
=
\begin{bmatrix}
b & 0 \\ d' & -1 
\end{bmatrix}
\begin{bmatrix}
\psi(e_1^R) \\ \psi(e_1^L)
\end{bmatrix}
\end{align}
where $c'=(-1)^{\tau(e)}c$, $d'=(-1)^{\tau(e)}d$. 
Therefore we have  
\begin{align*}
\begin{bmatrix}
\psi(e_2^R) \\ \psi(e_2^L)
\end{bmatrix}
& = \begin{bmatrix}
1 & -a \\ 0 & -c' 
\end{bmatrix}^{-1} 
\begin{bmatrix}
b & 0 \\ d' & -1 
\end{bmatrix}
\begin{bmatrix}
0 & b \\ -1 & d' 
\end{bmatrix}^{-1}
\begin{bmatrix}
-a & 1 \\ -c' & 0 
\end{bmatrix}
\begin{bmatrix}
\psi(e_0^R) \\ \psi(e_0^L)
\end{bmatrix} \\
&= \frac{1}{|c|^2\omega'} 
\begin{bmatrix} 
\omega' & 0 \\ 0 & 1
\end{bmatrix}
\begin{bmatrix}
1-{\bar{d'}}^2 & d'-\bar{d'} \\
-d'+\bar{d'} & 1-{d'}^2
\end{bmatrix}
\begin{bmatrix} 
\omega' & 0 \\ 0 & 1
\end{bmatrix}
\begin{bmatrix}
\psi(e_0^R) \\ \psi(e_0^L)
\end{bmatrix}
 \end{align*}
where $\omega'=(-1)^{\tau(e)}\omega$. 
Here the second equality is obtained by the properties of the unitarity matrix $C$, for examples, 
\begin{align}
&a=-\omega'\bar{d'},\;b=\omega'\bar{c'},\notag \\
&|a|^2+|b|^2=|d'|^2+|c'|^2=1 \label{eq:unitarity}
\end{align}
and so on. 
Then $d\in \mathbb{R}$ if and only if $\psi(e_2^R)=\omega'\psi(e_0^R)$ and $\psi(e_0^L)=\omega'\psi(e_2^L)$, 
which is the desired conclusion. 
\end{proof}
The local path structures around the bridges in the blow-up graph  lead  transfer matrices discussed in the spectral analysis on the discrete-time quantum walks on the one-dimensional lattice and gives the conclusion. 

For a facial walk $f=(\xi_0,e_0,\dots,\xi_{\kappa-1},e_{\kappa-1})\in \delta F$, set $f\cap \delta A_{is}=\{\xi_{k_0},\dots,\xi_{k_{q-1}}\}$ and $f\cap \delta X:=\{ x_{k_0},\dots,x_{k_{q-1}} \}$, where $x_{k_j}=t(\xi_{k_j}^{-})$ with $k_0<k_1<\cdots<k_{q-1}$ in the modulus of $q$. 
Here $1\leq q\leq \kappa$. 
(See Figure~\ref{fig:qypr}. )
Note that the $q$ tails interfere with the facial walk $f$. 
% The tail originating the boundary $x_{k_i}$ is denoted by $\mathrm{Tail}_i$. 
Put 
\begin{equation}\label{eq:choucho}
\bowtie_f(\ell,m):=\int_{x_{k_{\ell}}}^{x_{k_{m}}}\tau  \text{ and }\  \mathrm{dist}_f(\ell,m):=k_{\ell}-k_m,
\end{equation}
which are the parity of number of type-$1$ edge between $x_{m}$ and $x_{\ell}$, and the distance between the boundary vertices $k_m$ and $k_{\ell}$ along the facial walk $f$, respectively. 
To simplify the notation, let us put $f\cap \delta X=\{0,1,\dots,q-1\}$. 
We set the identity matrix, the weighted cyclic permutation matrix on $\mathbb{C}^{\{0,\dots,q-1\}}$ by 
\begin{align}\label{eq:op_ex_walk} 
& (I_{f}h)(j)%=(I_{f^{-1}}h)(j)
=h(j), \notag \\
& (P_{f}(\omega)h)(j)=(-1)^{\bowtie_f(j,j-1)} \omega^{\mathrm{dist}_f(j,j-1)} h(j-1)
\end{align}
for any $h\in \mathbb{C}^{\{0,\dots,q-1\}}$ and $j\in \{0,\dots,q-1\}$ in the modulus of $|f|$, respectively.  
Now we are ready to give the theorem for the scattering:
%%%
\begin{theorem}\label{thm:scattering0}
Assume $d\in \mathbb{R}$ and set $\omega=-\det C$. 
The scattering matrix is decomposed into the following $|F|$ unitary matrices as follows:  \[S=\bigoplus_{f\in F} S_{f}, \]
where 
\[ S_{f}=bc P_{f}(\omega)\;(I_{f}-aP_{f}(\omega))^{-1}+dI_{f}. \]
Here the operators induced by each external facial closed walk $I_f$ and $P_f(\omega)$ are defined in (\ref{eq:op_ex_walk}).
\end{theorem}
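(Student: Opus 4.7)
The plan is to fix a face $f \in \delta F$, express the stationary equations on the boundary vertices of $f$ as a finite linear system on $\mathbb{C}^{f \cap \delta X}$, and then solve that system. First, for each boundary vertex $x_{k_j}$ lying on $f$, I would write out the local coin action from Definition~\ref{def:TE1}. Setting $u_j := \psi_\infty(\xi^{+}_{k_j})$, $v_j := \psi_\infty(\xi^{-}_{k_j})$, $\alpha_j := \psi_\infty(t^{+}_{k_j})$ and $\beta_j := \psi_\infty(t^{-}_{k_j})$, the coin yields the two scalar relations $u_j = a v_j + b\alpha_j$ and $\beta_j = c v_j + d\alpha_j$, where the factor $(-1)^{\tau}$ disappears because $\tau \equiv 0$ on all pier arcs by construction.

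Next I would propagate along $f$ from one boundary vertex to the next using Lemma~\ref{lem:key}. Between $x_{k_j}$ and $x_{k_{j+1}}$ the intermediate island arcs of the extended facial walk lie in $A_{is}\setminus\delta A_{is}$, so in the notation of the lemma $\xi' = \xi'' = \xi$ at each intermediate step, and each application of \eqref{eq:key} contributes a factor $(-1)^{\tau(e_m)}\omega$. Telescoping $\mathrm{dist}_f(j{+}1,j)$ such steps starting from $\psi_\infty(\xi^{+}_{k_j}) = u_j$ and ending at $\psi_\infty(\xi^{-}_{k_{j+1}}) = v_{j+1}$ produces
\begin{equation*}
v_{j+1} \;=\; (-1)^{\bowtie_f(j+1,j)}\,\omega^{\mathrm{dist}_f(j+1,j)}\,u_j,
\end{equation*}
which is precisely $v = P_f(\omega) u$ by the definition in \eqref{eq:op_ex_walk}.

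To finish, I substitute $v = P_f(\omega)u$ into the coin equations: from $u_j = a v_j + b\alpha_j$ I obtain $(I_f - a P_f(\omega))u = b\,\bs{\alpha}_f$, hence $u = b\,(I_f - a P_f(\omega))^{-1}\bs{\alpha}_f$ (the inverse exists since $|a|\le 1$ with equality forcing $b=0$, in which case the $bc$-term vanishes and the formula still holds as $S_f = dI_f$). Plugging this into $\bs{\beta}_f = c P_f(\omega)u + d\bs{\alpha}_f$ gives $\bs{\beta}_f = \bigl(bc P_f(\omega)(I_f - a P_f(\omega))^{-1} + d I_f\bigr)\bs{\alpha}_f$, as claimed. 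Finally, each island arc of the blow-up lies in exactly one extended facial walk of the orientable double cover $(G^\tau,\rho\oplus\rho^{-1},\mathrm{id})$, so each boundary vertex of $\tilde{G}(\rho,\tau)$ is contained in a unique face; the inflow/outflow channels therefore split as an orthogonal direct sum indexed by $F$ (with zero blocks for faces carrying no tails), giving $S = \bigoplus_{f \in F} S_f$. Unitarity is automatic from \eqref{eq:defscat} and the general result of \cite{HS}.

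The main obstacle is the bookkeeping in the propagation step: one must be careful that at the boundary vertices the lemma is applied with $\xi' = \xi^-$ and $\xi'' = \xi^+$ rather than the full quay, while at the intermediate vertices neither convention kicks in, and that the accumulated product of $(-1)^{\tau(e_m)}$ factors along $k_j < m < k_{j+1}$ agrees exactly with $(-1)^{\bowtie_f(j+1,j)}$ as defined in \eqref{eq:choucho}. Once this alignment is checked, the rest is a short calculation.
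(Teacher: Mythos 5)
Your argument is correct, and it reaches the stated formula by a genuinely different route from the paper. The paper proves Theorem~\ref{thm:scattering0} by a path-summation: it defines the ``day trip'' and ``$r$-night'' walks around the face $f$, assigns each the weight $bc\,a^{j-i-1}\omega^{\dist_f(j,i)}(-1)^{\bowtie_f(j,i)}(a^q\omega^{\kappa})^r$ using Lemma~\ref{lem:key} and the local coin, and sums the geometric series over $r$ to get the matrix entries $(S_f)_{j,i}$, which are then recognized as the entries of $bcP_f(\omega)(I_f-aP_f(\omega))^{-1}+dI_f$. You instead write the stationary equations at the boundary vertices as the finite linear system $u=aP_f(\omega)u+b\bs{\alpha}_f$, $\bs{\beta}_f=cP_f(\omega)u+d\bs{\alpha}_f$ (with $v=P_f(\omega)u$ supplied by the same Lemma~\ref{lem:key}) and solve it by inverting $I_f-aP_f(\omega)$. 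The two derivations are Neumann-series duals of one another --- expanding your resolvent recovers exactly the paper's sum over the number of nights --- but yours buys a cleaner closed form with no series manipulation and makes the invertibility hypothesis ($|a|<1$, with the degenerate case $b=0$ handled separately) explicit, whereas the paper's walk enumeration makes the physical interpretation of each matrix entry (number of boundary vertices passed, parity of twists encountered) more transparent. Both proofs rely on the same two ingredients, Lemma~\ref{lem:key} for propagation along the face and the local coin relations at the quays, and both obtain the block decomposition $S=\bigoplus_{f}S_f$ from the fact that each boundary vertex lies on a unique facial walk; your bookkeeping of the $\xi^{\pm}$ conventions and of the accumulated $(-1)^{\tau}$ factors matches what the paper does implicitly, so no gap remains.
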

\begin{remark}
Since $S_f$ is the $q\times q$ unitary matrix and
$P_f^q(\omega)=I_f$, where $q=|f\cap \delta X |$,  then $S_f$ is rewritten by 
\[ S_f=\frac{bc}{1-a^{q} \omega^{|f|} }P_f(\omega)\left(I_f+(aP_f(\omega))+(aP_f(\omega))^2+\cdots+(aP_f(\omega))^{q-1}\right) .   \]
From this expression, if $f\cap \delta X=\{0,\dots,q-1 \}$, we have 
\[ (S_f)_{j,i}=\begin{cases} 
\displaystyle bc\;\frac{a^{j-i-1}\omega^{\dist_f(j,i)}}{1-a^{|f\cap \delta X|}} \times (-1)^{\bowtie_f(j,i)} & \text{: $i\neq j$,}\\
\\
\displaystyle bc\;\frac{a^{|f\cap \delta X|-1}\omega^{|f|}}{1-a^{|f\cap \delta X|}} +d & \text{: $i= j$.}
 \end{cases} \]
Here ``$j-i-1$" corresponds to the number of boundary vertices where the facial walk $f$ passes through from $k_i$ to $k_j$. 
\end{remark}
%%%%%
\begin{proof}
% In the following, we give a heuristic proof based on the path counting. 
% More rigorous proof can be established by  extending the case for $\tau\neq 0$ in \cite[Theorem~3.1]{HS2}. 
Pick up a facial walk $f=(\xi_0,e_0,\dots,\xi_{\kappa-1},e_{\kappa-1})\in \delta F$, and recall that $f\cap \delta X=\{ x_{j_0},\dots,x_{j_{q-1}} \}$ and $f\cap \delta A_{is}=\{\xi_{k_0},\dots,\xi_{k_q}\}$. 
Here $1\leq q\leq \kappa$. 
Note that the $q$ tails interfere with the facial walk $f$. The tail incident to the boundary $x_{k_i}$ is denoted by $\mathrm{Tail}_i$. 
Let us give the inflow $1$ from the $\mathrm{Tail}_i$, and consider the outflow to the tail $\mathrm{Tail}_j$ ($i,j\in\{0,1,\dots,q-1\}$).  
Let us see that 
the $(j,i)$ element of the scattering matrix $S$ can be calculated according to the number of ``nights" that a quantum walk stays at the face $f$ with the boundary vertices $x_{k_i}$ and $x_{k_j}$ from the time the quantum walk enters through entrance $i$ to the time the quantum walk leaves through exit $j$.

Consider the case for $i=0$. 
Let $t^{in}_j\in \delta A_{pr}^+$ and $t^{out}_j\in \delta A_{pr}^-$ be the arcs of $\mathrm{Tail}_j$ incident to the vertex $x_{k_j}\in \delta X$ (see Figure~\ref{fig:qypr}).  
Let ``the day trip walk" from $\mathrm{Tail}_0$ to $\mathrm{Tail}_j$ be defined by 
\[ (t_0^{in},\bs{w}_0,t_j^{out} ), \]
where \[ \bs{w}_0:=(\xi_{k_0}^+,e_{k_0},\dots, \xi_{k_1}^-,\xi_{k_1}^+,e_{k_1},\dots,
\xi_{k_{j-1}}^-,\xi_{k_{j-1}}^+,e_{k_{j-1}},\dots, \xi_{k_j}^-), \]
and the ``$r$-night walk" from   $\mathrm{Tail}_0$ to $\mathrm{Tail}_j$ be defined by 
\begin{equation*}
(t_0^{+},\bs{w}_0,
\overbrace{\xi_{k_j}^+,e_{k_j},\dots, \xi_{k_{j+1}}^-,\xi_{k_{j+1}}^+,e_{k_{j+1}},\dots,
\xi_{k_{j-1}}^-,\xi_{k_{j-1}}^+,e_{k_{j-1}},\dots, \xi_{k_j}^-,}^{\text{$r$ times (nights)}}
t_j^{-})
\end{equation*}
By Lemma~\ref{lem:key}, the weight associated with the moving along the facial walk from $\xi_\ell\in A_{is}$ to $\xi_{\ell+1}\in A_{is}$ must be $\omega \times (-1)^{\tau(e_\ell)}$ in the stationary state. 
By the local time evolution denoted by the $2\times 2$-unitary matrix 
\[C=\begin{bmatrix} a & b \\ c & d \end{bmatrix},\] the weights associated with moving from the tail $t_0^{+}\in \delta A_{pr}^+$ to the quay $\xi_{k_0}^+\in A_{is}$, and from the quay $\xi_{k_j}^-\in A_{is}$ to the tail $t_j^{-}\in \delta A_{pr}^-$ are $c$ and $b$, respectively; the weight associated with moving from $\xi_{k_s}^{-}\in A_{is}$ to $\xi_{k_s}^{+}\in A_{is}$ is $a$. 
Remark that the weight of the closed path starting from $\xi_{{k_0}}^+$ and returning back to the same arc $\xi_{k_0}^{+}$ along the boundary face $f$ is $a^q\omega^\kappa$ since $(-1)^{\int_{f} \tau}=1$. 
Then set $\bs{s}^{(r)}(\cdot,0)\in \mathbb{C}^{\{0,1,\dots,q-1\}}$ ($r=0,1,\dots$) as 
the weight of $r$-night walk in the stationary state by $\bs{s}^{(r)}(j,0)=(a^q \omega^\kappa)^r\times  \bs{s}^{(0)}(j,0)$, $(r=1,\dots,\kappa-1)$ with
\[\bs{s}^{(0)}(j,0)
=\begin{cases}
b\;c\;a^{j-1}\;\omega^{\dist_f(j,0)}\;(-1)^{\bowtie_f(j,0)} & \text{: $j\neq 0$}\\
b\;c\;a^{q-1}\;\omega^{\kappa}\; & \text{: $j= 0$}
\end{cases} 
\]
%\begin{center}
% \begin{tabular}{ll}
%  $\varphi_0(\xi_0^+)=b$, &\\
%  $\varphi_0(\xi_1^-)=b (-1)^{\tau(e_0)}\omega$, & 
%  $\varphi_0(\xi_1^+)=b (-1)^{\tau(e_0)}a\omega$, \\
%  $\varphi_0(\xi_2^-)=b (-1)^{\tau(e_0)+\tau(e_1)} a\omega^2$, & 
% $\varphi_0(\xi_2^+)=b (-1)^{\tau(e_0)+\tau(e_1)} a^2\omega^2$ \\
% $\vdots$ & \\
% $\varphi_0(\xi_{j-1}^-)=b(-1)^{\tau(e_0)+\cdots+\tau(e_{j-2})}a^{j-2}\omega^{j-1}$, & $\varphi_0(\xi_{j-1}^+)=b(-1)^{\tau(e_0)+\cdots+\tau(e_{j-2})}a^{j-1}\omega^{j-1}$ \\
% $\varphi_0(\xi_{j}^+)=b(-1)^{\tau(e_0)+\cdots+\tau(e_{j-1})}a^{j-1}\omega^{j}$ & 
% \end{tabular}
% \end{center}
% Then the weight of the $1$-night trip walk $\bs{s}^{(1)}(j,0)$ is described by $\bs{s}^{(1)}(j,0)=\bs{s}^{(1)}(j,0)\times (a\omega)^\kappa$.
% In the same way, the weight of the $r$-night trip walk: $\bs{s}^{(r)}(j,0)=\bs{s}(j,0)\times (a\omega)^{r\kappa}$.

The outflow from $j$ is obtained by the superposition of $\bs{s}^{(r)}(j,0)$'s $(r=0,1,2,\dots)$ because of the constant inflow $1$ at every time step from the tail $t_{k_0}$. Then we have 
\begin{align*}
(S)_{k_j,k_0} &= \sum_{r=0}^\infty \bs{s}^{(r)}(j,0)=
\begin{cases}
bc\frac{a^{j-1}\omega^{k_j-k_0}}{1-a^q\omega^\kappa}\times (-1)^{\bowtie_f(j,0)} & \text{: $j\neq 0$}\\
\\
bc\frac{a^{q-1}\omega^\kappa}{1-a^q\omega^\kappa}+d & \text{: $j= 0$}
\end{cases} \\
&= \frac{bc}{1-a^q\omega^\kappa} \left(P_f(\omega)+a P_f(\omega)^2+\cdots+a^{q-2}P_f^{q-1}(\omega)+a^{q-1}\omega^\kappa I_f\right)_{j,0}-d\delta_{j,0} \\
&= (S_f)_{k_j,k_0}.
\end{align*}
From the symmetricity of the rotation, we have 
\begin{equation}
(S)_{k_j,k_i} = \begin{cases} 
\displaystyle
 bc\; \frac{a^{j-i-1} \omega^{\dist_f(j,i)}}{1-a^q\omega^\kappa}\times (-1)^{\bowtie_f(j,i)} & \text{: $i\neq j$,}\\
\\
 \displaystyle bc\; \frac{a^{q-1}\omega^\kappa}{1-a^q\omega^\kappa}+d & \text{: $i= j$,}
\end{cases}
\end{equation}
which leads the conclusion. 
\end{proof}
Let $(G,\rho_1,\tau_1)$ and $(G,\rho_2,\tau_2)$ be the rotation systems which are equivalent to each other. Let $S_1$ and $S_2$ be the scattering matrices of those resulting embbedings, respectively. 
\begin{proposition}\label{prop:scatiso}
Let $S_1$ and $S_2$ be the above. 
Recall that $\Delta(x)$ is defined in (\ref{eq:Delta}).
Then we have 
\[ S_2=D^*S_1D, \]
where $D=\oplus_{f\in F}D_f$ is the diagonal matrix such that
\[ D_f=\mathrm{diag}[ e^{-i\Delta(t(e_0))},e^{-i\Delta(t(e_1))},\dots,e^{-i\Delta(t(e_{\kappa-1}))} ] \]
for a face $f=(e_0,e_1,\dots,e_{\kappa-1})$. 
\end{proposition}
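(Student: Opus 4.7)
The plan is to obtain this proposition as a direct corollary of the unitary equivalence $W' = \mathcal{U}^* W \mathcal{U}$ established in Proposition~\ref{prop:uniequ}, read off at the boundary pier arcs. First I would reduce to the case in which the two rotation systems differ by a single application of Operation $(\star)$ at a vertex $x \in X$; the general equivalence decomposes as a composition of such moves, and the corresponding diagonal matrices compose, so the $\Delta$ in the general statement is just the cumulative sum of the single-step $\Delta$'s along the reduction.

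The next step is to show that the stationary state for system $2$ is obtained from that of system $1$ by applying $\mathcal{U}^*$, provided the inflows are matched consistently. Since $\mathcal{U}_\phi$ only relabels arcs and $\mathcal{U}_\Delta$ multiplies by $e^{\im\,\Delta(t(a))}$, with $\Delta$ constant along each tail (because $\theta_1$ and $\theta_2$ vanish on island arcs and on tail arcs, using $\tau \equiv 0$ on $A_{tl}$), $\mathcal{U}$ preserves the class of states that are asymptotically constant along each tail. Uniqueness of the stationary state given the inflow then forces $\Psi_\infty^{(2)} = \mathcal{U}^* \Psi_\infty^{(1)}$.

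Reading off the pier arcs, for the tail attached to boundary vertex $x_j \in \delta X$, both the inflow $\bs{\alpha}_{in}^{(2)}(x_j)$ and the outflow $\bs{\beta}_{out}^{(2)}(x_j)$ are multiplied by the same phase $e^{-\im\,\Delta(x_j)}$ (the constancy of $\Delta$ along the tail makes the phase on the outflow arc agree with that on the inflow arc, and $\mathcal{U}_\phi$ contributes only a relabeling that is absorbed by the identification of the boundaries). Hence, writing $D$ for the diagonal matrix whose $j$-th entry is $e^{-\im\,\Delta(x_j)}$, we have $\bs{\alpha}_{in}^{(2)} = D^* \bs{\alpha}_{in}^{(1)}$ and $\bs{\beta}_{out}^{(2)} = D^* \bs{\beta}_{out}^{(1)}$, and therefore $\bs{\beta}_{out}^{(2)} = D^* S_1 \bs{\alpha}_{in}^{(1)} = D^* S_1 D\, \bs{\alpha}_{in}^{(2)}$, giving $S_2 = D^* S_1 D$.

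The principal bookkeeping obstacle, which I expect to be the main difficulty, is to identify the single-vertex phase $e^{-\im\,\Delta(x_j)}$ with the stated entry $e^{-\im\,\Delta(t(e_j))}$ indexed by the bridge arcs of the face walk, and to match the block decomposition with $S = \bigoplus_{f \in F} S_f$ of Theorem~\ref{thm:scattering0}. The key observation is that along a face walk $f = (\xi_0,e_0,\ldots,\xi_{\kappa-1},e_{\kappa-1})$, the vertex $t(e_j)$ is joined to the subsequent boundary vertex via island arcs (including the quay arcs $\xi^\pm$), on which $\theta_1$ and $\theta_2$ both vanish; hence $\Delta(t(e_j))$ equals $\Delta$ at the corresponding boundary vertex. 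This legitimizes the face-indexed labeling and yields the block-diagonal factorization $D = \bigoplus_{f \in F} D_f$, completing the proof.
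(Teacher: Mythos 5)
Your proposal is correct and takes essentially the same route as the paper: the paper's entire proof is the single sentence that Proposition~\ref{prop:uniequ} immediately implies the claim, and your argument simply spells out that implication (transporting the stationary state by $\mathcal{U}^*=\mathcal{U}_\phi^*\mathcal{U}_\Delta^*$, using constancy of $\Delta$ along tails and island/quay arcs to read off the diagonal phases at the boundary).
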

\begin{proof}
Proposition~\ref{prop:uniequ} immediately leads to the conclusion. 
\end{proof}

Let us consider the case for the following special assignment of the tail, which can be constructed independently of the embedding. \\

\noindent {\bf The hedgehog tail assignment}:
We call the hedgehog tail assignment if   
\begin{equation}\label{eq:def:hedgehog}
\delta A_{is}=A_{is},
\end{equation} 
which is the setting of the tails so that a tail is inserted between each vertex in the islands. \\

In the hedgehog tail assignment, $\delta F=F$ holds. 
Pick up a facial walk $f=(\xi_0,e_0,\dots,\xi_{\kappa-1},e_{\kappa-1})\in \delta F=F$. 
Since  $\mathrm{dist}_f=1$, $|\delta X \cap f|=|f|$, 
$P_f(\omega)=\omega P_f$, where $P_f:=P_f(1)$ such that 
\begin{equation}\label{eq:hedgehogP}
(P_fh)(j)=(-1)^{\tau(e_{j-1})}h(j-1)
\end{equation}
for any $j=0,1,\dots,|f|-1$ and $h\in \mathbb{C}^{\{x_0,\dots,x_{|f|-1}\}}$.
Then Theorem~\ref{thm:scattering0} implies Theorem~\ref{thm:scattering} which is the special case for the hedgehog. 

\subsection{Detection of the orientability}
Let us estimate whether the underlying closed surface is orientable or not by observing the outflow of the internal graph to an inflow under the Assumption~\ref{ass:hd}. 

By the hedgehog assignment of the tails in (\ref{eq:def:hedgehog}), $\delta X$ is isomorphic to $A_{br}$
by the following one-to-one correspondance $\varphi$ between $A_{br}$ and $\delta X$: 
\begin{equation}\label{eq:bijection}
\varphi(e) = t(\mathrm{is}^\sharp(\bar{e}))
%\varphi(e) = t(\xi^-)\in\delta X \text{ with } t(e)=o(\xi^-)
\end{equation}
for any $e\in A_{br}$ (see Figure~\ref{fig:qypr}).

Here  the definitions of $\xi^-$ and $\xi^+$ are defined in Definition \ref{def:BUT}.
We label the scattering matrix by $e\in A_{br}$ by using the bijection map $\varphi$ from now on.  
Let us introduce the following subset of $A_{br}$. For any islands of $x\in X^\tau$,  we define 
\[ \delta A(x):=\{e\in A_{br} \;|\; t(e)\in t(A_{is}(x))\}, \]
which corresponds to the set of the vertices having a tail in the island $x$. 

For example, let us insert $(+1)$-inflow into a single tail. 
The following theorem tells us that once we find an island emitting outflows whose signatures are not invariant, the underlying closed surface must be unorientable. 
\begin{theorem}\label{thm:detection}
For any islands of $x,y\in X^\tau$ with $x\neq y$, 
let the submatrix of $S$ formed from rows $\delta A(x)$ and columns $\delta A(y)$ be denoted by 
\[ S|_{x,y}:=[(S)_{\varphi(e),\varphi(e')}]_{e\in \delta A(x),\; e'\in \delta A(y)}. \]
Then 
under Assumption~\ref{ass:hd} with $a>0$,  
the underlying surface is orientable if and only if the signature of all the elements of 
$(S|_{x,y})_{\varphi(e),\varphi(e')}$'s with $(S|_{x,y})_{\varphi(e),\varphi(e')}\neq 0$ are invariant.

% the signature of the element of scattering matrix 
% \[ \mathrm{sgn}((S)_{e_x,e_y}) \]
% is invariant for any $(e_x,e_y)\in \{ e,e'\in A^\tau;\;|\; (S)_{e,e'}\neq 0,\;t(e)=x,t(e')=y\text{ in $G^\tau$}\}$.
\end{theorem}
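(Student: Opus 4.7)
The plan is to use Propositions~\ref{prop:uniequ} and~\ref{prop:scatiso} to reduce to the twist-free case, and to observe that at the level of each submatrix $S|_{x,y}$ the induced transformation is merely a global sign flip, so the property ``all nonzero entries share a common sign'' is invariant under Operation \textbf{iii}.

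First, I would verify the key structural fact: a single Operation \textbf{iii} yields $S'=D^{*}SD$ with $D$ a real $\pm 1$ diagonal matrix (since $\Delta(\cdot)\in\{0,\pi\}$, the $1$-forms $\theta_j\in\mathcal{A}$ being $\{0,\pi\}$-valued), and $\Delta$ is constant on each island: any two boundary vertices of a given island are joined by a path of island arcs, along which $\theta_2-\theta_1$ vanishes by the defining property of $\mathcal{A}$. Writing the common value as $D_x\in\{\pm 1\}$, this gives $S'|_{x,y}=D_x\,D_y\,S|_{x,y}$, a global sign flip of the submatrix, which preserves the common-sign property. Iterating handles any composition of Operation \textbf{iii}'s.

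For the forward direction, if $(G,\rho,\tau)$ is orientable then Operations \textbf{i}--\textbf{iv} produce an equivalent system $(G,\rho',0)$, for which Theorem~\ref{thm:scattering} yields
\[
(S'_{f'})_{j,i}=\frac{bc\,a^{j-i-1}\,\omega^{j-i}}{1-a^{|f'|}\omega^{|f'|}}\qquad(i\neq j).
\]
Under Assumption~\ref{ass:hd} with $a>0$, unitarity of $C$ together with $d\in\mathbb{R}$ forces $a=|d|$, $\omega=-a/d\in\{\pm 1\}$, and $bc\in\mathbb{R}$. A short inspection of the island--bridge alternation along a hedgehog facial walk shows that the parity of the step count $j-i$ between any $\varphi(e)\in\delta A(x)$ and $\varphi(e')\in\delta A(y)$ depends only on the pair $(x,y)$; hence $\omega^{j-i}$ is constant on $S'|_{x,y}$, every nonzero entry carries the common sign $\mathrm{sgn}(bc)\,\omega^{j-i}$, and by the first step the same uniformity transfers to $S|_{x,y}$.

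For the converse, assume $(G,\rho,\tau)$ is non-orientable. After a spanning-tree reduction there exists a chord $e^{*}$ with $\tau(e^{*})=1$; take $x,y$ to be the two islands incident to $e^{*}$ in $G^{\tau}$. I would exhibit two pairs $(e,e'),(\tilde e,\tilde e')\in\delta A(x)\times\delta A(y)$ producing nonzero entries of $S|_{x,y}$ that lie on a common face $f$, whose facial-walk segments between the corresponding boundary vertices have step counts of the same parity but twist parities differing by one (exactly one of the two segments traverses $e^{*}$). The $\omega^{j-i}$ factors then agree, while the $(-1)^{\bowtie_f}$ factors differ, so the two entries of $S|_{x,y}$ carry opposite signs; by the first step this mixed-sign pattern persists in $(G,\rho,\tau)$. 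The main obstacle is precisely this converse: realizing the odd-twist cycle guaranteed by non-orientability as the symmetric difference of two facial-walk segments of matching length parity between the same pair of islands, which requires a short case analysis of the local structure of $f$ around $e^{*}$.
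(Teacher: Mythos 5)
Your strategy---reduce to the twist-free case via the unitary equivalence of Propositions~\ref{prop:uniequ} and~\ref{prop:scatiso}, note that each block $S|_{x,y}$ only acquires a global sign $D_xD_y$ because $\Delta$ is constant on every island (the $\theta$'s vanish on island arcs), and then compute directly---is genuinely different from the paper's argument, which never changes gauge: it reads the sign of each nonzero entry straight off the formula of Theorem~\ref{thm:scattering0} as $(-1)^{\bowtie_f(t(e),t(e'))}$ and then identifies constancy of $\bowtie_f$ with the closed-path criterion $\int_c\tau=0$ for orientability via Lemma~\ref{lem:welldefine}. Your gauge-invariance step is correct and cleanly done. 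But the proposal has two genuine gaps.

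First, your forward direction rests on the claim that the step count $j-i$ between a boundary vertex of island $x$ and one of island $y$ along a facial walk has parity depending only on the pair $(x,y)$, so that $\omega^{j-i}$ is constant on the block. This is false in general: in the hedgehog setting $j-i$ is the number of islands the facial walk traverses between the two visits, i.e.\ the length of a walk from $x$ to $y$ in $G^\tau$, and a single face can revisit the pair $(x,y)$ with gaps of both parities unless $G^\tau$ happens to be bipartite. So your argument only closes when $\omega=1$ (equivalently $d=-a<0$), which is in effect the only case the paper's own sign computation addresses; for $\omega=-1$ the factor $\omega^{\mathrm{dist}_f(j,i)}$ has to be confronted rather than asserted constant. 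Second, and more seriously, the converse is not actually proved: the existence, inside one facial walk, of two segments joining the same ordered pair of islands with equal length parity but different twist parity is precisely the hard content of the theorem, and you defer it to ``a short case analysis'' that is not supplied and whose feasibility for an arbitrary odd-twist chord $e^{*}$ is unclear (the two segments must lie on a common face, join the same two islands, and---because of your $\omega^{j-i}$ bookkeeping---also match in length parity). The paper sidesteps this by arguing at the level of $\bowtie_f$ alone: mixed signs in some block are equivalent to path-dependence of $\bowtie_f$, which by Lemma~\ref{lem:welldefine} and the Operation of Section~\ref{sect.RS} is equivalent to non-orientability. As written, your proposal completes neither implication.
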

\begin{proof}
Since $a>0$, 
\[ (S)_{\varphi(e),\varphi(e')}/|S_{\varphi(e),\varphi(e')}|=
 \begin{cases}(-1)^{\bowtie_f(t(e),t(e'))}
 % (-1)^{\int_{t(e)}^{t(e')}\tau}
& \text{: $e$ and $e'$ are included in the same face, }\\
0 & \text{: otherwise.}
\end{cases}
\]
Here %$\int_{x}^{y}\tau:
$ \bowtie_f (z,z')
=\int_{z}^{z'}\tau=\sum_{e\in f}\tau(e)=\sum_{e\in f\cap A_{br}}\tau(e)$ because $\tau(\xi)=0$ for any island arcs $\xi$. 
We remark that the resulting embedding of the rotation system $(G,\rho,\tau)$ is orientable if and only if $\int_c \tau=0$ for any closed paths in the rotation system $(G,\rho,\tau)$ by {\bf Operation} in Section~\ref{sect.RS}. 
This implies that if $\bowtie_f(z,z')$ is independent of the choice of path, then $(G,\rho,\tau)$ is orientable. 
We also remark that for any rotation system $(G,\rho,\tau)$ whose resulting surface is orientable, there exists an equivalent rotation system $(G,\rho',0)$ to it. 
Let us see that %$\int_{x}^y\tau$ is 
$\bowtie_f (z,z')$ is independent of the choice of path if the underlying surface is orientable: 
assume that $(G,\rho,\tau)$ is orientable. If there exists two paths $p$ and $p'$ with $o(p)=o(p')$, $t(p)=t(p')$ and $\int_p \tau\neq \int_p' \tau$. However 
Lemma~\ref{lem:welldefine} implies that $\int_{p\cup \bar{p'}}\tau=\int_{p}\tau-\int_{p'}\tau=0$, which is contradiction. 
Therefore $\bowtie_f(t(e),t(e'))$ is invariant under the choices from $e\in \delta A(x)$ and $e'\in \delta A_y$ if and only if the underlying surface is orientable. 
\end{proof}
%\end{MainTheorem}
%%%%%%%%%%%%%%%%%%%%%%%%
\section{Comfortability (Proof of main theorem)}\label{sect:proofComf}
In this section, we discuss the comfortability under Assumption~\ref{ass:hd}. 
% For a bridge arc $e\in A_{br}$, the facial walk including $e$ is uniquely determined. Such a facial walk is described by $F(e)\in F$. 
Note that because of the hedgehog boundary condition, the set of tails has a bijection map to the set of bridges $\varphi$ in (\ref{eq:bijection}). 
% Then for a bridge arc $e\in A_{br}$, put $t^2(e):=t(\mathrm{is}^\sharp (\bar{e}))\in\delta X$.  
Let us set $\eta\in \mathbb{C}^{A_{br}}$ by 
\[ \eta(e):=\frac{1}{bc\omega} (Q\bs{\alpha}_{in})(t(\mathrm{is}^\sharp (\bar{e}))). \]
See (\ref{eq:isis}) and Figs.~\ref{fig:isbr}, \ref{fig:qypr} for the difinitions of $\mathrm{is}(e)$ and $\mathrm{is}^\sharp(e)$. 
Here the matrix $Q:=S-dI$ represents the scattering after a quantum walker penetrates the interior at least once. 
By using the notation $\eta$ of such a scattering, the stationary state $\psi_\infty$ is expressed as follows. 
\begin{proposition}\label{prop:stationary}
Under Assumption~\ref{ass:hd}, 
for any bridge arc $e\in A_{br}$, we have 
\[ \psi_\infty(e)=\;\omega\;(\;\eta(e)+(-1)^{\tau(e)}d\eta(\bar{e})\;).  \]
For a facial walk $f$, 
\[ f:=(\;\xi_{0}^{+},e_{0},\;\;\xi_{1}^{-},\xi_1^{+},e_{1},\;\;\xi_{2}^{-},\xi_2^{+},e_{2},\;\dots\\
    \dots,\;\xi_{\kappa-1}^{-},\xi_{\kappa-1}^{+},e_{\kappa-1},\;\;\xi_{0}^{-}\;),\]
where $(\xi_j^-,\xi_j^+)\in \delta A$ and $e_j\in A_{br}$,    
the stationary state at the island arc is     
\[ \psi_\infty(\xi_{m}^{+})=(-1)^{\tau(e_m)} b\eta(e_m),\;\psi_\infty(\xi_{m+1}^{-})=\omega b\eta(e_m),    \]
for any $m\in \mathbb{Z}_{\kappa}$. 
\end{proposition}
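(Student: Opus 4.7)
The plan is to determine $\psi_\infty$ in three stages: first on the outgoing quays, using the scattering output $S\bs{\alpha}_{in}$ and the identification $Q = S - dI$; then on all internal island arcs of every facial walk, by propagating the quay values with Lemma~\ref{lem:key}; and finally on the bridge arcs, by solving a $2\times 2$ linear system that arises from writing the local time evolution at both endpoints of a bridge pair $(e,\bar e)$ simultaneously.

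For the quays, at each boundary vertex $x$ the outgoing pier arc satisfies the bridge update with $\tau$ vanishing on the tail, so $\psi_\infty(t_x^-) = c\,\psi_\infty(\xi_x^-) + d\,\bs{\alpha}_{in}(x)$. The left-hand side is $(S\bs{\alpha}_{in})(x)$, hence
\[ \psi_\infty(\xi_x^-) = \frac{1}{c}\bigl((S-dI)\bs{\alpha}_{in}\bigr)(x) = \frac{1}{c}(Q\bs{\alpha}_{in})(x). \]
For an arc $e_m$ in a facial walk $f=(\xi_0,e_0,\dots,\xi_{\kappa-1},e_{\kappa-1})$, the rule $e_m = \mathrm{br}(\xi_{m+1})$ forces $o(\xi_{m+1}) = t(e_m) = o(\bar e_m)$; since the island at $t(e_m)$ admits a unique outgoing arc, one has $\mathrm{is}^\sharp(\bar e_m) = \xi_{m+1}$, and after the hedgehog splitting $t(\mathrm{is}^\sharp(\bar e_m))$ is exactly the boundary vertex $x_{m+1}$ inserted on $\xi_{m+1}$. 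Substituting into the definition of $\eta$ yields $\psi_\infty(\xi_{m+1}^-) = \omega b\,\eta(e_m)$, the first island formula.

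Lemma~\ref{lem:key} in the hedgehog case reads $\psi_\infty(\xi_{m+1}^-) = (-1)^{\tau(e_m)}\omega\,\psi_\infty(\xi_m^+)$; multiplying by $(-1)^{\tau(e_m)}\omega^{-1}$ gives $\psi_\infty(\xi_m^+) = (-1)^{\tau(e_m)}b\,\eta(e_m)$, the second island formula. For the bridge formula, the time-evolution rule at $o(e_m)$ together with $\mathrm{is}(e_m)=\xi_m^+$ and the island value gives
\[ \psi_\infty(e_m) = bc\,\eta(e_m) + (-1)^{\tau(e_m)}d\,\psi_\infty(\bar e_m), \]
and the symmetric computation along the chiral facial walk through $\bar e_m$ produces $\psi_\infty(\bar e_m) = bc\,\eta(\bar e_m) + (-1)^{\tau(e_m)}d\,\psi_\infty(e_m)$, using $\tau(\bar e_m)=\tau(e_m)$. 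Solving this coupled $2\times 2$ system and invoking the identity $bc = \omega(1-d^2)$, which comes from the unitarity of $C$ with $d\in\mathbb{R}$ (so that $b=\omega\bar c$ and $|c|^2 = 1-d^2$, exactly as in the proof of Lemma~\ref{lem:key}), collapses the fraction $bc/(1-d^2)$ to $\omega$ and delivers the bridge formula $\psi_\infty(e_m) = \omega\bigl(\eta(e_m) + (-1)^{\tau(e_m)}d\,\eta(\bar e_m)\bigr)$.

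The principal care is the bookkeeping that identifies $\mathrm{is}^\sharp(\bar e_m)$ with the facial-walk arc $\xi_{m+1}$ and, after the hedgehog splitting, its quay terminus with the boundary vertex $x_{m+1}$; once this identification is in place, Lemma~\ref{lem:key} and the local time-evolution rules do all the remaining work, and the only nontrivial algebraic simplification is the identity $bc/(1-d^2)=\omega$ from the unitarity of $C$.
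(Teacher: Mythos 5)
Your proposal is correct and follows essentially the same route as the paper: the quay values come from $\psi_\infty(t_x^-)=c\,\psi_\infty(\xi_x^-)+d\,\bs{\alpha}_{in}(x)$ and the definition of $S$, the $\xi_m^+$ values follow by inverting Lemma~\ref{lem:key}, and the bridge values come from the local unitary relations at the two endpoints of $\{e,\bar e\}$. The only cosmetic difference is in the last step, where the paper reuses the transfer-matrix identity (\ref{eq:ebri}) already derived in the proof of Lemma~\ref{lem:key} to express $\psi_\infty(e_m)$ directly through the two adjacent island values, whereas you eliminate between the two coupled update equations and invoke $bc=\omega(1-d^2)$ — algebraically the same computation, and your bookkeeping identifying $\mathrm{is}^\sharp(\bar e_m)$ with $\xi_{m+1}$ matches the paper's conventions.
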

\begin{proof}
Let the outflow from the quay $(\xi_{m+1}^-,\xi_{m+1}^+)$ be $\beta_{m+1}$. 
Then $\beta_{m+1}=d\alpha_{m+1}+c\psi_\infty(\xi_{m+1}^-)$, which implies 
\begin{equation}\label{eq:xi-}
\psi_\infty(\xi_{m+1}^-)=(1/c)(S_f-dI_f)\bs{\alpha}_{in}(m+1)=\omega b\eta(e_m). 
\end{equation}
by the definition of the scattering matrix in (\ref{eq:defscat}). 
Lemma~\ref{lem:key} leads 
\begin{equation}\label{eq:xi+} 
\psi_\infty(\xi_{m}^+)=(-1)^{\tau(e_m)}\omega^{-1}\psi_\infty(\xi_{m+1}^-) =(-1)^{\tau(e_m)}b\eta(e_m).
\end{equation}
Under the assumption of $d\in \mathbb{R}$, we have 
\begin{align*}
\begin{bmatrix}
\psi_\infty(e_1^R) \\ \psi_\infty(e_1^L)
\end{bmatrix}
& = \frac{1}{b}
\begin{bmatrix} d' & -b \\ 1 & 0 \end{bmatrix} \begin{bmatrix} -a & 1 \\ -c' & 0 \end{bmatrix}
\begin{bmatrix} \psi_\infty(e_0^R) \\ \psi_\infty(e_0^L) \end{bmatrix} 
\\
&= \frac{1}{b}\begin{bmatrix} 1 & d' \\ d' & 1 \end{bmatrix}
\begin{bmatrix} \omega' & 0 \\ 0 & 1  \end{bmatrix}
\begin{bmatrix} \psi_\infty(e_0^R) \\ \psi_\infty(e_0^L) \end{bmatrix}\\
&= \frac{\omega'}{b}\begin{bmatrix} 1 & d' \\ d' & 1 \end{bmatrix}
\begin{bmatrix} \psi_\infty(e_0^R) \\ \psi_\infty(e_2^L) \end{bmatrix}.
\end{align*}
Here we used (\ref{eq:trandfermat}), (\ref{eq:unitarity}) and (\ref{eq:key}) in the first, the second and the third equalities, respectively.
Then it holds that
\begin{equation}\label{eq:ebri}
\psi_\infty(e_1^R)=\frac{\omega'}{b}(\psi_\infty(e_0^R)+d'\psi_\infty(e_2^L)),\;
\psi_\infty(e_1^L)=\frac{\omega'}{b}(d'\psi_\infty(e_0^R)+\psi_\infty(e_2^L)).
\end{equation}
% by Lemma~\ref{lem:key}. 
Remark that 
% the arcs $e_1^R$, $e_1^L$ are bridge arcs with $\bar{e_1^L}=e_1^R=e_m$ and $e_0^R=\mathrm{is}(e_m)$, 
% $e_2^L=\mathrm{is}(\bar{e}_m)$
% $e_2^R=\xi_{m+1}^-$, $e_0^L=\mathrm{is}^\sharp(e)$,
if we put $e_1^R=e_m\in A_{br}$, then 
\begin{align*}
\psi_\infty(e_0^R) =(-1)^{\tau(e_m)}b\eta(e_m) \text{ and }
\psi_\infty(e_2^L) =(-1)^{\tau(\bar{e}_m)}b\eta(\bar{e}_m)
\end{align*}
hold by (\ref{eq:xi+}). 
Then we obtain
\[ \psi_\infty(e_m)=\omega( \eta(e_m)+(-1)^{\tau(e_m)}d\eta(\bar{e}_m) ) \]
by inserting %(\ref{eq:xi-}) and 
(\ref{eq:xi+}) into (\ref{eq:ebri}), which is the desired conclusion. 
\end{proof}
Let $\sigma:\mathbb{C}^{A_{br}}\to \mathbb{C}^{A_{br}}$ be the flip-flop with the twist $\tau$ such that 
\[ (\sigma \psi)(a)=(-1)^{\tau(a)}\psi(\bar{a})  \]
for any $\psi\in A_{br}$ and $e\in A_{br}$, and set $Q:\mathbb{C}^{A_{br}}\to \mathbb{C}^{A_{br}}$ by 
\[ Q=S-dI. \]  
Here the scattering matrix $S$ is regarded as the operator on $\mathbb{C}^{A_{br}}$ with the bijection map $A_{br}\to \delta X$ by $x= t(\mathrm{is}^\sharp (\bar{e}))\in \delta X$ for any $e\in A_{br}$(: note that the boundary is the hedgehog). 
By using the expression of the stationary state $\psi_\infty$ in Proposition~\ref{prop:stationary}, 
the comfortability $\mathcal{E}$ is described as follows. 
\begin{proposition}\label{prop:comfgen}
Under Assumption~\ref{ass:hd}, 
the comfortability with the inflow $\bs{\alpha}_{in}$ is described by 
\[ \mathcal{E}=\frac{1}{|c|^2} || (S-dI)\bs{\alpha}_{in} ||^2+\frac{1}{2|bc|^2}||(\sigma+dI)(S-dI)\bs{\alpha}_{in}||^2. \]
\end{proposition}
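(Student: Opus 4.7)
The plan is to split the squared norm $\|\psi_\infty|_{G_o}\|^2$ according to the disjoint decomposition of internal arcs into bridges and islands, namely $A_{br}\sqcup \tilde{A}_{is}$ (under the hedgehog, $\tilde{A}_{is}=\delta A_{qy}^{+}\sqcup \delta A_{qy}^{-}$), and then express each piece explicitly using Proposition~\ref{prop:stationary}. Because every facial walk, as sequenced in that proposition, visits each bridge and each quay arc exactly once, and every bridge lies in exactly one facial walk, the sums over $f\in F$ of the local contributions collapse to sums over $A_{br}$. So I would organize the computation around this face-by-face accounting and the auxiliary function $\eta\in\mathbb{C}^{A_{br}}$.

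For the island contribution, Proposition~\ref{prop:stationary} gives $|\psi_\infty(\xi_m^+)|^2=|b|^2|\eta(e_m)|^2$ and $|\psi_\infty(\xi_{m+1}^-)|^2=|b|^2|\eta(e_m)|^2$ since $|\omega|=1$ (the unitarity of $C$ forces $|\det C|=1$). Summing these two terms over $m=0,\dots,|f|-1$ within each face and then over all $f\in F$, together with the fact that $(e_m)$ runs over $A_{br}$ as $f$ varies, yields $\|\psi_\infty|_{\tilde{A}_{is}}\|^2=2|b|^2\|\eta\|^2$. For the bridge contribution, I rewrite $\psi_\infty(e)=\omega(\eta(e)+d(\sigma\eta)(e))$ using the very definition of $\sigma$; since $|\omega|=1$, this gives $\|\psi_\infty|_{A_{br}}\|^2=\|(I+d\sigma)\eta\|^2$. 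The key observation is that $\sigma^2=I$ (because $\tau(\bar e)=\tau(e)$), so $\sigma$ is a self-adjoint unitary involution, and since $d\in\mathbb{R}$ we have $\sigma(I+d\sigma)=\sigma+dI$; unitarity of $\sigma$ then gives $\|(I+d\sigma)\eta\|=\|(\sigma+dI)\eta\|$.

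Combining the two pieces, $\mathcal{E}=\tfrac12\|\psi_\infty|_{G_o}\|^2=|b|^2\|\eta\|^2+\tfrac12\|(\sigma+dI)\eta\|^2$. Finally, the definition $\eta(e)=(bc\omega)^{-1}(Q\bs{\alpha}_{in})(t(\mathrm{is}^\sharp(\bar e)))$ together with the bijection $\varphi:A_{br}\to\delta X$ from~(\ref{eq:bijection}) lets me identify $\eta$ with $(bc\omega)^{-1}Q\bs{\alpha}_{in}$, so $\|\eta\|^2=|bc|^{-2}\|Q\bs{\alpha}_{in}\|^2$ and $\|(\sigma+dI)\eta\|^2=|bc|^{-2}\|(\sigma+dI)Q\bs{\alpha}_{in}\|^2$. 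Substituting and simplifying $|b|^2/|bc|^2=1/|c|^2$, together with $Q=S-dI$, gives the claimed identity.

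The only nontrivial point is the reduction of $\|(I+d\sigma)\eta\|$ to $\|(\sigma+dI)\eta\|$, which is where the hypotheses $d\in\mathbb{R}$ and the self-adjointness of $\sigma$ are essential; the rest is bookkeeping on facial walks and the $\varphi$-identification of $A_{br}$ with $\delta X$. I expect the bookkeeping of island contributions, making sure that the pair $(\xi_m^+,\xi_{m+1}^-)$ is counted exactly once per bridge across all faces, to be the easiest place to slip up, so I would justify it carefully by citing the fact that each directed arc appears in exactly one facial walk of the rotation system $(G^\tau,\rho\oplus\rho^{-1},\mathrm{id})$.
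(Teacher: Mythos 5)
Your proposal is correct and follows essentially the same route as the paper: the same island/bridge decomposition of $\|\psi_\infty|_{G_o}\|^2$, the same use of Proposition~\ref{prop:stationary} together with the face-by-face accounting, and the same identification of $\eta$ with $(bc\omega)^{-1}Q\bs{\alpha}_{in}$. Your shortcut $\|(I+d\sigma)\eta\|=\|(\sigma+dI)\eta\|$ via $\sigma^2=I$ and unitarity is a slightly cleaner version of the paper's direct expansion of the square, but it is the same computation in substance.
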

\begin{proof}
The comfortability $\mathcal{E}$ is decomposed into
\[ \mathcal{E}=\mathcal{E}^{island}+\mathcal{E}^{bridge}. \]
Here $\mathcal{E}_{island}=(1/2)\sum_{e\in A_{is}} |\psi_\infty(e)|^2$ and 
$\mathcal{E}^{bridge}=(1/2)\sum_{e\in A_{br}} |\psi_\infty(e)|^2$. 
\begin{enumerate}
\item Island: 
Since every island arc belongs to unique facial walk, $\mathcal{E}^{island}$ can be further decomposed into facial walks by  
\[ \mathcal{E}^{island}=\sum_{f\in F} \mathcal{E}_f^{island}, \]
where $\mathcal{E}_f^{island}=\sum_{e\in f\cap A_{is}} |\psi_\infty(e)|^2$.
By Proposition~\ref{prop:stationary}, the comfortability on the island is deformed by  
\begin{align*}
\mathcal{E}_f^{island} &= \frac{1}{2} \sum_{m=0}^{|f|-1} \left\{\;|(-1)^{\tau(e_m)}b\eta(e_m)|^2+|\omega b\eta(e_m)|^2\;\right\} \\
&= |b|^2 \sum_{m=0}^{|f|-1} \frac{1}{|bc\omega|^2}| (Q_f\bs{\alpha}_f)(m+1) |^2 \\
&= \frac{1}{|c|^2} ||\; (S_f-dI_f)\bs{\alpha}_f \;||^2. 
\end{align*}
Then we have 
\begin{equation}\label{eq:island}
\mathcal{E}^{island}=\sum_{f\in F} \mathcal{E}_f^{island}=\frac{1}{|c|^2}||(S-dI)\bs{\alpha}_{in}||^2.
\end{equation}
\item Bridge: 
By Proposition~\ref{prop:stationary}, we have 
\begin{align}
\mathcal{E}^{bridge} &= \frac{1}{2} \sum_{e\in A_{br}} |\omega (\eta(e)+(-1)^{\tau(e)}d\eta(\bar{e}))|^2 \notag \\
&= \frac{1+d^2}{2} \sum_{e\in A_{br}} |\eta(e)|^2+d\cdot \mathrm{Re}\left[ \sum_{e\in A_{br}}(-1)^{\tau(e)}\eta(e)\overline{\eta(\bar{e})} \right] \notag\\
&= \frac{1}{2}\left\{ (1+d^2)\;||\eta||^2+2d\cdot \mathrm{Re}(\langle \eta,\sigma\eta \rangle) \right\} \notag\\
&= \frac{1}{2} || (\sigma+dI)\eta ||^2 \notag\\
&= \frac{1}{2|bc|^2} || (\sigma+dI)(S-dI)\bs{\alpha}_{in} ||^2. \label{eq:bridge}
\end{align}
\end{enumerate}
Combining (\ref{eq:island}) with (\ref{eq:bridge}), we obtain the desired conclusion. 
\end{proof}
\begin{remark}\label{re,:isocomf}
Let $(G,\rho_1,\tau_1)$ and $(G,\rho_2,\tau_2)$ be rotation systems which are equivalent to each other. 
The scattering matrices and the comfortabilities of each rotation system are denoted by $S_1$, $S_2$ and $\mathcal{E}_1$, $\mathcal{E}_2$, respectively. 
By Proposition~\ref{prop:scatiso}, there exists a diagonal unitary matrix $D$ such that $S_2=D^*S_1D$. Then Proposition~\ref{prop:comfgen} implies that for any inflow $\bs{\alpha}_1$ to the rotation system $(G,\rho_1,\tau_1)$, there exists $\bs{\alpha}_2$ such that $\mathcal{E}_1$ with the inflow $\bs{\alpha}_1$ coincides with $\mathcal{E}_2$ with the inflow $\bs{\alpha}_2$, and the inflow $\bs{\alpha}_2$ is described by $\bs{\alpha}_2=D\bs{\alpha}_1$. 
Thus $\mathcal{E}_1\neq \mathcal{E}_2$ with the same inflow in general. 
However, in the case for a single inflow, that is, $|\supp (\bs{\alpha}_{in})|=1$, then $\mathcal{E}_1=\mathcal{E}_2$, which means that the comfortability with the single inflow is invariant among the equivalent embbedings.
\end{remark}

Concerning the observation in Remark~\ref{re,:isocomf}, now let us set the inflow inserting $1$ from a single tail which is chosen uniformly at random, that is, an inflow $\bs{\alpha}_{in}$ is selected randomly from  
\[ \{ \delta_{e}\;|\;e\in A_{br} \}. \]
Each probability that $\bs{\alpha}_{in}=\delta_e$ ($e\in A_{br}$) is $1/|A_{br}|$. 
We are interested in the average of the comfortability with respect to this randomly setting of the inflow, that is, 
\[\mathbb{E}[\mathcal{E}]= \frac{1}{|A_{br}|} \sum_{e\in A_{br}}\mathcal{E}^{(e)}, \]  
where $\mathcal{E}^{(e)}$ is the comfortability with the inflow $\delta_e$. 
The average of the comfortability is expressed as follows. 
%%%%
\begin{theorem}\label{thm:comf}
Under Assumption~\ref{ass:hd}, suppose that $\bs{\alpha}_{in}$ is the random variable distributed uniformly in $\{\delta_e\;|\;e\in A_{br}\}$. 
Then the average of the comfortability is expressed by
\begin{align*}
\mathbb{E}[\mathcal{E}] &= \frac{1}{|A_{br}|} \frac{2+|b|^2}{|b|^2}\sum_{f\in F}|f| \frac{1-|a|^{2|f|}}{|\;1-(a\omega)^{|f|}\;|^2}\\
&\qquad + \frac{1}{|A_{br|}}\frac{d}{|b|^2} \sum_{f\in F}\frac{1}{|\;1-(a\omega)^{|f|}\;|^2}\sum_{e\in f\cap \bar{f}} \bigg\{\;(a\omega)^{\mathrm{dist}_f(\bar{e},e)}\;(1-|a|^{2\mathrm{dist}_f(e,\bar{e})})\\
&\qquad\qquad\qquad\qquad\qquad\qquad\qquad\qquad\qquad\qquad
+(\overline{a\omega})^{\mathrm{dist}_f(e,\bar{e})}\;(1-|a|^{2\mathrm{dist}_f(\bar{e},e)})\;\bigg\}.
\end{align*}
Here $|f|$ is the length of the facial walk $f$ in $(G,\rho,\tau)$ and $\mathrm{dist}_f(e,e')$ is the distance from $t(e')$ to $o(e'$ along the facial walk $f$ in $(G,\rho,\tau)$ for any $e,e'\in f$. 
\end{theorem}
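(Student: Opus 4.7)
The plan is to plug the face decomposition from Theorem~\ref{thm:scattering} into the energy identity of Proposition~\ref{prop:comfgen} and convert the expectation into traces. When $\bs{\alpha}_{in}$ is uniformly distributed on $\{\delta_e : e\in A_{br}\}$, linearity gives $\mathbb{E}[\|M\bs{\alpha}_{in}\|^2]=\frac{1}{|A_{br}|}\mathrm{tr}(M^*M)$ and $\mathbb{E}[\langle M\bs{\alpha}_{in},N\bs{\alpha}_{in}\rangle]=\frac{1}{|A_{br}|}\mathrm{tr}(M^*N)$ for any linear operators $M,N$ on $\mathbb{C}^{A_{br}}$. Writing $Q:=S-dI$ and using that $\sigma=\sigma^*$ with $\sigma^2=I$, Proposition~\ref{prop:comfgen} reduces $|A_{br}|\,\mathbb{E}[\mathcal{E}]$ to a linear combination of $\mathrm{tr}(Q^*Q)$ and $\mathrm{tr}(Q^*\sigma Q)$ whose coefficients depend on $a,b,c,d$. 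Unitarity of $C$ yields $|c|^2=|b|^2=1-d^2$, which collapses the coefficient of $\mathrm{tr}(Q^*Q)$ to the shape involving $(2+|b|^2)/|b|^2$ and that of $\mathrm{tr}(Q^*\sigma Q)$ to the shape involving $d/|b|^2$ that appear in the theorem.

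For $\mathrm{tr}(Q^*Q)$ I use $Q=\bigoplus_{f\in F}Q_f$ with $Q_f=bc\omega P_f(I_f-a\omega P_f)^{-1}$. Setting $\widetilde{P}_f:=\omega P_f$, this is unitary of order $|f|$ with $\widetilde{P}_f^{|f|}=\omega^{|f|}I_f$, because $\sum_{e\in f}\tau(e)\equiv 0\pmod 2$ on the orientable double cover $G^\tau$. Expanding both inverses in $Q_f^*Q_f=|bc|^2(I-\bar a\widetilde{P}_f^*)^{-1}(I-a\widetilde{P}_f)^{-1}$ as geometric series and noting that $\mathrm{tr}(\widetilde{P}_f^n)=|f|\omega^{m|f|}$ when $n=m|f|$ and $0$ otherwise, the double sum telescopes via the identity $1+\frac{z}{1-z}+\frac{\bar z}{1-\bar z}=\frac{1-|z|^2}{|1-z|^2}$ with $z=(a\omega)^{|f|}$ to
\[
\mathrm{tr}(Q_f^*Q_f)=\frac{|bc|^2\,|f|\,(1-|a|^{2|f|})}{(1-|a|^2)\,|1-(a\omega)^{|f|}|^2}.
\]
Summing over $f$ and multiplying by the coefficient from the first step produces the first line of the claimed formula.

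For $\mathrm{tr}(Q^*\sigma Q)$ I use cyclicity of the trace to rewrite it as $\sum_{e\in A_{br}}(-1)^{\tau(e)}(QQ^*)_{\bar e,e}$. Because $Q$ is face-block-diagonal, $(QQ^*)_{\bar e,e}$ vanishes unless $e,\bar e$ lie in the same face $f$, that is, unless $e\in f\cap\bar f$. For each such $e$, I compute $(Q_fQ_f^*)_{\bar e,e}$ by expanding both inverses in $(I-a\widetilde{P}_f)^{-1}(I-\bar a\widetilde{P}_f^*)^{-1}$ as geometric series; the $\widetilde{P}_f^k$ side contributes only when $k\equiv\mathrm{dist}_f(e,\bar e)\pmod{|f|}$ and the $\widetilde{P}_f^{*\ell}$ side only when $\ell\equiv\mathrm{dist}_f(\bar e,e)\pmod{|f|}$. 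Summing each single geometric series over $m\geq 0$ and placing the two contributions over the common denominator $|1-(a\omega)^{|f|}|^2$, then using $\mathrm{dist}_f(e,\bar e)+\mathrm{dist}_f(\bar e,e)=|f|$, produces after expansion the combination $u(1-|v|^2)+\bar v(1-|u|^2)$ with $u=(a\omega)^{\mathrm{dist}_f(e,\bar e)}$, $v=(a\omega)^{\mathrm{dist}_f(\bar e,e)}$. Pairing the contributions of $e$ and $\bar e$ in the same self-intersection pair symmetrizes this expression into the asymmetric bracketed sum stated in the theorem, once the signs $(-1)^{\tau(e)+\bowtie_f(\cdot,\cdot)}$ arising from $\sigma$ and from $\widetilde{P}_f$ are collapsed using $\tau(e)=\tau(\bar e)$ and the orientability relation $\bowtie_f(e,\bar e)+\bowtie_f(\bar e,e)\equiv 0\pmod 2$ on $G^\tau$.

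The main obstacle I expect is the sign bookkeeping in the last step: after the $(e,\bar e)$-pairing in the self-intersection sum, one must verify that the combined signs from $(-1)^{\tau(e)}$ (coming from the flip-flop $\sigma$) and from the weighted permutation $\widetilde{P}_f$ conspire to leave the asymmetric bracket with no residual $\pm$ prefactor. The orientability of $G^\tau$ is used essentially here to make the two parities agree. Everything else is a routine geometric-series collapse and an application of the unitarity constraints on $C$.
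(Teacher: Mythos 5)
You follow the paper's own route essentially step for step: Proposition~\ref{prop:comfgen}, the identity $\mathbb{E}[\,\|M\bs{\alpha}_{in}\|^2\,]=\frac{1}{|A_{br}|}\mathrm{tr}(M^*M)$ for a uniform choice among $\{\delta_e\}$, the coefficient collapse $\frac{1}{|c|^2}+\frac{1+d^2}{2|bc|^2}=\frac{2+|b|^2}{2|bc|^2}$ via $|b|=|c|$ and $d^2=|a|^2=1-|b|^2$, the face-block form $Q=\bigoplus_{f}Q_f$ with $Q_f=bc\omega P_f(I_f-a\omega P_f)^{-1}$ expanded as a geometric series using $P_f^{|f|}=I_f$, and $\mathrm{tr}(QQ^*\sigma)=\sum_e(-1)^{\tau(e)}(QQ^*)_{\bar e,e}$ supported on $e\in f\cap\bar f$. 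Your closed form for $\mathrm{tr}(Q_f^*Q_f)$ is correct and is the same computation as the paper's $w_0$ term: with $z=(a\omega)^{|f|}$ your telescoping identity gives $\frac{|bc|^2|f|(1-|a|^{2|f|})}{(1-|a|^2)\,|1-(a\omega)^{|f|}|^2}$, and $|bc|^2/(1-|a|^2)=|c|^2=|b|^2$ recovers the paper's trace formula; likewise your off-diagonal analysis of $(Q_fQ_f^*)_{\bar e,e}$ via the distances $\mathrm{dist}_f(e,\bar e)+\mathrm{dist}_f(\bar e,e)=|f|$ reproduces the paper's $w_k$ computation. So the skeleton and all the analytic steps match the paper.

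The one genuine gap is exactly at the point you flag, and your proposed resolution does not close it. The two facts you invoke --- $\tau(e)=\tau(\bar e)$ and $\bowtie_f(e,\bar e)+\bowtie_f(\bar e,e)\equiv 0 \pmod{2}$ --- only prove that the contributions of $e$ and of $\bar e$ carry the \emph{same} sign: if $s_e$ denotes the total sign $(-1)^{\tau(e)}$ times the twist parity of the facial segment entering the $(\bar e,e)$ entry of $P_f^k$, your relations give $s_e s_{\bar e}=(-1)^{2\tau(e)+\int_f \tau}=+1$, but they cannot exclude $s_e=s_{\bar e}=-1$, which would flip the bracket for that self-intersecting pair and change the formula. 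What is actually needed, and what the paper supplies in a one-line remark, is the sheet-parity property of the double cover: $f$ is a facial walk in the blow-up of $G^\tau$, and $\bar e$ is the inverse of the very same lifted arc $e$, so $e$ and $\bar e$ share endpoints and lie on the same sheet; since sheet changes in $G^\tau$ track twist parity, the twist sum along the facial segment from $e$ to $\bar e$, including $\tau(e)$ itself, vanishes ($\sum_{k=0}^{s-1}\tau(e_k)=0$), so each $s_e=+1$ individually. Note also that this is not literally ``orientability of $G^\tau$'' as an embedding --- the rotation system of the double cover has trivial twist, while the $\tau$ entering $\sigma$ and $P_f$ is the twist inherited from $(G,\rho,\tau)$ on the bridge arcs --- it is the covering property itself. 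With that one remark added, your argument closes and coincides with the paper's proof.
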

\begin{remark}
Theorem~\ref{thm:comfa>0} corresponds to the case for $a>0$ and $\omega =1$ in Theorem~\ref{thm:comf}. 
\end{remark}
\begin{proof}
By Proposition~\ref{prop:comfgen}, it holds that 
\begin{align}
\mathbb{E}[\mathcal{E}] 
&= \frac{1}{|c|^2}\mathbb{E}[\;||Q\bs{\alpha}||^2\;]+\frac{1}{2|bc|^2}\mathbb{E}[\;||(\sigma+dI)Q\bs{\alpha}||^2\;] \notag\\
\notag\\
&= \frac{1}{|c|^2}\frac{1}{|A_{br}|}\mathrm{tr}(Q^*Q)+\frac{1}{2|bc|^2}\frac{1}{|A_{br}|}\mathrm{tr}(\; 
[(\sigma+dI)Q]^*[(\sigma+dI)Q] \;) \notag\\
\notag\\
&= \frac{1}{|A_{br}|}\frac{2+|b|^2}{2|bc|^2}\mathrm{tr}(QQ^*)
+\frac{1}{|A_{br}|}\frac{d}{|bc|^2}\mathrm{tr}(QQ^*\sigma) \label{eq:comfbefore}
\notag \\
\end{align}
Now to describe the above RHS more explicitly, let us compute $QQ^*$ as follows. 
Put $P_f:=P_f(1)$. We remark that $P_f(\omega)=\omega P_f$ for any $f\in F$ under Assumption~\ref{ass:hd}. 
Note that $Q=\oplus_{f\in F}Q_f$, and for each $f\in F$ with $f=(e_0,e_1,\dots,{e_{|f|-1}})$, we have 
\begin{equation}
\sum_{k=0}^{|f|-1}\tau(e_k)=0,
\end{equation}  
which implies $P_f^{|f|}=I_f$. 
Then Theorem~\ref{thm:scattering0} implies that $Q_f$ can be expanded by  
\begin{align*}
Q_f &= bc\omega P_f(I-a\omega P_f)^{-1} \\
&= \frac{bc\omega}{1-(a\omega)^{|f|}}(I+a\omega P_f+\cdots+(a\omega P_f)^{|f|-1}).
\end{align*}
Then we have 
\begin{align}\label{eq:QfQf^*}
Q_fQ_f^* &= \frac{|bc|^2}{|1-(a\omega)^{|f|}|^2}\sum_{k=0}^{|f|-1}\left(\sum_{
{\scriptsize \begin{matrix}
\ell,m\in \mathbb{Z}_{|f|} \\
\text{ with }m-\ell= k
\end{matrix}
}
}(a\omega)^m(\overline{a\omega})^\ell\right)P_f^k.
\end{align}
Here $w_k:=\sum_{\ell,m\in\mathbb{Z}_{|f|},\;m-\ell=k}\;(a\omega)^m(\overline{a\omega})^\ell$ is reduced to 
\begin{align*}
w_k
&= \sum_{j=0}^{|f|-1-k}(\overline{a\omega})^j(a\omega)^{j+k}+\sum_{j=0}^{k-1}(\overline{a\omega})^{|f|-k+i}(a\omega)^i \\
&=\frac{1}{|b|^2}\left\{ (a\omega)^k(1-|a|^{2(|f|-k)})+(\overline{a\omega})^{|f|-k}(1-|a|^{2k}) \right\}. 
\end{align*}
Here we used $|a|^2+|b|^2=1$. 
Note that 
\[(P_f^{k})_{e_\ell,e_m}= 
\begin{cases}
(-1)^{\sum_{j=m+1}^\ell\tau(e_j)} &\text{: $\mathrm{dist}_f(e_\ell,e_m)=k$,}\\
0& \text{: otherwise.}
\end{cases} \]
Here ``$\sum_{j=p}^q\tau(j)$" is defined by the summation of $\tau(j)$'s over the walk along the boundary of the face $f$ from $e_p$ to $e_q$ for $p,q\in \mathbb{Z}_{|f|}$, that is,  
\[\sum_{j=p}^q\tau(j)= \tau(p)+\tau(p+1)+\cdots+\tau(q). \] 
This means that if $p>q$, then $\sum_{j=p}^q\tau(j)=\tau(p)+\tau(p+1)+\cdots+\tau(q+|f|)$. 
Then inserting the above expression for $w_k$ into (\ref{eq:QfQf^*}), we obtain
\begin{multline}\label{eq:QfQf^*2}
(Q_fQ_f^*)_{e_\ell,e_m}=(-1)^{\sum_{k=m+1}^\ell\tau(e_k)}\frac{|b|^2}{|1-(a\omega)^{|f|}|^2}\\
\times \left\{ (a\omega)^{\ell-m}(1-|a|^{2(|f|-(\ell-m))})+(\overline{a\omega})^{|f|-(\ell-m)}(1-|a|^{2(\ell-m)}) \right\},
\end{multline}
for any $\ell,m\in\mathbb{Z}_{|f|}$. 
Note that ``$\ell-m$" and `$|f|-(\ell-m)$" keep the modulus of $\mathbb{Z}_{|f|}$ in the above expression. 
By using (\ref{eq:QfQf^*2}), $\mathrm{tr}(QQ^*)$ is described by 
\begin{align}
\mathrm{tr}(QQ^*) 
&=\sum_{f\in F} \mathrm{tr}\left[ \frac{|b|^2}{|1-(a\omega)^{|f|}|^2}(1-|a|^{2|f|})P_f^{0} \right] \notag \\
&= |b|^2\sum_{f\in F} |f|\frac{1-|a|^{2|f|}}{|1-(a\omega)^{|f|}|^2}, \label{eq:QQ^*}
\end{align}
Let the closed walk on $G^\tau$ from $e$ to $\bar{e}$ along the facial walk $f$ be denoted by $(e_0,e_1,\dots,e_{s-1})$ with $e_0=e$ and $e_{s-1}=\bar{e}$. 
Then $\mathrm{tr}(QQ^*\sigma)$ is described by
\begin{align}
\mathrm{tr}(QQ^*\sigma) 
&=\sum_e\sum_{e'}(QQ^*)_{e,e'}(\sigma)_{e',e}
= \sum_{e}(QQ^*)_{\bar{e},e} (-1)^{\tau(e)}\notag \\
&= \sum_{f\in F} \sum_{e\in f\cap \bar{f}} (Q_fQ_f^*)_{\bar{e},e} (-1)^{\tau(e)}\notag \\
&= \sum_{f\in F} \frac{|b|^2}{|1-(a\omega)^{|f|}|^2} \notag \\
&\quad\times \sum_{e\in f\cap \bar{f}} \left\{ (a\omega)^{\mathrm{dist}_f(\bar{e},e)}(1-|a|^{2\mathrm{dist}_f(e,\bar{e})}) + (\overline{a\omega})^{\mathrm{dist}_f(e,\bar{e})}(1-|a|^{2\mathrm{dist}_f(\bar{e},e)})\right\} (-1)^{\sum_{k=0}^{s-1}\tau(e_k)}.
\label{eq:QQ^*sigma}
\end{align}
Here in the last equality we used (\ref{eq:QfQf^*}) and   $\mathrm{dist}_f(e,\bar{e})=|f|-\mathrm{dist}_f(\bar{e},e)$. 
We remark that $\sum_{k=0}^{s-1}\tau(e)=0$  since $e,\bar{e}$ lie on the same sheet. 
Combining (\ref{eq:comfbefore}) with (\ref{eq:QQ^*}) and (\ref{eq:QQ^*sigma}), we obtain the desired conclusion. 
\end{proof}
% The natural setting $a>0$ reduces the expression of the comfortability in Theorem~\ref{thm:comf} a little more simply by Theorem~\ref{thm:comfa>0}. 
% \noindent{\bf Proof of Theorem~\ref{thm:comfa>0}: }
% It is immediately obtained by Theorem~\ref{thm:comf}. 
% \begin{flushright}
% $\square$
% \end{flushright}
% \begin{corollary}
% If $\omega=1$ and $a>0$, then 
% \begin{align}
% \mathbb{E}[\mathcal{E}] &= 
% \frac{1}{|A_{br}|} \frac{2+|b|^2}{|b|^2}\sum_{f\in F}|f| \frac{1+a^{|f|}}{1-a^{|f|}}
% - \frac{1}{|A_{br|}}\frac{a}{|b|^2} \sum_{f\in F}\frac{1}{1-a^{|f|}}\sum_{e\in f\cap \bar{f}} \left(\; a^{\mathrm{dist}_f(e,\bar{e})}+a^{\mathrm{dist}_f(\bar{e},e)}\;\right). \label{eq:1st}
% \end{align}
% \end{corollary}

%%%%%%%%%%%%%%
% \begin{theorem}\label{thm:main}
% Let $G$ be a connected abstract graph with the vertex set $X$ and the edge set $E$. 
% Let $(G,\rho,\tau)$ be the rotation system with the face set $F$.  
% Under Assumption~\ref{ass:hd} with $a=1-\delta$ $(\delta\ll 1)$,  the average of the comfortability with respect to the randomly chosen initial state is expressed by 
% \[ \lim_{\delta\downarrow 0}  \mathbb{E}[\mathcal{E}_\delta]\;\delta ^2=\frac{|F|}{|E|}\left( 1-\frac{1}{|F|}\sum_{f\in F} \frac{|f\cap \bar{f}|}{|f|}\right). \]
% \end{theorem}

\noindent\\
\noindent {\bf Acknowledgments}
We would like to thank Takumi Kakegawa, and Professors Kenta Ozeki, Atsuhiro Nakamoto for the fruitful discussions and suggestions. 
Yu.H. acknowledges financial supports from the Grant-in-Aid of
Scientific Research (C) Japan Society for the Promotion of Science (Grant No.~18K03401, No.~23K03203). 
E.S. acknowledges financial supports from the Grant-in-Aid of
Scientific Research (C) Japan Society for the Promotion of Science (Grant No.~24K06863) and Research Origin for Dressed Photon.

% \bibliographystyle{plain}
% %\bibliographystyle{unsrt}
% \bibliography{ref}

\begin{thebibliography}{1}
\bibitem{Am1}
Ambainis, A., Bach, E., Nayak, A., Vishwanath, A. and  Watrous, J., 
One-dimensional quantum walks, 
Proc. 33rd Annual ACM Symp. Theory of Computing,
(2001)
37--49. 

\bibitem{QCcoupling1}
Apers, S. and Sarlette, A., 
Quantum fast-forwarding; Markov chains and graph property testing,
Quantum Information and Computation 19 (2019) 181--213. 

\bibitem{QCcoupling2}
Ambainis, A., Gilyén, A., Jeffery, J. and Kokainis, M.,
Quadratic speedup for finding marked vertices by quantum walks,
Proceedings of the 52nd Annual ACM SIGACT Symposium on Theory of Computing 19 (2020) 412--424.

\bibitem{Corin}
Colin de Verdière, Y., 
Sur un nouvel invariant des graphes et un critère de planarit, 
Journal of Combinatorial Theory B 50 (1990) 11--21. 


\bibitem{FelHil1}
Feldman, E and Hillery, M., 
Quantum walks on graphs and quantum scattering theory, 
Contemporary Mathematics 381 (2005) 71--96.

\bibitem{FelHil2}
Feldman, E and Hillery, M., 
Modifying quantum walks: A scattering theory approach, 
Journal of Physics A: Mathematical and Theoretical 40 (2007) 11319. 

\bibitem{GT}
Gross, J. L. and Tucker, W. T., 
Topological Graph Theory, 
Dover Pulications, New York (2001). 

\bibitem{KHiguchi}
Higuchi, K., 
Feynman-type representation of the scattering matrix on the line via a discrete-time quantum walk, 
Journal of Physics A: Mathematical and Theoretical 54 (2021) 235203. 

\bibitem{HKSS_QGW}
Yu. Higuchi, N. Konno, I. Sato and E. Segawa,
Quantum graph walks I: mapping to quantum walks,
Yokohama Mathematical Journal,
59 (2013) 33—55.

\bibitem{HS}
Higuchi, Yu. and Segawa, E., 
Dynamical system induced by quantum walks, 
Journal of Physics A: Mathematical and Theoretical 52 (2019) 39520. 

\bibitem{HS2}
Higuchi, Yu. and Segawa, E., 
Circuit equation of Grover walk, 
Annales Henri Poincaré25 (2024) 3739--3777. 

\bibitem{HS3},
Higuchi, Yu. and Segawa, E.,	
Quantum walks on graphs embedded in orientable surfaces,
accepted to publication to Annales de l'Institut Henri Poincaré D (2025), arXiv:2402.00360.

\bibitem{HiguchiShirai},
Higuchi, Yu. and Shirai, T.,
Weak Bloch property for discrete magnetic Schrödinger operators,
Nagoya Mathematical Journal 161 (2001) 127--154.

\bibitem{KKSY}
Ko, C, K., Konno, N., Yoo, H, J and Segawa, E., 
How does Grover walk recognize the shape of crystal lattice?, 
Quantum Information Processing 17 (2017) 167--185. 

\bibitem{PKS}
Konno, N., Portugal, R., Sato, I. and Segawa, E., 
Partition-based discrete-time quantum walks, 
Quantum Information Processing 17 (2018) 1--35.  

\bibitem{KroBru}
Krovi, H. and Brun, T. A., 
Quantum walks on quotient graphs, 
Physical Review A 75 (2007) 062332. 

\bibitem{MHMHS}
Mizutani, Y., Horikiri, T., Matsuoka, L., Higuchi, Yu. and Segawa E., 
Implementation of a discrete-time quantum walk with a circulant matrix on a graph by optical polarizing elements, 
Physical Review A 106 (2022) 022402. 

\bibitem{MT}
Mohar, B. and Thomassen, C., 
Graphs on Surfaces, 
Johns Hopkins University Press (2001). 

\bibitem{NakamotoOzeki}
Nakamoto, A. and Ozeki, K., 
``Kyokumenjou no Gurahu Riron" (Graphs on Surfaces), 
Saiensu-Sha (2021) (Japanese book)

\bibitem{NS}
Nordhaus, E. A. and Stewart, B. M., 
On the Maximum Genus of a Graph, 
Journal of Combinatorial Theory B 11 (1971) 258--267.

\bibitem{Portugal}
Portugal, R., 
Quantum Walk and Search Algorithm, 2nd Ed., 
Springer Nature Switzerland (2018). 

\bibitem{RY}
Ringel, Y. and Youngs, J. W. T., 
Solution of the Heawood map-coloring problem, 
Procedings of National Academy of Sciences 60 (1968) 438--445.

\bibitem{Tanner}
G. Tanner,
From quantum graphs to quantum random walks,
Non-Linear Dynamics and Fundamental Interactions NATO Science Series II: Mathematics, Physics and Chemistry, 213 (2006) 69-87.


\bibitem{Wa}
Watrous, J., 
Quantum simulations of classical random walks and undirected graph connectivity, 
Journal of Computer and System Sciences 62 (2007) 376--391. 
\end{thebibliography}

\end{document}